%% file: main.tex
\documentclass[reprint, superscriptaddress, aps, twocolumn, prx, floatfix]{revtex4-2}
\input{header.tex}

\begin{document}

\title{Structure-Aware Variance Reduction for\\ Unbiased Randomized Hamiltonian Simulation}

\author{Joshua W. Dai} % Add your names and affiliations here!
\email{joshua.dai@materials.ox.ac.uk}
\affiliation{Department of Materials, University of Oxford, Parks Road, OX2 6UD, United Kingdom}
\affiliation{Mathematical Institute, University of Oxford, Woodstock Road, Oxford OX2 6GG, United Kingdom}
\author{Fredrik Hasselgren}
\email{fredrik.hasselgren@maths.ox.ac.uk}
\affiliation{Mathematical Institute, University of Oxford, Woodstock Road, Oxford OX2 6GG, United Kingdom}
\author{Chusei Kiumi}
\email{c.kiumi.qiqb@osaka-u.ac.jp}
\affiliation{Mathematical Institute, University of Oxford, Woodstock Road, Oxford OX2 6GG, United Kingdom}
\affiliation{Center for Quantum Information and Quantum Biology,
The University of Osaka,
1-2 Machikaneyama, Toyonaka, Osaka, Japan}

\date{\today}
\begin{abstract}
Randomized Hamiltonian simulation methods are often governed by a trade-off between systematic bias and sampling overhead.
We study how classical variance-reduction techniques can be applied to such methods without changing their mean channel, and therefore without introducing additional bias.
As a motivating unbiased estimator, we formulate continuous time-evolution probabilistic angle interpolation (continuous TE-PAI), a quasiprobabilistic random-circuit protocol whose remaining Monte Carlo error is purely statistical. Continuous TE-PAI removes Trotter discretization error with finite-depth random circuits, whereas deterministic Trotterization does so only in the infinite-depth limit. Further, in tensor-network simulations, we demonstrate that discretization error can cause an unphysical exponential growth in the bond dimension required for Trotterized simulations, whereas comparable-depth continuous TE-PAI circuits avoid this growth.
We then show that the variance of randomized product-formula-based estimators admits a canonical decomposition into a classical counting component and a quantum ordering component such that the dominant simulation overhead results from the non-commutative parts of the Hamiltonian dynamics.
Motivated by this decomposition, we achieve an $\approx70\%$ error-reduction using the counting-component for small systems whereas our tensor-network simulations of $n=30$ spin-chain dynamics use coarser statistics tailored to the observable and estimator attaining a negligible bias and a reduction of $\approx 80\%$ leading to $\approx91\%$ and $\approx96\%$ sampling-cost reductions, respectively.
\end{abstract}
\maketitle

\section{Introduction}

Hamiltonian simulation is one of the central applications of quantum computers, originating from Feynman's proposal to simulate quantum physics with quantum devices~\cite{Feynman} and later formalized in digital quantum simulation~\cite{Lloyd}.
A wide range of algorithms has since been developed.
Linear-combination-of-unitaries methods~\cite{LCU1,LCU2,LCU3} and quantum-signal-processing or qubitization approaches~\cite{QSP1,QSP2} achieve highly favorable asymptotic scaling, but their circuit constructions can be demanding for near-term and early fault-tolerant devices.
Product-formula methods use simpler circuit primitives and are often more natural in practice, but they generally require long circuits to suppress systematic discretization error.
This deterministic product-formula error is controlled by commutators and nested commutators of Hamiltonian terms, linking simulation cost to the noncommutative structure of the target dynamics.

Randomized Hamiltonian simulation methods offer a complementary approach.
Instead of implementing a single deterministic approximation, they sample random circuit trajectories and estimate observables by averaging over many realizations.
Examples include randomized product formulas~\cite{childs_faster_2019}, qDRIFT and its variants~\cite{qDRIFT1,qDRIFT2,qDRIFT3}, continuous or time-dependent qDRIFT-type methods~\cite{berry_time-dependent_2020}, randomized multiproduct formulas~\cite{faehrmann_randomizing_2022}, importance-sampled stochastic simulation~\cite{kiss_importance_2023}, and related random compilers.
Randomization can reduce circuit depth or simplify sampling, but it introduces statistical error.
Thus randomized simulation algorithms are naturally described by a bias--variance decomposition: the mean sampled channel may differ from the target channel, while finite-sample estimators fluctuate around that mean.

This paper takes this bias--variance viewpoint as its organizing principle.
For biased randomized algorithms such as qDRIFT, variance reduction can reduce sampling cost but cannot remove the mean-channel bias.
For unbiased randomized algorithms, by contrast, variance is the sole intrinsic source of Monte Carlo error.
This makes unbiased trajectory estimators a natural setting for structure-aware variance reduction.

As a motivating unbiased estimator, we formulate continuous TE-PAI for time-dependent Pauli Hamiltonians.
The protocol builds on probabilistic angle interpolation~\cite{PAI} and finite-step TE-PAI~\cite{Kiumi_2025}, but removes the remaining product-formula discretization error by passing to continuous time. Crucially, finite-step TE-PAI removes this error only as \(N\to\infty\), where its classical sampling cost \(O(NLN_s)\) diverges, whereas the continuous protocol achieves zero discretization error with finite classical-quantum resources per sample. It samples a marked Poisson process of fixed-angle Pauli rotations and assigns a scalar quasiprobability weight to each sampled physical circuit.
The resulting weighted random circuit is an unbiased estimator of the exact time-evolution channel.
A continuous-time PAI protocol based on the same marked-Poisson sampling principle was independently introduced in concurrent work by Hayata and Kikuchi~\cite{Hayata_2026}.
An additional contribution of our work is a formal derivation of the continuous protocol and a rigorous proof of its unbiasedness. Here we give two derivations of unbiasedness: one as the weak limit of finite-step TE-PAI, and one directly from the Dyson-series expansion of the exact time-evolution channel~\cite{Dyson_1949}.
Beyond establishing unbiasedness, the two derivations clarify complementary aspects of the protocol: the weak-limit argument shows that the continuous-time random-circuit ensemble arises as the limiting law of finite-step TE-PAI, while the direct Dyson-series construction identifies the same ensemble as a Poisson sampling of infinitesimal time-evolution contributions.

Unbiased randomized Hamiltonian simulation has also been studied through
Dyson-series-based random compilers and stochastic observable-dynamics
algorithms~\cite{Zhang2022,granet_hamiltonian_2024}.
These approaches avoid product-formula discretization error by averaging over
random circuit contributions with known classical prefactors, but their
measurement procedures involve Hadamard-test-like controlled measurements using
ancilla qubits.
By contrast, continuous TE-PAI offers the practical advantage of sampling
ordinary ancilla-free Pauli-rotation circuits, requiring only a scalar
quasiprobability weight in classical postprocessing~\cite{Kiumi_2025, Toshio_2026}.

These advantages are particularly relevant in tensor-network simulation, where
the cost of contracting each sampled trajectory depends strongly on the
entanglement it generates.
Recent work has adapted TE-PAI to matrix-product-state simulation, emphasizing
its embarrassingly parallel structure and its ability to trade a single deep
deterministic Trotter circuit for an ensemble of shallower random
circuits~\cite{Fredrik_2026}.
We show numerically that continuous TE-PAI can avoid the
discretization-induced bond-dimension growth observed in shallow Trotterized
simulations.

Once the systematic bias is removed, the central remaining cost is statistical.
For continuous TE-PAI this cost is controlled in the worst case by a quasiprobability overhead depending on the time-integrated \(\ell_1\) norm of the Hamiltonian coefficients.
The TE-PAI angle \(\Delta\) trades expected circuit depth against sampling overhead.
However, this worst-case bound treats the Hamiltonian essentially through absolute coefficient weights and does not exploit commutation relations, locality, observable support, or other structure. The natural question is how do we use this additional knowledge to reduce the practical variance of randomized simulation.

We address this question using conditional sampling and stratification, also known as Rao--Blackwellization.
Recent work of Dai and Koczor~\cite{Dai_2025} developed a stratified-sampling framework for quasiprobability decompositions, proving that ideal proportional allocation gives an unbiased estimator with variance no worse than naive sampling and demonstrating constant-factor reductions.
Multilevel Monte Carlo methods have also recently been applied to qDRIFT by coupling estimators at different circuit depths~\cite{mohammadipour2026mlmc}.
Our contribution is complementary: we identify a structural decomposition of trajectory variance that explains which trajectory features stratification should target.

The key object is the counts vector, which maps a trajectory to the number of times each alphabet letter appears.
Conditioning on the counts vector decomposes trajectory variance into a classical counting component and a quantum ordering component.
The counting component reflects fluctuations in which elementary gates are sampled.
The ordering component is the residual variance within a fixed sampled multiset and is therefore caused only by different orderings of the same gates. This is illustrated in \cref{fig:variance graph}.
In exchangeable models such as qDRIFT, each count sector is precisely a random-permutation product-formula ensemble. After conditioning away classical count fluctuations, the remaining ordering variance is governed by noncommutativity, in direct analogy with the commutator structure of deterministic product-formula error.

The full counts vector is not generally a scalable stratification statistic.
For trajectories of length \(m\) over an alphabet of size \(q\), its support size is \(\binom{m+q-1}{q-1}\), which becomes prohibitive for large Hamiltonian alphabets.
We therefore use it primarily as a canonical diagnostic and organizing principle.
It motivates lower-dimensional statistics tailored to the estimator and observable, including local counts, grouped counts, time-binned counts, and sign-aware statistics for quasiprobability estimators.
For TE-PAI, sign-aware stratification is important because the sampled circuit is physical but the estimator carries a quasiprobability sign determined by the parity of \(\pi\)-rotation events.

Our numerical experiments test these ideas in continuous TE-PAI simulations of
spin-chain dynamics. For an \(n=8\) transverse-field Ising chain and a \(12\)-qubit spin-chain
example, observable-adapted local-count and sign-aware \(\pi\)-rotation
stratification reduce the trajectory-sampling error by roughly \(70\%\),
corresponding to an approximately \(91\%\) reduction in the required number of
trajectory samples.
In tensor-network simulations of \(n=30\) time-dependent spin-chain dynamics,
coarser observable-adapted statistics reduce the trajectory-sampling error by
about \(80\%\), corresponding to an approximately \(96\%\) sampling-cost reduction.
These sample-count estimates refer to trajectory sampling and do not include
measurement shot noise. Also, the same simulations show that continuous TE-PAI avoids the
discretization-induced bond-dimension growth observed in shallow Trotterized
simulations. 

Our results with simple stratification suggest that these constant-factor reductions could offer practical advantages in applications. Moreover, the optimal stratified-sampling scheme is problem dependent and can be further improved by tailoring the strata to the Hamiltonian, observable, and estimator. Our work provides a foundation for such problem-specific designs and for further reductions in sampling overhead.

The paper is organized as follows.
In \cref{sec:randomized_trajectory_estimators}, we introduce a general random-trajectory framework and formulate the bias--variance decomposition at the channel and observable levels.
In \cref{sec:continuous_te_pai}, we define continuous TE-PAI, prove its unbiasedness both as a weak limit of finite-step TE-PAI and through a Dyson-series construction, and discuss its tensor-network implications.
In \cref{sec:structure_aware_variance_reduction}, we develop the structure-aware variance theory, including counts-vector abelianization and Poincar\'e-type bounds in terms of swap and substitution defects, and illustrate the decomposition using qDRIFT as a biased randomized trajectory estimator.
Finally, in \cref{sec:numerics}, we specialize these ideas to continuous
TE-PAI and benchmark sign-aware and observable-adapted stratification
strategies in spin-chain simulations, including tensor-network simulations at
larger system sizes.

\section{Randomized trajectory estimators for Hamiltonian simulation}
\label{sec:randomized_trajectory_estimators}

We consider a finite-dimensional system with Hilbert space \(\mathcal H\), \(\dim\mathcal H=d\), evolving under a time-dependent Pauli Hamiltonian
\[
    H(t)=\sum_{k=1}^{L}c_k(t)P_k ,
\]
where \(c_k(t)\in\mathbb R\) are absolutely continuous on \([0,T]\). 
The target unitary and channel are
\begin{equation}
    U(T)
    =
    \mathcal T\exp\left[-i\int_0^T H(t)\,dt\right],
    \,\, 
    \mathcal U_T(\rho)=U(T)\rho U(T)^\dagger .
    \label{eq:time-evolution-operator}
\end{equation}

Randomized simulation algorithms replace \(\mathcal U_T\) by a random superoperator \(\widehat\Phi\) taking values in
\[
    \mathsf S:=\mathcal L(\mathcal L(\mathcal H)),
\]
equipped with the Hilbert--Schmidt norm.  Observable-estimation tasks are fixed linear functionals \(L:\mathsf S\to\mathbb C\); for instance
\[
    L_{\rho,O}(\Phi)=\Tr[O\,\Phi(\rho)] .
\]
If \(\overline\Phi=\mathbb E[\widehat\Phi]\), then the Monte Carlo estimator
\[
    \widehat\mu_{N_s}
    =
    \frac1{N_s}\sum_{r=1}^{N_s}L(\widehat\Phi_r)
\]
obeys the bias--variance decomposition
\begin{equation}
    \mathbb E\left|
    \widehat\mu_{N_s}-L(\mathcal U_T)
    \right|^2
    =
    \left|L(\overline\Phi-\mathcal U_T)\right|^2
    +
    \frac1{N_s}\Var\!\left(L(\widehat\Phi)\right).
    \label{eq:observable_bias_variance}
\end{equation}
Thus deterministic product formulas have zero sampling variance but generally nonzero bias, while randomized algorithms may have both.
Once an estimator is unbiased at the channel level, \(\overline\Phi=\mathcal U_T\), the only remaining algorithmic error in \cref{eq:observable_bias_variance} is statistical.

We will use the channel-level risk
\[
    \mathcal R(\widehat\Phi)
    :=
    \mathbb E\left\|
        \widehat\Phi-\mathbb E[\widehat\Phi]
    \right\|_{HS}^{2},
\]
which controls every fixed observable functional through
\[
    \Var(L(\widehat\Phi))
    \le
    \|L\|^2\mathcal R(\widehat\Phi),
    \qquad
    \|L\|
    :=
    \sup_{\Psi\neq0}
    \frac{|L(\Psi)|}{\|\Psi\|_{HS}}.
\]
This is the quantity we aim to reduce.

\subsection{Trajectory estimators}

Many randomized simulation algorithms can be viewed as random trajectories of elementary channels. 
Let \(\mathcal A\) be a finite alphabet and let
\[
    \mathcal W:=\bigsqcup_{m\ge0}\mathcal A^m
\]
be the set of finite words. 
A word \(\underline x=(x_1,\ldots,x_m)\) is realized as the channel
\[
    \Gamma(\underline x)
    :=
    \Gamma_{x_m}\circ\cdots\circ\Gamma_{x_1},
    \qquad
    \Gamma(\varnothing):=\mathrm{id},
\]
where each letter \(a\in\mathcal A\) is assigned an implementable unitary
channel \(\Gamma_a\). 
A random word \(\underline X\) therefore gives the trajectory estimator
\[
    \widehat\Phi=\Gamma(\underline X).
\]
This framework includes randomized product formulas and qDRIFT-type algorithms, where the letters label Hamiltonian terms.

Quasiprobability estimators have the same sampled physical circuits but include a scalar weight (this is an importance sampling scheme)
\[
    \widehat\Phi
    =
    G(\underline X)\Gamma(\underline X),
    \qquad
    G:\mathcal W\to\mathbb R.
\]
The weighted object is generally not a quantum channel when \(G<0\) or \(G\ne1\), but it is a well-defined element of \(\mathsf S\) whenever \(\mathbb E\|\widehat\Phi\|_{HS}^2<\infty\).  Continuous TE-PAI, introduced in the next section, is of this form: \(G\) has fixed magnitude, with a sign determined by the parity of \(\pi\)-rotation events. Although continuous TE-PAI trajectories also carry event times, the same
conditional-expectation identities hold on the full marked point-process space. 
In applications we condition only on finite statistics of those trajectories, such as Pauli labels, local counts, time bins, or sign parities.

\subsection{Stratification and conditional risk}

Let \(S=S(\underline X)\) be a trajectory statistic. 
Conditioning on \(S\) partitions the trajectory space into strata, and the Hilbert-space law of total variance gives
\begin{equation}
    \mathcal R(\widehat\Phi)
    =
    \underbrace{
    \mathbb E\!\left[\Var(\widehat\Phi\mid S)\right]
    }_{\mathcal R(\widehat\Phi,S)}
    +
    \underbrace{
    \mathbb E\left\|
        \mathbb E[\widehat\Phi\mid S]
        -
        \mathbb E[\widehat\Phi]
    \right\|_{HS}^{2}
    }_{\text{variance explained by }S}.
    \label{eq:conditional_risk}
\end{equation}
The first term,
\[
    \mathcal R(\widehat\Phi,S)
    :=
    \mathbb E\!\left[\Var(\widehat\Phi\mid S)\right],
\]
is the residual risk after stratifying by \(S\).  The Rao--Blackwellized estimator \(\mathbb E[\widehat\Phi\mid S]\) has the same mean as \(\widehat\Phi\) and no larger variance. 
In practice this conditional expectation is rarely computed exactly, but \cref{eq:conditional_risk} identifies which trajectory information is responsible for variance.

For continuous TE-PAI, the estimator is unbiased:
\[
    \mathbb E_\omega[g_\omega\mathcal U_\omega]=\mathcal U_T.
\]
Therefore
\[
    \mathbb E\left|
    \widehat\mu_{N_s}-L(\mathcal U_T)
    \right|^2
    =
    \frac1{N_s}
    \Var\!\left(L(g_\omega\mathcal U_\omega)\right).
\]
After establishing this unbiasedness, the remaining problem is to choose statistics \(S\) that reduce the variance of the weighted trajectory estimator without changing its mean.
\section{Continuous TE-PAI: an unbiased trajectory estimator}
\label{sec:continuous_te_pai}
\subsection{Continuous TE-PAI}
We now define a continuous random-circuit estimator for the channel
$\mathcal U_T$. Fix a TE-PAI parameter $\Delta\in(0,\pi)$. Define the total rate
\begin{equation}
\Lambda
:=
G_\Delta\,
\overline{\|c\|_1}\,T ,\quad G_\Delta:=\frac{3-\cos\Delta}{\sin\Delta},
\label{eq:Lambda}
\end{equation}
where $\overline{\|c\|_1}:=\frac{1}{T}\int_0^T \sum_{k=1}^L |c_k(t)|\,dt$. The sample space is $\Omega=\bigsqcup_{M\geq 0}\Omega_M$, where each $\Omega_M$ consists of ordered sequences of event triples $(t_m,k_m,\ell_m)$ with times $t_m\in[0,T]$, Pauli indices $k_m\in[L]:=\{1,\ldots,L\}$, and type labels $\ell_m\in\{0,1\}$.

\medskip\noindent\textbf{Algorithm: Continuous TE-PAI.}
\begin{enumerate}[leftmargin=*,itemsep=2pt]
\item \textbf{Sample gate count.} Draw $M\sim\mathrm{Poisson}(\Lambda)$.
\item \textbf{Sample times.} Draw $M$ i.i.d.\ times from $[0,T]$ with density
$f(t)={\|c(t)\|_1}/({T\,\overline{\|c\|_1}})$
and sort them as $0\leq t_1\leq\cdots\leq t_M\leq T$.
\item \textbf{Sample Pauli indices.} For each $t_m$, draw $k_m\in[L]$ with $\Pr(k_m=k\mid t_m)={|c_k(t_m)|}/{\|c(t_m)\|_1}$.
\item \textbf{Sample angles.} For each $m$, draw $\ell_m\in\{0,1\}$ with
\[
p_\Delta:=\Pr(\ell_m=0)=\frac{2}{3-\cos\Delta},\quad p_\pi:=1-p_\Delta.
\]
\item \textbf{Build circuit and weight.} Construct the random unitary $U_\omega=\prod_{m=1}^M R_m$ where
\[
R_m=\begin{cases}
R_{P_{k_m}}(\mathrm{sgn}(c_{k_m}(t_m))\,\Delta) & \ell_m=0,\\
R_{P_{k_m}}(\pi) & \ell_m=1.
\end{cases}
\]
Assign the total weight
\begin{equation}
g_\omega =
\left(\prod_{m=1}^M(-1)^{\ell_m}\right)
\exp\!\Big[2\,\overline{\|c\|_1}\,T\tan\!\Big(\frac{\Delta}{2}\Big)\Big].
\label{eq:weight}
\end{equation}
\end{enumerate}
Equivalently, with respect to Lebesgue measure on the ordered simplex
\(0\le t_1\le\cdots\le t_M\le T\) and counting measure over the discrete
labels, the trajectory density of
\(\omega=(t_m,k_m,\ell_m)_{m=1}^M\in\Omega_M\) is
\[
P_\infty(\omega)
=
e^{-\Lambda}\frac{\Lambda^M}{M!}\,M!
\prod_{m=1}^{M}
\left[
\frac{|c_{k_m}(t_m)|}{T\overline{\|c\|_1}}\,
p_\Delta^{1-\ell_m}p_\pi^{\ell_m}
\right].
\]
Here \(e^{-\Lambda}\Lambda^M/M!\) gives the Poisson event-count law, and the
factor \(M!\) accounts for sorting the event times. The signs from
\(\pi\)-rotation events are not part of \(P_\infty\), but are included in the
quasiprobability weight \(g_\omega\). Thus \(P_\infty\) samples physical
circuits, while \(g_\omega\) supplies the signed weight. The resulting weighted
circuit estimator is unbiased, as stated below.
\begin{theorem}\label{thm:unbiased}
Let $\mathcal U_T$ be the channel of the exact time evolution $U(T)$, and let
$\mathcal U_\omega$ be the channel of the random circuit $U_\omega$. Then
\[
\mathbb E_{\omega}\left[g_\omega\,\mathcal U_\omega\right]=\mathcal U_T .
\]
Moreover, for any observable \(O\) with \(\|O\|\le 1\), a single-shot outcome
\(Z_\omega\in[-1,1]\) gives an unbiased estimator
\(Y_\omega:=g_\omega Z_\omega\) of
\(\langle O\rangle=\Tr[O\,\mathcal U_T(\rho_0)]\). To estimate it to additive
error \(\varepsilon\), it suffices to use
\[
N_s
=
\mathcal O\!\left(
\exp\!\bigl[4\,\overline{\|c\|_1}\,T\tan(\Delta/2)\bigr]
/\varepsilon^2
\right)
\]
independent single-shot circuit samples. The corresponding resource costs are:
\begin{enumerate}[itemsep=0pt]
  \item Expected gate count per circuit:
        \[
          \mathbb{E}[M]
          =
          \Lambda
          =
          G_\Delta
          \overline{\|c\|_1}T .
        \]
  \item Classical preprocessing for trajectory sampling:
        \[
          \mathcal{O}\!\left(
          \Lambda\,
          \exp\!\bigl[4\,\overline{\|c\|_1}\,T\tan(\Delta/2)\bigr]
          /\varepsilon^{2}
          \right).
        \]
  \item Total circuit executions:
        \[
          \mathcal{O}\!\left(
          \exp\!\bigl[4\,\overline{\|c\|_1}\,T\tan(\Delta/2)\bigr]
          /\varepsilon^{2}
          \right).
        \]
\end{enumerate}
\end{theorem}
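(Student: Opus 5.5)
The plan is to prove unbiasedness directly from the Dyson series of the channel-level generator. The key ingredient is a single-Pauli identity that rewrites each infinitesimal commutator generator as a signed combination of two fixed-angle rotation channels. Write $\mathcal R_P(\theta)(\rho)=R_P(\theta)\rho R_P(\theta)^\dagger$ with $R_P(\theta)=e^{-i\theta P/2}$. Expanding gives
\[
\mathcal R_P(\theta)-\mathrm{id}=\sin^2(\theta/2)\,(P\cdot P-\mathrm{id})-\tfrac{i}{2}\sin\theta\,[P,\cdot\,],
\]
and in particular $\mathcal R_P(\pi)-\mathrm{id}=P\cdot P-\mathrm{id}$. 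Subtracting $\sin^2(\Delta/2)(\mathcal R_P(\pi)-\mathrm{id})$ cancels the dephasing part. For $s=\mathrm{sgn}(c)$ this yields
\[
-ic[P,\cdot\,]=\frac{2|c|}{\sin\Delta}\Bigl[(\mathcal R_P(s\Delta)-\mathrm{id})-\sin^2(\tfrac{\Delta}{2})(\mathcal R_P(\pi)-\mathrm{id})\Bigr].
\]
Using $G_\Delta p_\Delta=2/\sin\Delta$ and $G_\Delta p_\pi=2\sin^2(\Delta/2)/\sin\Delta$, summing over $k$ expresses the generator $\mathcal L_t=-i[H(t),\cdot\,]$ as $\mathcal L_t=\mathcal J_t-\gamma(t)\,\mathrm{id}$. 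Here
\[
\mathcal J_t=\sum_k G_\Delta|c_k(t)|\bigl(p_\Delta\mathcal R_{P_k}(\mathrm{sgn}(c_k(t))\Delta)-p_\pi\mathcal R_{P_k}(\pi)\bigr),
\]
and $\gamma(t)=G_\Delta\|c(t)\|_1(p_\Delta-p_\pi)$.

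Next I solve $\frac{d}{dt}\mathcal U_t=\mathcal L_t\mathcal U_t$, which holds almost everywhere with integrable coefficients; because the $c_k$ are absolutely continuous, the Carath\'eodory/Dyson theory applies. The scalar part commutes with everything, so
\[
\mathcal U_T=e^{-\int_0^T\gamma}\,\mathcal T\exp\!\int_0^T\mathcal J_t\,dt
=e^{-\int\gamma}\sum_{M\ge0}\int_{0\le t_1\le\cdots\le t_M\le T}\mathcal J_{t_M}\cdots\mathcal J_{t_1}\,dt .
\]
The Dyson series converges absolutely, since $\|\mathcal J_t\|\le G_\Delta\|c(t)\|_1$ is integrable and unitary channels have bounded norm. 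Expanding each $\mathcal J_{t_m}$ into its $2L$ terms matches the integrand, term by term, with $P_\infty(\omega)\,(-1)^{\sum\ell_m}\,\mathcal U_\omega\cdot e^{\Lambda}$. The rates multiply to $\prod_m G_\Delta|c_{k_m}(t_m)|\,p_\Delta^{1-\ell_m}p_\pi^{\ell_m}$, which equals $e^{\Lambda}P_\infty(\omega)$ after using $\Lambda=G_\Delta\int\|c\|_1$. The total scalar prefactor is then
\[
e^{\Lambda-\int\gamma}=\exp\!\Bigl[2G_\Delta p_\pi\!\int_0^T\!\|c\|_1\Bigr]=\exp\!\bigl[2\overline{\|c\|_1}T\tan(\Delta/2)\bigr],
\]
since $4\sin^2(\Delta/2)/\sin\Delta=2\tan(\Delta/2)$. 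This is exactly the magnitude of $g_\omega$, which proves $\mathbb E_\omega[g_\omega\mathcal U_\omega]=\mathcal U_T$. Fubini is justified by the same absolute-convergence bound, and it lets me exchange sum, integral and expectation. I expect the main technical care to be in exactly this step: the rigorous Dyson expansion for merely absolutely continuous $c_k$, including the a.e.-defined sign $\mathrm{sgn}(c_k(t))$ and the measurability of the integrand.

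For the observable statement, I condition on $\omega$. The single-shot outcome satisfies $\mathbb E[Z_\omega\mid\omega]=\Tr[O\,\mathcal U_\omega(\rho_0)]$, so by the tower property and linearity $\mathbb E[Y_\omega]=\Tr[O\,\mathcal U_T(\rho_0)]$. Since $|Y_\omega|\le|g_\omega|=e^{2\overline{\|c\|_1}T\tan(\Delta/2)}$, Hoeffding's inequality (or Chebyshev with $\Var\le g^2$) gives the stated $N_s=\mathcal O(e^{4\overline{\|c\|_1}T\tan(\Delta/2)}/\varepsilon^2)$ at constant failure probability. The resource counts then follow directly. First, $\mathbb E[M]=\Lambda$ by the Poisson law. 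Second, each trajectory is sampled in expected $\mathcal O(\Lambda)$ classical time, assuming $\mathcal O(1)$-cost samplers for $f$ and for the conditional index law, so multiplying by $N_s$ gives the preprocessing bound. Third, each sample is a single circuit execution, so the total number of executions is $N_s$.
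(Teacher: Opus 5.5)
Your argument is correct and is essentially the paper's Dyson-series derivation (Appendix~\ref{app:dyson_derivation_continuous_tepai}). It uses the same local identity for $\pm\mathcal L_P$ and the same splitting $\mathcal L(t)=-\|c(t)\|_1\cot(\Delta/2)\,\mathcal I+B(t)$, with your $\mathcal J_t=B(t)$; it also normalizes the positive Dyson measure (total mass $e^{\Lambda}$) to $P_\infty$ in the same way, leaving the prefactor $e^{2\overline{\|c\|_1}T\tan(\Delta/2)}$. Your absolute-convergence/Fubini justification and the Hoeffding argument for the sample-complexity and resource claims fill in steps the paper leaves implicit, and the paper also gives an independent proof via the weak limit of finite-step TE-PAI (Lemma~\ref{lemma:weak_limit}).
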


We provide two complementary derivations of \cref{thm:unbiased} in \cref{sec:derivation1,sec:dyson}: one from the
weak-limit convergence of finite-step TE-PAI~\cite{Kiumi_2025}, which connects
the continuous protocol to the well-studied finite-angle PAI framework, and one
from a Dyson-series representation of the time-evolution channel, which gives a
direct physical interpretation as Poisson sampling of infinitesimal
time-evolution contributions.

\paragraph*{Remark: Trade-off in the TE-PAI angle.}
The parameter \(\Delta\in(0,\pi)\) controls the trade-off between circuit depth
and sampling overhead. The overhead increases monotonically with \(\Delta\),
whereas \(G_\Delta\) is minimized at \(\Delta=\arccos(1/3)\approx 1.23\);
hence the useful regime is \(0<\Delta\le \arccos(1/3)\). At the small-angle
end, choosing \(\Delta=\Theta((\overline{\|c\|_1}T)^{-1})\) keeps the sampling
overhead constant, since
\(\exp[4\overline{\|c\|_1}T\tan(\Delta/2)]=O(1)\), while
\(\mathbb{E}[M]=G_\Delta\overline{\|c\|_1}T
=O((\overline{\|c\|_1}T)^2)\). Thus continuous TE-PAI can achieve constant
sampling overhead while remaining unbiased, at the cost of Trotter-like
quadratic circuit-depth scaling.

\paragraph*{Remark: Term-dependent TE-PAI angles.}
The TE-PAI angle may be chosen separately for each Pauli term. Assigning a
larger angle \(\Delta_k\) to costly or low-fidelity rotations reduces their
event rate, while smaller angles can be used for cheaper rotations to control
the sampling overhead. This gives a hardware-aware tradeoff between execution
cost and sampling variance. For example, let \(p_1\) and \(p_2\) be representative one- and two-qubit gate
error rates. Since two-qubit gates are often the dominant error source, with
ratios such as \(p_2/p_1\sim 10\) on current devices
\cite{mckay2023benchmarking, google_willow_2024}, it is natural to assign
larger angles to rotations whose compiled circuits contain two-qubit gates.
Let \(K_h\) be this expensive subset and \(K_s\) the remaining cheaper subset,
with angles \(\Delta_h\) and \(\Delta_s\), respectively. Since the TE-PAI
event-rate factor is \(G_{\Delta}\), reducing the event rate of the expensive
rotations in proportion to the gate-error ratio amounts to choosing
$
G_{\Delta_h}/G_{\Delta_s}\approx p_1/p_2.
$
In the small-angle regime, \(G_{\Delta}\approx 2/\Delta\), and hence
\(\Delta_h\approx (p_2/p_1)\Delta_s.\)
Thus, for \(p_2/p_1\approx 10\), one may choose \(\Delta_h\approx 10\Delta_s\),
making two-qubit-heavy rotations occur about ten times less often. The price is increased sampling overhead. If
\[
C_h:=\sum_{k\in K_h}\int_0^T |c_k(t)|\,dt ,
\]
then replacing \(\Delta_s\) by \(\Delta_h\) on \(K_h\) multiplies the sample
complexity by
\[
\exp\left\{
4C_h\left[
\tan(\Delta_h/2)-\tan(\Delta_s/2)
\right]
\right\}.
\]
For small angles and \(\Delta_h\approx (p_2/p_1)\Delta_s\), this becomes
\[
\exp\left[2C_h\left(\frac{p_2}{p_1}-1\right)\Delta_s\right].
\]
Hence term-dependent TE-PAI angles trade a linear reduction in the event rate
of costly rotations for an exponential sampling-cost increase controlled by
their total weight \(C_h\).

\subsection{Derivation from finite-step TE-PAI}
\label{sec:derivation1}
We first obtain the protocol as the continuous-time limit of finite-step TE-PAI. Finite-step TE-PAI~\cite{Kiumi_2025} applies the PAI quasiprobability
decomposition~\cite{PAI} to a Trotterized evolution, sampling fixed-angle Pauli-rotation
circuits whose reweighted mean reproduces the Trotter channel exactly; such fixed-angle
rotations are well suited to early fault-tolerant hardware~\cite{Toshio_2026}, and the
same randomized strategy extends to tasks such as energy-gap estimation~\cite{Pages_2026}.
The discretization error vanishes only in the limit \(N\to\infty\), but the
classical cost of generating finite-step samples scales as \(O(NL)\) per sample
and therefore diverges in this limit. This motivates the continuous-time
sampler.
Partition \([0,T]\) into \(N\) intervals of width \(\delta t=T/N\), with \(t_j=j\delta t\), and define
\begin{equation}
    \mathcal U_N
:=
\prod_{j=1}^{N}\prod_{k=1}^{L}
\mathcal R_{P_k}(\theta_{kj}),
\quad
\theta_{kj}:=2c_k(t_j)\delta t ,
\label{eq:trotter-circuit}
\end{equation}
where \(\mathcal R_{P_k}(\theta)(\rho):=R_{P_k}(\theta)\rho R_{P_k}(\theta)^\dagger\) and \(R_{P_k}(\theta):=\exp(-i\theta P_k/2)\). For each elementary channel, finite-step TE-PAI uses the local PAI
decomposition
\[
\mathcal R_P(\theta)
=
a_1(\theta)\mathcal R_{P,1}
+
a_2(\theta)\mathcal R_{P,2}
+
a_3(\theta)\mathcal R_{P,3},
\]
where \(\mathcal R_{P,1}\), \(\mathcal R_{P,2}\), and
\(\mathcal R_{P,3}\) are the channels corresponding to \(I\),
\(R_P(\mathrm{sgn}(\theta)\Delta)\), and \(R_P(\pi)\), respectively. Let
\[
g(\theta)=\sum_{r=1}^{3}|a_r(\theta)|,
\qquad
p_r(\theta)=\frac{|a_r(\theta)|}{g(\theta)} .
\]
For each pair \((j,k)\), sample \(r_{kj}\in\{1,2,3\}\) with probability
\(p_{r_{kj}}(\theta_{kj})\). Identity outcomes \(r_{kj}=1\) are not
recorded. If \(r_{kj}=2\), record the event \((t_j,k,0)\); if
\(r_{kj}=3\), record the event \((t_j,k,1)\). The recorded events form a
trajectory \(\omega=(t_m,k_m,\ell_m)_{m=1}^M\), and the corresponding
circuit is obtained by applying, in time order,
\[
R_m=
\begin{cases}
R_{P_{k_m}}(\mathrm{sgn}(c_{k_m}(t_m))\Delta), & \ell_m=0,\\
R_{P_{k_m}}(\pi), & \ell_m=1.
\end{cases}
\]
The trajectory weight is
\[
g_\omega^{(N)}
=
\prod_{j=1}^{N}\prod_{k=1}^{L}
\operatorname{sgn}\!\left(a_{r_{kj}}(\theta_{kj})\right)
g(\theta_{kj}).
\]
Let \(I=[N]\times[L]\). For a trajectory \(\omega\), let
\(I_1^{(\omega)}\subset I\) be the set of indices corresponding to identity
outcomes. Similarly, let \(I_2^{(\omega)}\) and \(I_3^{(\omega)}\) be the sets
of indices \((j,k)\in I\) for which \((t_j,k,0)\in\omega\) and
\((t_j,k,1)\in\omega\), respectively. Then the probability of \(\omega\) is
\[
P_N(\omega)
=
\prod_{r=1}^{3}
\prod_{(j,k)\in I_r^{(\omega)}}
p_r(\theta_{kj}) .
\]

Let \(\mathcal U_\omega^{(N)}\) denote the channel induced by this finite-step
random circuit. By construction, finite-step TE-PAI is unbiased for the product-formula
channel~\cite{Kiumi_2025}:
\[
\mathcal U_N
=
\mathbb E_{P_N}\!\left[
g_\omega^{(N)}\mathcal U_\omega^{(N)}
\right].
\]

Let \(N_s\) be the number of circuit variants. The classical cost of generating
finite-step TE-PAI circuits is \(O(NLN_s)\), which diverges in the
zero-discretization-error limit \(N\to\infty\). This motivates the
continuous-time sampler. The following lemma makes this finite-to-continuous connection precise. Its
detailed proof is given in \cref{app:weak:limit}.

\begin{lemma}\label{lemma:weak_limit}
\(P_N\) converges weakly to \(P_\infty\) on the trajectory space \(\Omega\);
that is, for every bounded continuous functional \(F:\Omega\to\mathbb C\),
\[
    \lim_{N\to\infty} \int_{\Omega} F(\omega)\,P_N(d\omega)
    =
    \int_{\Omega} F(\omega)\,P_\infty(d\omega).
\]
\end{lemma}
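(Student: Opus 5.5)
The plan is to treat the finite-step trajectory law \(P_N\) as a Bernoulli (thinned) point process on the grid \(\{t_j\}\times[L]\times\{0,1\}\) and show that it converges to the marked Poisson process \(P_\infty\). We work with the natural topology on \(\Omega=\bigsqcup_M\Omega_M\): disjoint union over \(M\), and on each \(\Omega_M\) the product of the Euclidean topology on the ordered simplex with the discrete topology on the labels. Weak convergence then reduces to two statements.

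\begin{enumerate}
\item \emph{Number of events.} The law of the event count \(M\) under \(P_N\) converges to \(\mathrm{Poisson}(\Lambda)\).
\item \emph{Conditional laws.} For each fixed \(M\), the conditional law of \((t_m,k_m,\ell_m)_{m=1}^M\) given \(M\) converges weakly on \(\Omega_M\) to the conditional law under \(P_\infty\).
\end{enumerate}

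Since \(F\) is bounded, a dominated or Scheffé-type argument over \(M\) then combines the two. Convergence of the \(M\)-marginal in total variation controls the tail \(\sum_{M>M_0}\), and the finitely many remaining sectors are handled by the second statement.

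The first step is the small-angle expansion of the PAI coefficients. From the explicit PAI decomposition in \cite{PAI,Kiumi_2025} we expect, as \(\theta\to0\),
\[
|a_2(\theta)|=\frac{|\theta|}{2\sin\Delta}\cdot 2+O(\theta^2),\qquad |a_3(\theta)|=O(|\theta|),\qquad g(\theta)=1+O(|\theta|),
\]
normalized so that \(p_2(\theta)+p_3(\theta)=\tfrac{G_\Delta}{2}|\theta|+O(\theta^2)\) and \(p_3/(p_2+p_3)\to p_\pi\). The constants have to be checked against the paper's definitions of \(G_\Delta\) and \(p_\Delta\). With \(\theta_{kj}=2c_k(t_j)\delta t\), the per-site event probability is
\[
q_{kj}=G_\Delta|c_k(t_j)|\,\delta t+O(\delta t^2),
\]
and the mark is \(\ell=1\) with probability \(p_\pi+O(\delta t)\). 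The error terms are uniform in \(j,k\) because the \(c_k\) are bounded, being absolutely continuous on a compact interval.

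For the event count, \(M\) is a sum of \(NL\) independent Bernoulli variables with \(\max q_{kj}\to0\) and
\[
\sum_{j,k}q_{kj}\to G_\Delta\int_0^T\|c(t)\|_1\,dt=\Lambda .
\]
The Riemann-sum convergence uses continuity of the \(c_k\). Le Cam's inequality then gives total-variation convergence to \(\mathrm{Poisson}(\Lambda)\) with error at most \(\sum_{j,k} q_{kj}^2\to0\).

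For the conditional laws, the cleanest route is the Laplace functional or finite-dimensional distributions. Fix disjoint intervals \(B_1,\dots,B_r\subset[0,T]\) together with labels \((k,\ell)\). Under \(P_N\), the counts of events in \(B_i\times\{(k,\ell)\}\) are independent-across-cells sums of Bernoulli variables. The same Le Cam argument shows they converge jointly to independent Poisson variables with means
\[
G_\Delta\int_{B_i}|c_k(t)|\,dt\,\cdot\,p_\Delta^{1-\ell}p_\pi^{\ell}.
\]
These are precisely the counts of the marked Poisson process described by \(P_\infty\). Convergence of these counts for all finite families of intervals whose endpoints are not atoms identifies the limit as a simple point process (Kallenberg's criterion). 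Weak convergence of point processes in the vague topology, with the total mass converging by the first step, is then equivalent to weak convergence on \(\Omega\) of the ordered event lists.

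The main obstacle is this last identification between the point-process topology and the concrete topology on \(\Omega\). Two issues arise. First, under \(P_N\) coincident times occur, since several \(k\) share the same \(t_j\); the ordering within a time slice must therefore be fixed, for instance by index \(k\). The probability that two events land in the same slice tends to zero, because \(\sum_j(\sum_k q_{kj})^2=O(\delta t)\). Second, one must check that the map from counting measures with \(M\) atoms to sorted tuples is continuous at configurations with distinct times. I would address both by showing that the limit has a.s. distinct times and invoking the continuous mapping theorem.

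Finally, the \(\mathrm{sgn}(c_k(t_m))\) dependence of the circuit is not part of \(\omega\), so it does not affect the weak convergence of the trajectory laws themselves.
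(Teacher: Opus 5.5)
Your proposal is correct, but it takes a different route from the paper.

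\textbf{The paper's route.} The paper works entirely with Laplace functionals. Because the grid cells $(j,k)$ are sampled independently, $L_N(h)$ factorizes exactly as $\prod_{j,k}(1+x_{kj})$, where $x_{kj}$ already combines both marks $\ell=0,1$. It then uses the same small-angle expansion you derive ($p_2+p_3=G_\Delta|c_k(t_j)|\delta t+O(\delta t^2)$, with mark ratio tending to $p_\pi$) together with a Riemann sum. This gives $\log L_N(h)\to\Lambda\int_E(e^{-h}-1)f$, which matches the explicitly computed $L_\infty(h)$. Weak convergence is then asserted from the continuity theorem for Laplace functionals.

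\textbf{Your route.} You use Le Cam's inequality for the count $M$, then convergence of finite-dimensional count distributions with Kallenberg's criterion. You then apply the continuous mapping theorem, using a.s.\ distinct limit times, to pass from point-process convergence to convergence of sorted tuples on $\Omega$.

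\textbf{Comparison.} The paper's route is shorter and treats all times and marks in a single product. However, it never specifies a topology on $\Omega$, nor justifies identifying convergence of point processes with weak convergence of ordered event lists, including tie-breaking within a time slice. That is exactly the step you make explicit. Your route buys that rigor and a quantitative total-variation bound for the count law, at the cost of more machinery.

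Two small points on your version:
\begin{enumerate}
\item Within a cell $(j,k)$ the indicators for $\ell=0$ and $\ell=1$ are mutually exclusive, not independent. You therefore need the multivariate form of Poisson approximation; the per-cell error is still $O(q_{kj}^2)$. Alternatively, use the Laplace functional for this step, which absorbs the exclusivity automatically.
\item Once you have point-process convergence plus continuous mapping, your initial split into a count marginal and conditional laws given $M$ is redundant. The sets $\{M=m\}$ are clopen, so the lemma follows directly.
\end{enumerate}
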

This lemma provides the probabilistic bridge between finite-step and continuous
TE-PAI: as \(N\to\infty\), the discrete random trajectories converge in
distribution to the marked Poisson trajectories \(P_\infty\). Physically, the
rare nontrivial gates on the fine Trotter grid become independent
continuous-time Poisson events, with rates governed by \(\|c(t)\|_1\). Thus
the result identifies the limiting random-circuit ensemble underlying
continuous TE-PAI, rather than only the limiting mean channel. 

\paragraph*{Remarks: Local optimality}
The weak-limit construction also transfers local properties of the finite-step
PAI decomposition to the continuous sampler. For continuous TE-PAI, assume the available gates are
fixed-angle Pauli rotations containing \(\mathcal I\) and
\(\mathcal R_P(\pm\Delta)\), where \(\Delta\in(0,\pi)\) is the smallest
nonzero available angle. Applying the finite-angle PAI optimality
result~\cite{PAI} locally and taking the continuous-time limit, the optimal
local signed decomposition uses \(\mathcal I\), the nearest nonzero rotations
\(\mathcal R_P(\pm\Delta)\), and the antipodal rotation
\(\mathcal R_P(\pi)\). Since the finite-step gamma factors converge,
\(\lim_{N\to\infty}\gamma_N=\gamma\), the continuous TE-PAI gamma overhead is
optimal under the same setting.

\paragraph*{Remarks: Alternative gate sets.}
The gate set used here is not unique. The choice
\(\{I,R_P(\pm\Delta),R_P(\pi)\}\) is natural for Pauli-generated channels
because it exploits \(P^2=I\), but other local decompositions may be better for
structured Hamiltonians. For example, one could use physical channels such as
fermionic hopping rotations, density-density phase gates, or small
Hubbard-block evolutions for Hubbard-type models. Optimizing this local
decomposition is an additional degree of freedom that may reduce the sampling
overhead.

\subsection{Derivation from Dyson-series representation}
\label{sec:dyson}

The continuous TE-PAI sampler can be understood as a probabilistic rewriting of the Dyson expansion; see \cref{app:dyson_derivation_continuous_tepai}. The Dyson series expands the channel in terms of ordered insertions of the nonphysical generators \(\mathcal L_k\). The local TE-PAI identity replaces each such insertion by a signed mixture of physical rotations, and normalizing the positive part of the resulting coefficients produces a marked Poisson trajectory law.

\begin{align*}
\mathcal U(T)
&=
\sum_{M=0}^{\infty}
\sum_{k_1,\ldots,k_M}
\int_{0\le t_1\le\cdots\le t_M\le T}
dt_1\cdots dt_M\\
&\qquad\times
\left(
\prod_{m=1}^M c_{k_m}(t_m)
\right)
\mathcal L_{k_M}\cdots\mathcal L_{k_1}.
\end{align*}

The key local identity behind continuous TE-PAI is the channel-level
decomposition of a Pauli Liouvillian insertion into implementable rotations.
For a Pauli string \(P\), define \(\mathcal L_P(\rho):=-i[P,\rho]\). Then one has
\[
\pm\mathcal L_P
=
\frac{2}{\sin\Delta}
\left[
\mathcal R_P(\pm\Delta)
-
\cos^2\frac{\Delta}{2}\mathcal I
-
\sin^2\frac{\Delta}{2}\mathcal R_P(\pi)
\right].
\]
Exact time evolution accumulates infinitesimal Hamiltonian rotations.
Continuous TE-PAI replaces each infinitesimal Liouvillian insertion, in
expectation, by fixed finite-angle rotations; see \cref{fig:dyson-pai} for a
geometric illustration. The \(\Delta\)-rotation supplies the desired Liouvillian
component, while the identity and \(\pi\)-rotation channels cancel the unwanted
even component of the finite rotation. Equivalently, \(\mathcal R_P(\pi)\) acts
as a sign flip on the \(P\)-anticommuting sector, leaving only the
infinitesimal Liouvillian direction. Applying this local rewriting to each Dyson insertion reorganizes the exact
time-evolution channel as the signed Poisson ensemble
\[
\mathcal U_T
=
\int_{\Omega}
dP_\infty(\omega)\,
g_\omega\,\mathcal U_\omega .
\]
This identity gives a direct derivation of continuous TE-PAI from the Dyson series.

\begin{figure}
    \centering
    \includegraphics[width=0.9\linewidth]{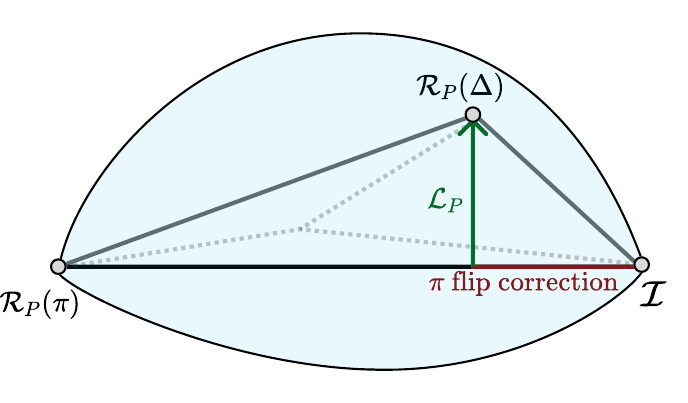}
    \caption{Geometric intuition for the local TE-PAI decomposition.
For a fixed Pauli string \(P\), the finite channels \(\mathcal I\),
\(\mathcal R_P(\Delta)\), and \(\mathcal R_P(\pi)\) give a linear
representation of the infinitesimal direction \(\mathcal L_P\).
The \(\Delta\)-rotation contains both the desired Liouvillian component and an
unwanted component along the \(\mathcal I\)--\(\mathcal R_P(\pi)\) direction.
TE-PAI subtracts this unwanted component, representing the Liouvillian
insertion as a signed combination of physical finite-angle channels.
}
    \label{fig:dyson-pai}
\end{figure}
 
Related Dyson-series random compilers for time-dependent Hamiltonian
simulation have also been proposed, e.g., Ref.~\cite{Zhang2022}. These methods also sample ordered Dyson
terms, but typically estimate unitary contributions to a linear-combination
representation using controlled-unitary measurements. In
contrast, continuous TE-PAI applies the local PAI identity to convert each
nonphysical Dyson insertion into a signed mixture of implementable fixed-angle
Pauli-rotation channels. Thus each sample is an ordinary physical random
circuit, followed by a standard observable measurement and a scalar
quasiprobability weight in classical postprocessing.

\cref{thm:unbiased} shows that continuous TE-PAI removes the systematic bias of product-formula simulation. The remaining cost is statistical: the worst-case sampling overhead is controlled by the \(\ell_1\)-type scale \(\overline{\|c\|_1}T\). This bound treats the Hamiltonian terms only through their absolute weights and does not yet use any additional structure of the problem, such as commutation relations, locality, or the target observable. Thus there is substantial room to reduce the practical variance.

\subsection{Discretization error}
\subsubsection{Algorithmic error}
Trotterization incurs a discretization error from the Lie--Trotter product formula \cref{eq:first-order-trotter}:
\begin{equation}
\label{eq:first-order-trotter}
e^{-i(A+B)t}
=
\lim_{N \to \infty}
\left(
e^{-iA t/N}
e^{-iB t/N}
\right)^N ,
\end{equation}
whenever a finite $N$ is used. Here the time-evolution operator \cref{eq:time-evolution-operator} is approximated by a series of $\exp(-iH(t) \frac{T}{N})$ operators implemented via the circuit $\mathcal{U}_N$ using rotation-gates of angle $\theta_{kj}=2c_k(t_j)\frac{T}{N}$ for Hamiltonian coefficient $k$ and time-step $j$ from \cref{eq:trotter-circuit}. The discretization error leads to a time-evolution of an approximate Hamiltonian \cite{gentinetta_correcting_2025}, using rotation angles that scale inversely with the number of Trotter steps. As a result the additive approximation error of first-order Trotterization is bounded by \cite{childs_theory_2021}:
\begin{align*}
\varepsilon_{\mathrm{Trot}}^{(N)}
&:=
\left\|
\prod_{j=1}^{N}\prod_{k=1}^{L}
e^{-i c_k(t_j)P_k T/N}
-
U(T)
\right\|
\nonumber\le
\frac{T^{2}}{2N}\,\|c\|_{T}^{2}.
\end{align*}
Here, the averaged commutator scale is defined as
\begin{align*}
\|c\|_{T}^{2}
:=
\frac{1}{N}
\sum_{j=1}^{N}
\sum_{\gamma_1=1}^{L}
\left\|
\left[
\sum_{\gamma_2>\gamma_1}
c_{\gamma_2}(t_j)P_{\gamma_2},
\,
c_{\gamma_1}(t_j)P_{\gamma_1}
\right]
\right\|.
\end{align*}
For simplicity, we state the bound for the time-independent case; in the time-dependent setting, additional discretization errors arise from approximating the time-ordered evolution by a finite sequence of time slices, and sharper bounds include terms involving time derivatives of the Hamiltonian~\cite{mizuta_explicit_2024}. Continuous TE-PAI entirely removes this error by replicating the behaviour of Trotter circuits that would require an infinite number of gates to implement. As such, the discretization error in estimating time-evolved observables is replaced by statistical sampling error arising from the intrinsic variance of the continuous TE-PAI estimator when evaluated with a finite number of samples.

\subsubsection{Tensor-network error}
In the case of tensor-network simulation there is an additional avenue of discretization error arising from the erroneous entanglement structure governed by the approximate Hamiltonian. Generally speaking the number of Schmidt coefficients, called the bond-dimension $\chi$, needed across a given cut in an exact tensor network grows exponentially with the entanglement across that cut. Therefore, consider the discrepancy between the target state $\ket{\psi(T)}$ and the simulated state $\ket{\tilde{\psi}(T)}$:
\begin{align*}
    \ket{\psi(T)}
    &=
    U(T)\ket{\psi_0}, \\
    \ket{\tilde{\psi}(T)}
    &=
    \mathcal U_N\ket{\psi_0}
    =
    \ket{\psi(T)}+\epsilon_{\text{Trot}}^{(N)}\ket{\eta},
\end{align*}
where $\ket{\eta}$ is a normalized error vector. Across a cut in the tensor-network, this erroneous vector $\ket{\eta}$ can have support over additional Schmidt sectors and thus introduce a tail of Schmidt coefficients independent of the dynamics of the target state. These values can contribute to growth in the bond-dimension required to simulate $\ket{\tilde \psi}$, through introducing additional interactions and entanglement in the Trotterized Floquet Hamiltonian \cite{pastori_characterization_2022}. Hence the discretization error of the Trotter protocol can introduce additional erroneous bond-dimension growth that does not reflect the entanglement build-up of the underlying system \cite{paeckel_time-evolution_2019}. 

This erroneous bond-dimension growth contributes greatly to the computational complexity of contracting the tensor-network. The additional dynamics time-evolved causing more nonphysical interactions downstream, plausibly leading to further bond-dimension growth. Consider simulating with an exact matrix-product-state (MPS) tensor-network, and let $\chi(t)$ represent the maximum bond-dimension at time $t$, and let $\chi_{\max}$ be the largest value reached during the entire evolution. When a nearest-neighbour two-qubit gate is applied to an MPS in mixed-canonical form, the leading computational cost associated with the bond update and subsequent SVD re-factorization scales as $\mathcal{O}(\chi^3)$ \cite{orus_practical_2014}. Therefore, for an MPS containing $\nu$ two-site gates, the contraction cost $ C$ can be expressed as
\begin{equation}
    C\;\propto\; \sum_{m=1}^{\nu} \chi_m^{\,3} \;\le\; \nu\,\chi_{\max}^{3},
    \label{eq:Wsample}
\end{equation}
where $\chi_m$ denotes the bond dimension at the $m$-th gate application. This upper bound highlights the two principal contributors to the classical computational cost: the number of gates $\nu$ and the growth of the bond dimension $\chi(t)$ driven by entanglement.

The polynomial complexity dependence on the bond-dimension means that the erroneous growth caused by the discretization error will entail an increased time-to-solution. Thus, since the continuous TE-PAI circuits are approximations of the infinitely deep Trotter circuit they may be computationally cheaper due to requiring a smaller bond-dimension per-circuit than the Trotterization. This reduction is not guaranteed for all configurations due to the variable $\Delta$ and highly-entangling $\pi$ rotations. Instead we observe this behaviour consistently across a number of numerical experiments. 

In realistic applications of tensor-networks the exponential bond-dimension growth is the dominant factor in the complexity leading to bond-dimension truncation at a certain threshold. The truncation incurs a truncation error, recently shown to be plausibly lower for finite-$N$ TE-PAI than for Trotterization in \cite{Fredrik_2026}, and so having a lower bond-dimension growth in continuous TE-PAI leads to either a further lowered truncation error or an extended simulatable regime for the same error compared to Trotterization. As such, the lack of discretization error of continuous TE-PAI is additionally advantageous in the tensor-network context compared to Trotterization.

As a numerical demonstration \cref{fig:bond-growth} shows a system of $n=30$ spin-chain qubits with Hamiltonian $H = \sum_{k\,\in\,\text{ring}(n)} \omega_k Z_k
        + J(t)\,\Vec{\sigma}_k \cdot \Vec{\sigma}_{k+1}$.
Here, a time-dependent coupling of $J(t)=0.1\, \cos(800 \, \pi \, t)$ and $\omega_k$ are drawn uniformly from $[-1, 1]$ were simulated with $N_s=1000$ circuits of $\Delta=\pi/2^{10}$ compared to Trotterization with $N=\{10,1800,3000\}$ steps on MPS. In the upper panel continuous TE-PAI follows the canon Trotterization with high accuracy despite only using on average $\approx133,000$ gates compared to the $360,000$ Trotter-gates, whilst the shallower Trotterization simulations have a large discretization error compared to the canon irrespective of having a lower or higher gate-count than the continuous TE-PAI circuits. In the lower panel there is an exponential rise in the required bond-dimension of the shallow Trotterization simulations that is not seen in the deep Trotterization or the continuous TE-PAI circuits. This indicates that the discretization error has led to unphysical entanglement-buildup in the effective Hamiltonian of the simulation, leading to orders-of-magnitude larger bond-dimensions for both of the shallower Trotterizations than the deeper circuit or the shallow continuous TE-PAI simulation.

This high-frequency Hamiltonian was chosen to have a large discretization error for demonstration purposes. The simulations demonstrate both the gate-count savings that continuous TE-PAI can have compared to deep Trotterization, and a situation where the deep Trotterization is required to achieve a high-accuracy. Additionally the experiment demonstrates another avenue of computational advantage arising from the continuous TE-PAI circuits approximating the entanglement build-up of the system without discretization error manifested through using a low bond-dimension where comparably deep Trotter circuits experience an exponential rise. As governed by \cref{eq:Wsample} this bond-dimension reduction can exponentially reduce the computational complexity of contracting the continuous TE-PAI MPS. This advantage is nuanced by the standard practice of truncating bond-dimension in tensor-network simulations. However, when continuous TE-PAI circuits require a lower overall bond-dimension it is expected that they will experience a smaller bond-dimension truncation error than the shallow Trotterization, making the computational advantage into an accuracy one. Additionally as recently shown in \cite{Fredrik_2026} due to the aggregation of random circuit-variants in the TE-PAI protocol the truncation error is already expected to be lower than standard Trotterization.

\begin{figure}
    \centering
    \includegraphics[width=1.0\linewidth]{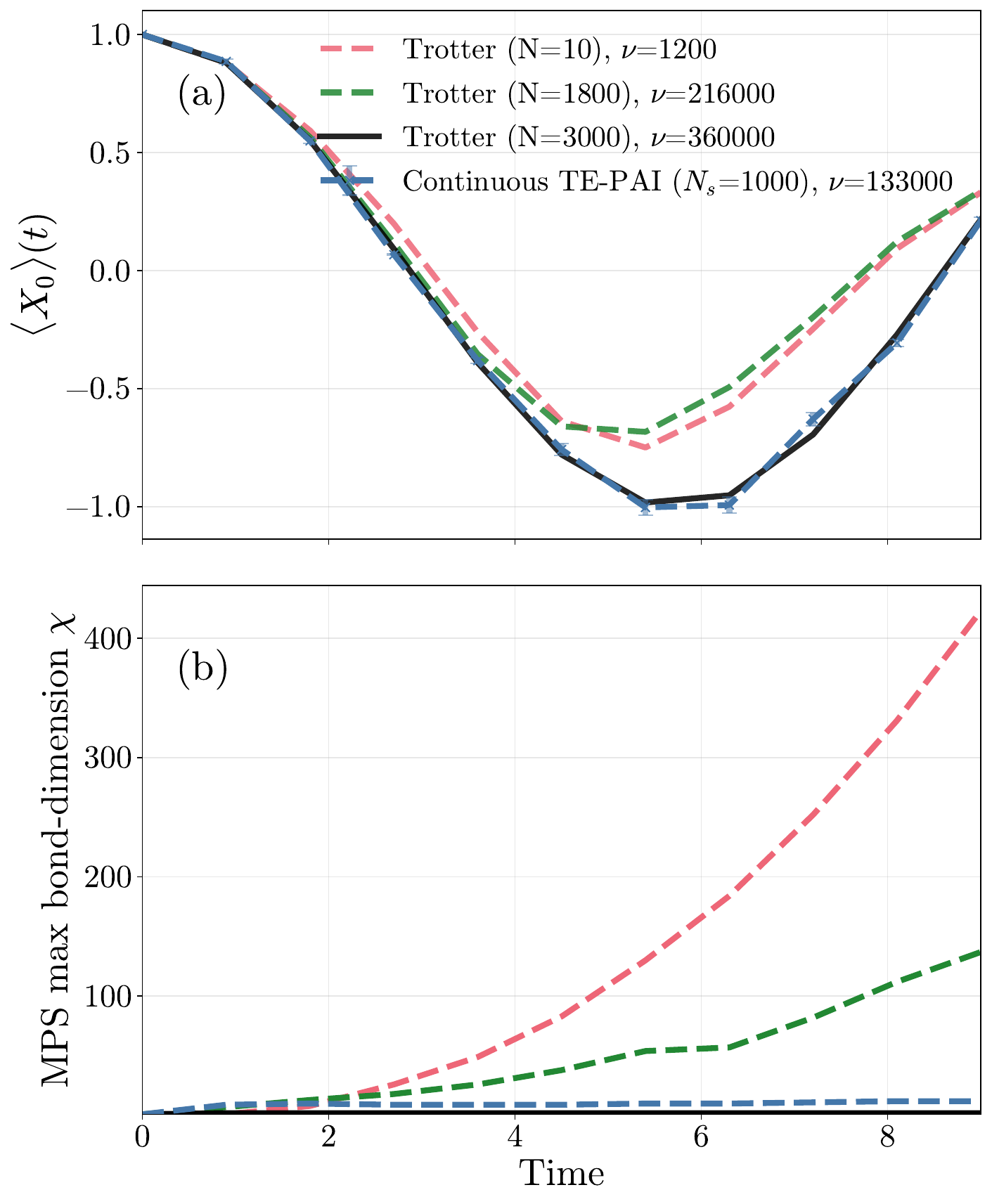}
    \caption{Continuous TE-PAI with \(N_s=1000\) circuits and \(\Delta=\pi/2^{10}\) simulates an \(n=30\) qubit spin-chain Hamiltonian with high-frequency time-dependent nearest-neighbor coupling, compared against first-order Trotterization with \(N=\{10,1800,3000\}\) on MPS. \textbf{(a): }The expectation value $\langle X_0 \rangle$ over time for the deepest canon Trotterization in black approximated by continuous TE-PAI in blue, $N=1800$ Trotter in green, and $N=10$ Trotter in red. \textbf{(b): }The maximum bond-dimension used by the simulations as a function of simulation duration. For the continuous TE-PAI circuits, this was the max between all circuits for any given time-step.}
    \label{fig:bond-growth}
\end{figure}

\section{Structure-aware variance reduction}
\label{sec:structure_aware_variance_reduction}

Continuous TE-PAI removes product-formula discretization bias, but its worst-case sampling overhead is still governed by an \(\ell_1\)-type scale of the Hamiltonian coefficients. 
Such a bound does not use commutation relations, locality, observable support, or quasiprobability sign structure. 
This section develops a structural view of the remaining variance.

The key object is the counts vector. 
It compresses a full trajectory description into the number of times each elementary letter appears. 
Algebraically, this is the canonical abelianisation of the trajectory: it forgets ordering and retains only the sampled multiset.
Conditioning on this statistic separates trajectory variance into a counting component, caused by fluctuations in which gates were sampled, and an ordering component, caused by different orderings of the same multiset. A more comprehensive formalisation is provided in \cref{app:main_theory}.

\subsection{Counts-vector decomposition}

Let \(\mathcal A=\{a_1,\ldots,a_q\}\) be a finite alphabet. 
For a word
\(\underline x=(x_1,\ldots,x_m)\), define
\[
    \mathbf N(\underline x)
    :=
    (N_1(\underline x),\ldots,N_q(\underline x)),
    \qquad
    N_i(\underline x):=\#\{r:x_r=a_i\}.
\]
The counts vector is also known as the Parikh vector
\cite{kozen2007automata}.  
It is the canonical projection from the free
noncommutative monoid generated by \(\mathcal A\) to the free commutative
monoid \cite{diekert1995book}.
Thus two words have the same counts vector when they contain the same multiset of letters.

Applying \cref{eq:conditional_risk} with \(S=\mathbf N\) gives
\begin{equation}
\label{eq:counts_variance_decomposition}
\mathcal R(\widehat\Phi)
=
\underbrace{
\mathbb E\!\left[\Var(\widehat\Phi\mid\mathbf N)\right]
}_{\mathcal R_{\mathrm{ord}}}
+
\underbrace{
\mathbb E\left\|
\mathbb E[\widehat\Phi\mid\mathbf N]
-
\mathbb E[\widehat\Phi]
\right\|_{HS}^{2}
}_{\mathcal R_{\mathrm{cnt}}}.
\end{equation}
The first term, \(\mathcal R_{\mathrm{ord}}\), is the
residual variance over different orderings of a fixed sampled multiset. The second term, \(\mathcal R_{\mathrm{cnt}}\), is the variance of the sampled abelianized dynamics across different multisets. In
other words, conditioning on \(\mathbf N\) turns a randomized trajectory
algorithm into a mixture of fixed-multiset random-order product formulas.

\subsection{Local defects: swaps and substitutions}

The two terms in \cref{eq:counts_variance_decomposition} are controlled by two different local operations. 
Assume first that the estimator is unweighted and unitary-channel-valued (as for qDRIFT),
\[
    \widehat\Phi=\Gamma(\underline X),
    \qquad
    \Gamma_a(\rho)=U_a\rho U_a^\dagger.
\]
The local defect for ordering variance is an adjacent swap.
Define
\[
    \Delta_{ab}
    :=
    \left\|
    \Gamma_a\circ\Gamma_b-\Gamma_b\circ\Gamma_a
    \right\|_{HS}.
\]
For unitary channels,
\[
    \Delta_{ab}^{2}
    =
    2\left(
    d^2
    -
    \left|
    \Tr(U_a^\dagger U_b^\dagger U_aU_b)
    \right|^2
    \right),
\]
and \(\Delta_{ab}=0\) precisely when the two single-letter channels commute.
Indeed, if two words differ only by an adjacent transposition,
\[
    \underline x=uabv,
    \qquad
    \underline y=ubav,
\]
then
\[
    \Gamma(\underline y)-\Gamma(\underline x)
    =
    \Gamma_v\circ
    (\Gamma_b\circ\Gamma_a-\Gamma_a\circ\Gamma_b)
    \circ\Gamma_u .
\]
Left and right composition by unitary channels are Hilbert--Schmidt isometries, so this edge increment has norm \(\Delta_{ab}\).

The corresponding defect for for counting variance is due to single-letter substitutions. 
Define
\[
    \Sigma_{ab}
    :=
    \|\Gamma_a-\Gamma_b\|_{HS}.
\]
For unitary channels,
\[
    \Sigma_{ab}^{2}
    =
    2\left(
    d^2
    -
    \left|
    \Tr(U_a^\dagger U_b)
    \right|^2
    \right),
\]
and \(\Sigma_{ab}=0\) precisely when the two channels are identical.  
If \(\underline x=uav\) and \(\underline y=ubv\), then
\[
    \Gamma(\underline y)-\Gamma(\underline x)
    =
    \Gamma_v\circ(\Gamma_b-\Gamma_a)\circ\Gamma_u ,
\]
whose norm is \(\Sigma_{ab}\).

Thus the counts-vector decomposition separates two local mechanisms:
\[
\begin{array}{c|c}
\text{ordering variance} & \text{counting variance} \\
\hline
\text{swap }uabv\leftrightarrow ubav
&
\text{substitution }uav\leftrightarrow ubv
\\[1mm]
\Gamma_a\Gamma_b-\Gamma_b\Gamma_a
&
\Gamma_a-\Gamma_b
\\[1mm]
\Delta_{ab}
&
\Sigma_{ab}.
\end{array}
\]
The sample-space geometry behind this decomposition is illustrated in
\cref{fig:variance graph}. 
The counts vector first partitions the full word space into integer-simplex points representing multisets. 
Each multiset then carries an internal permutation graph.
Stratifying by counts removes variance from choosing the multiset randomly (deterministically allocating samples across the outer graph); the residual is variance over orderings inside each multiset (random sampling on the internal graph).

\begin{figure*}
    \centering
    \includegraphics[width=0.9\linewidth]{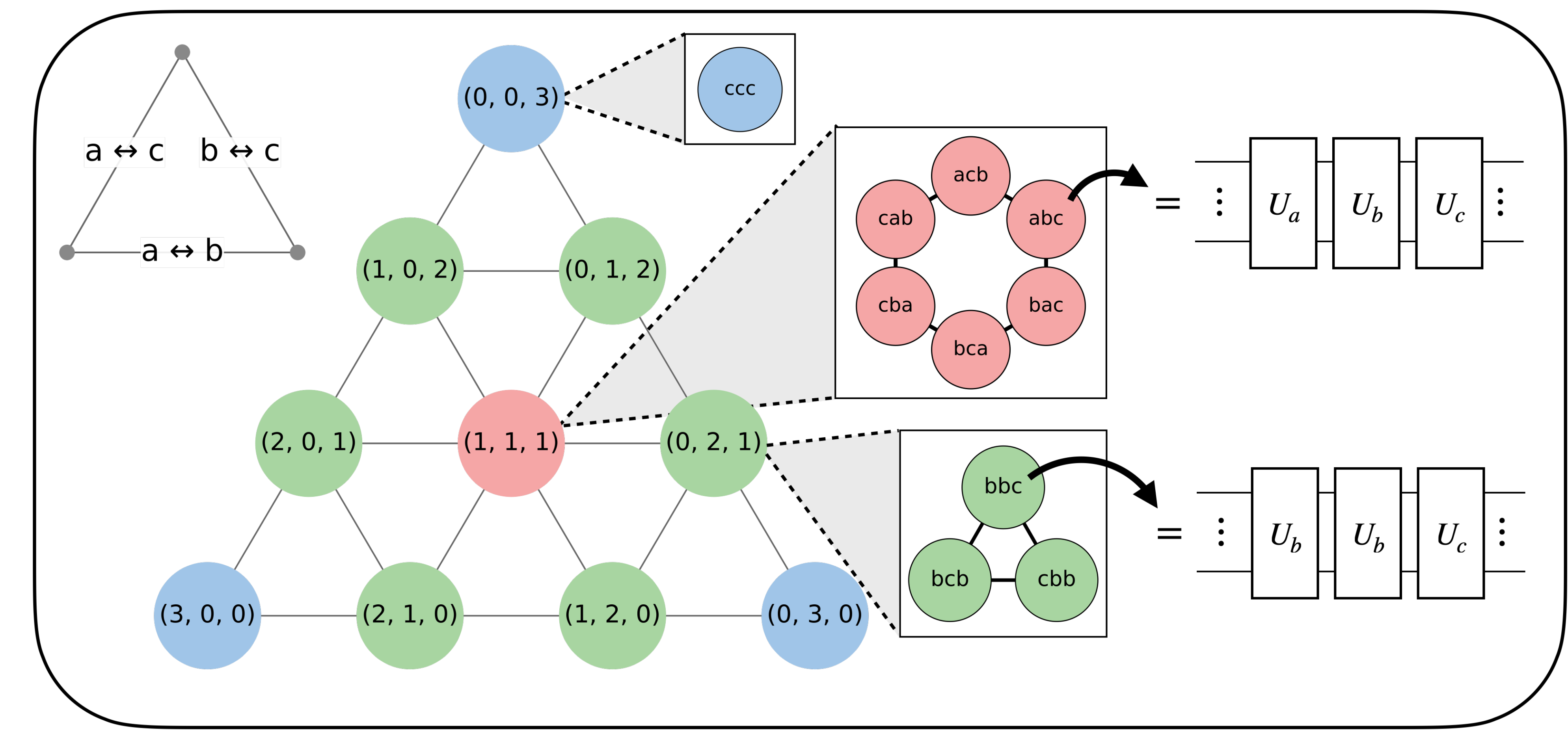}
    \caption{Partitioning of the sample space for words of length 3 made from an alphabet \(\mathcal A=\{a,b,c\}\). 
    The \(3^3=27\) words are first
    grouped by the counts vector into \(\binom{5}{2}=10\) multisets, represented as points on an integer simplex.
    Edges between simplex points correspond to single-letter substitutions. 
    Each multiset has an internal graph whose
    nodes are the distinct permutations of that multiset and whose edges are adjacent transpositions.
    Counts-vector stratification removes variance associated with the outer substitution graph; the residual variance is due to random sampling of orderings on the internal permutation graphs.}
    \label{fig:variance graph}
\end{figure*}

\subsection{Structural bounds}
The local swap and substitution identities can be lifted to global variance bounds using weighted graph Poincar\'e inequalities. 
The full graph construction and proofs are given in \cref{app:main_theory}; here we state the resulting schematic bounds.  
The argument uses standard tools from random-walk and random-transposition theory \cite[Sec.~16.1.4]{levin_markov_2017,caputo_proof_2010}.

For each count sector, equip the set of words with that count vector with an adjacent-swap graph. 
Then there are nonnegative law-dependent weights \(W^{\mathrm{ord}}_{ab}\) such that 
\begin{equation}
    \mathcal R_{\mathrm{ord}}
    =
    \mathbb E\!\left[\Var(\widehat\Phi\mid\mathbf N)\right]
    \le
    \sum_{a<b}W^{\mathrm{ord}}_{ab}\Delta_{ab}^{2}.
    \label{eq:schematic_order_bound}
\end{equation}
Full details can be found in the Appendix, \cref{prop:poincare_counts_residual_law_agnostic}. 
Similarly, on the graph of count vectors connected by elementary substitutions, there are nonnegative weights \(W^{\mathrm{sub}}_{ab}\) such that
\begin{equation}
    \mathcal R_{\mathrm{cnt}}
    \le
    \sum_{a<b}W^{\mathrm{sub}}_{ab}\Sigma_{ab}^{2}.
    \label{eq:schematic_count_bound}
\end{equation}
The exact weights depend only on the trajectory law and on the chosen
comparison graphs.

Equations \cref{eq:schematic_order_bound,eq:schematic_count_bound} show that the ordering and counting components are controlled by different physical properties. 
The ordering term is governed by noncommutativity of sampled channels; the counting term is governed by distinguishability of the possible letters. In particular, if all elementary channels commute, then \(\Delta_{ab}=0\) for all \(a,b\), the realized channel factors through the counts vector,
$
    \Gamma(\underline x)=\Gamma_{\mathrm{ab}}(\mathbf N(\underline x)),
$
and the within-count residual vanishes as
\[
    \mathcal R(\widehat\Phi,\mathbf N)=0.
\]
In a commuting problem, fixing the sampled multiset fixes the trajectory channel completely.

\subsection{A qDRIFT example}
The decomposition is especially transparent for qDRIFT. 
Consider a time-independent Hamiltonian
\[
    H=\sum_{a\in\mathcal A}h_aP_a,
    \ 
    \lambda:=\sum_{a}|h_a|,
    \ 
    p_a:=\frac{|h_a|}{\lambda},
    \ 
    s_a:=\operatorname{sgn}(h_a).
\]
A length-\(m\) qDRIFT trajectory samples letters
\(X_1,\ldots,X_m\) independently from \(p_a\), and realizes letter \(a\) as
\[
    U_a
    =
    \exp\left(-i\,s_a\frac{\lambda T}{m}P_a\right),
    \qquad
    \Gamma_a(\rho)=U_a\rho U_a^\dagger .
\]
The mean channel is
\[
    \overline\Phi_m
    =
    \mathbb E[\Gamma(\underline X)]
    =
    \left(\sum_{a\in\mathcal A}p_a\Gamma_a\right)^m,
\]
which generally differs from \(\mathcal U_T\). 
Thus qDRIFT has both mean-channel bias and sampling variance:
\[
\mathbb E\left|
\frac1{N_s}\sum_{r=1}^{N_s}L(\widehat\Phi_r)
-
L(\mathcal U_T)
\right|^2
=
\left|L(\overline\Phi_m-\mathcal U_T)\right|^2
+
\frac1{N_s}\Var(L(\widehat\Phi)).
\]
Since the letters are sampled independently, the counts vector has multinomial law,
\[
    \mathbf N\sim\operatorname{Multinomial}(m,p),
\]
and, conditional on \(\mathbf N=\mathbf n\), the trajectory is uniformly distributed over all permutations of the corresponding multiset.
The counts fix the empirical Hamiltonian
\[
    H_{\mathrm{emp}}(\mathbf n)
    :=
    \frac{\lambda}{m}
    \sum_{a\in\mathcal A}n_a s_aP_a,
    \qquad
    \mathbb E_{\mathbf N}[H_{\mathrm{emp}}(\mathbf N)]=H.
\]
Hence the counting component is empirical-Hamiltonian sampling noise, while the ordering component is random-permutation product-formula noise at fixed empirical Hamiltonian.

This perspective is consistent with the concentration results of Ref.~\cite{chen_concentration_2021}. 
The mean qDRIFT channel has weak-error scaling controlled by the depth \(m\), while a single trajectory fluctuates around that mean.
In commuting examples the ordering variance vanishes, so the single-trajectory fluctuations are purely count fluctuations.

\begin{figure*}
    \centering
    \includegraphics[width=0.9\linewidth]{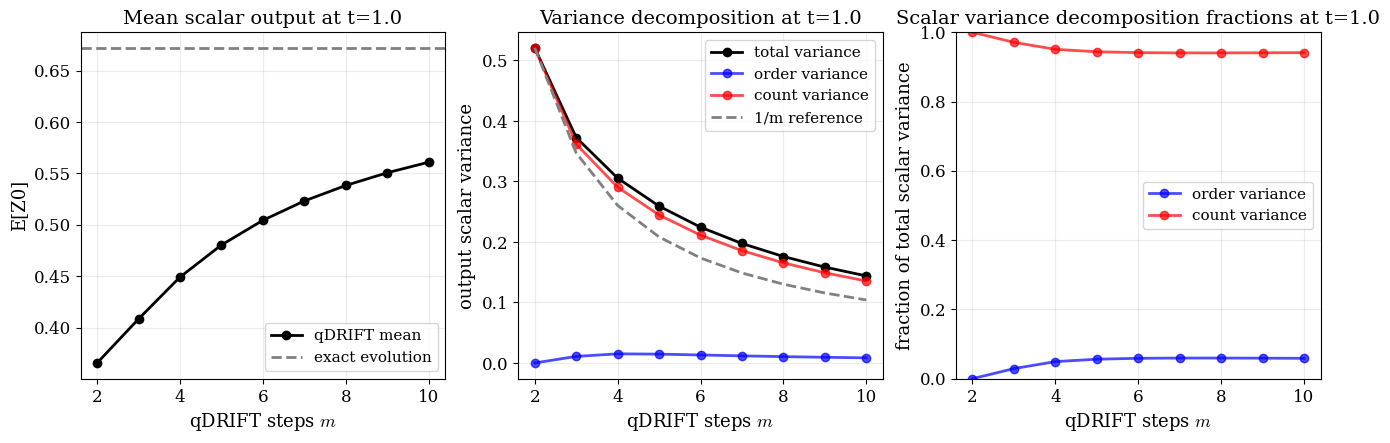}
    \caption{A small qDRIFT example on the 2-qubit Hamiltonian
    \(H=\frac12(X_0+X_1)+Z_0Z_1\) at \(t=1\), varying the qDRIFT step number \(m\), and estimating \(\langle Z_0\rangle\). 
    The left panel shows convergence of the qDRIFT mean to the exact value. 
    The middle panel shows the scalar trajectory-variance decomposition into the residual ordering variance after counts-vector stratification and the counting variance removed by that stratification.
    The right panel shows the same decomposition as a fraction of the unstratified variance. 
    The plotted variance excludes measurement shot noise, which would add a further single-shot contribution and dilute the realized reduction
    \cite{Dai_2025}.
    Variances were obtained by exhaustive enumeration of all \(3^m\) qDRIFT configurations. We note that the specific values of the decomposition are strongly problem and observable dependent.}
    \label{fig:qdrift_example}
\end{figure*}

\subsection{Connection to operator spreading}

The ordering component also has a useful connection to operator spreading. This connection is operational, not thermodynamic: the ordering variance is not a universal chaos or thermalization diagnostic. 
Rather, it measures how sensitive a particular randomized estimator and observable are to noncommuting swaps in the sampled trajectory. For a scalar output 
\[
    Y(\underline x)
    =
    \Tr[O\,\Gamma(\underline x)(\rho_0)],
\]
consider two words differing by an adjacent swap, \(\underline x=uabv\) and \(\underline y=ubav\).
In the Heisenberg picture,
\[
    Y(uabv)-Y(ubav)
    =
    \Tr\!\left[
        \rho_u
        (\Gamma_a^\ast\Gamma_b^\ast-\Gamma_b^\ast\Gamma_a^\ast)(O_v)
    \right],
\]
where
\[
    \rho_u=\Gamma_u(\rho_0),
    \qquad
    O_v=\Gamma_v^\ast(O).
\]
For unitary elementary channels this difference is controlled by the OTOC-like squared commutator
\[
    \|[O_v,K_{ab}]\|_2^2,
    \qquad
    K_{ab}=U_bU_aU_b^\dagger U_a^\dagger .
\]
Thus the local swap terms appearing in the ordering variance are algorithmically sampled commutators between a Heisenberg-dressed observable and a local swap probe.
This is closely related in form to squared-commutator and OTOC diagnostics of operator growth \cite{chen2018operatorscrambling,chen2023speed}, but the ensemble is different: standard random-circuit OTOC results study operator spreading under random physical dynamics \cite{nahum2018operator,von2018operator}, whereas here the target dynamics is fixed and the randomness comes from the qDRIFT sampling procedure.

The interpretation is qualitatively useful. 
If \(O_v\) remains local or effectively abelian, most swap probes commute with it and counts-vector stratification removes most trajectory variance. 
If \(O_v\) spreads into many noncommuting operator sectors, more swaps become visible and the ordering component can become significant.

\subsection{Scalability and signed estimators}

The full counts vector is a canonical diagnostic but usually not a scalable stratification statistic.
For fixed word length \(m\) and alphabet size \(q=|\mathcal A|\), its support has size
\(
    \binom{m+q-1}{q-1}.
\)
When \(q\) scales with the number of Hamiltonian terms, exact counts-vector stratification is impractical.
In numerical experiments we therefore use statistics motivated by the same decomposition, but necessarily coarsened so as to be practical.

For signed estimators such as TE-PAI, the same variance decomposition applies to
\[
    \widehat\Phi(\underline X)=G(\underline X)\Gamma(\underline X).
\]
However, to interpret the within-stratum residual as physical ordering variance, the statistic should include the variables that determine the weight.
If \(G\) is constant on each stratum \(S=s\), then
\[
    \Var(G\Gamma\mid S=s)
    =
    G_s^2\Var(\Gamma\mid S=s).
\]
If not, the residual mixes ordering fluctuations with quasiprobability-weight fluctuations.

For fixed-angle continuous TE-PAI,
\[
    g_\omega
    =
    G_0(-1)^{N_\pi(\omega)},
    \qquad
    G_0
    =
    \exp\!\left[
        2\overline{\|c\|_1}T\tan(\Delta/2)
    \right],
\]
so the sign is determined by the parity of the number of \(\pi\)-rotation events. 
Practical TE-PAI statistics should therefore fix the quasi-probability sign, since then the remaining residual variance is governed by the noncommutativity of the sampled physical rotations, exactly as in the unweighted setting up to the factor \(G_0^2\).

\subsection{Application to continuous TE-PAI}
\label{sec:tepai_variance_reduction_application}

Continuous TE-PAI is the main setting in which the preceding variance reduction is directly algorithmically useful. 
Its weighted estimator satisfies
\[
    \mathbb E_\omega[g_\omega\mathcal U_\omega]=\mathcal U_T,
\]
so for every observable functional \(L\),
\[
\mathbb E\left|
\frac1{N_s}\sum_{r=1}^{N_s}
L(g_{\omega_r}\mathcal U_{\omega_r})
-
L(\mathcal U_T)
\right|^2
=
\frac1{N_s}
\Var\!\left(L(g_\omega\mathcal U_\omega)\right).
\]
Any variance reduction therefore translates directly into reduced sample complexity. A continuous TE-PAI trajectory is a marked point process
$
    \omega=(t_m,k_m,\ell_m)_{m=1}^{M},
$
with event time \(t_m\), Pauli label \(k_m\), and type label \(\ell_m\in\{0,1\}\) distinguishing \(\Delta\)-rotations from \(\pi\)-rotations.

The full count statistic is too large, so we instead use finite,
observable-adapted projections of the counts vector. The statistics below are
chosen to be:
\begin{enumerate}[leftmargin=*,itemsep=2pt]
    \item sign-aware, so that quasiprobability signs are fixed within strata;
    \item local or observable-adapted, so that they focus on gates visible to the measured observable;
    \item small enough that conditional sampling and finite-budget allocation remain practical.
\end{enumerate}
Thus the counts-vector theory supplies the ideal abelian reference point, while the numerical TE-PAI procedures use scalable approximations to that reference.

\section{Numerical simulation}
\label{sec:numerics}

In this work we do not aim to determine the optimal variance-reduction protocol for unbiased randomized Hamiltonian simulation. This will in general depend on the details of the system being simulated, the resources available to do so, and the purpose and required accuracy of the simulation. Instead, this section demonstrates the efficacy of structure-aware stratified-sampling for two distinct statistics applied to two different Hamiltonians showcasing different simulation methods and parameters.

In \cref{sec:pi-stratification} we outline a statistic centred on the high impact of $\pi$-rotations in continuous TE-PAI as a simple, scalable, and modular statistic capable of large variance reductions. Subsequently, in \cref{sec:obs-stratification} we present a more structure-aware statistic based on the counts-vector accounting for the locality of the observable.

% TODO: add asymptotic 
\subsection{$\pi$-stratification}
\label{sec:pi-stratification}

From the Dyson-series representation of continuous TE-PAI outlined in \cref{sec:dyson} one can construct a stratification statistic based on $\pi$-rotations. These rotation choices are highly unlikely for any given gate, and have an outsized effect on the time-evolution channel when applied. This leads to $K+1$ well-defined strata when considering up to $K$ $\pi$-flips. This statistic is inherently tractable as for a given simulation the probability of encountering an increasing number of $\pi$-rotations decays exponentially. Additionally, it is fully compatible and often synergizes with a number of other statistics that are based on examining high-impact events on an observable's trajectory.

This statistic generates strata that can be subdivided by considering which factors modulate the effect of the $\pi$-rotations. An example is considering the light-cone of the $\pi$-rotation and its intersection with the qubit we are targeting with the simulated observable. This will clearly depend on the locality of the Hamiltonian and observable, and generally back-propagation through the circuit becomes exponentially costly due to the growth of operator support and entanglement. For practicality we fix a generational depth $d$ as the number of generations in which we check the light-cone of the $\pi$-rotation for our target qubit. For simplicity we only track the total number of $\pi$-rotations that occur near ($N$) or far ($F$) from our target qubit(s), which makes the triangle number of $K+1$, $T(K+1)=\sum_{k=1}^{K+1}k$, strata. This distinction will likely have an especially large impact for large system sizes, scaled down factors that spread the impact of the $\pi$-rotation like Hamiltonian and observable locality and simulation time.

Furthermore, the time at which a $\pi$-flip is encountered is likely to have an effect on our observable, as early encounters may manifest differently in the final expectation value than later ones. As such we further subdivide our simulation into $S$ time-steps, assigning one to each $\pi$-rotation. The separate possibility of encountering multiple $\pi$-rotations within the same time-step can be omitted for simplicity as the effect compared to a single encounter may be small and the likelihood can be tuned down by increasing the number of time-steps. Counting the number of possible $\pi$-rotation combinations and their possibility for being near and far from the target qubit gives the following stratum count:

\begin{align}
N(K,S)&=\sum_{\pi\text{-rotations}}(\text{\# combinations})\times (\text{\# localities}) \nonumber \\
N(K,S)&=\sum_{k=0}^{\min(K,S)} \binom{S}{k} \ 2^k
\end{align}
In the special case of $K\geq S$ this reduces nicely to $3^S$ stratum.

The result of these three levels of $\pi$-rotation variance reduction is shown in \cref{fig:stratification-histogram} compared to naive sampling. For this numerical demonstration a $n=12$ qubit spin-chain Hamiltonian was simulated for a duration of $T=2$ using continuous TE-PAI circuits with $\Delta=\pi/2^7$. The experiment considered up to $K=2$ $\pi$-rotations, with near ($N$) or far ($F$) locality and subdivided the stratum into $S=2$ time-steps. In subplot \textbf{(a)} the naive histogram is shown which has a per-circuit estimator standard deviation of $\sigma_{\text{tot}}=1.5$. In \textbf{(b)} differentiating the 0-, 1-, and 2-$\pi$ cases gives a markedly lower variance of $\sigma_{\text{tot}}=0.86$, but apart from $0$-$\pi$ the strata are better approximated by bimodal Gaussian probability functions indicating a substantial amount of residual variance. For \textbf{(c)} subdividing based on the light-cone yields a lower $\sigma_{\text{tot}}=0.80$ where now the stratum are reasonably well approximated by a Gaussian fit. Finally for \textbf{(d)} separating the simulation into two time-steps yields a slight improvement,  $\sigma_{\text{tot}}=0.78$, but with little difference between strata that differ only by time step. This weak effect is likely explained by the short simulation duration of this experiment and the crude number of time-steps considered. The parameters for this experiment were chosen to aid in the visualisation of the different stratum through their histograms and we expect more sophisticated configurations to yield more dramatic results in larger-scale simulations.

\begin{figure*}[t]
    \centering
    \includegraphics[width=1.0\linewidth]{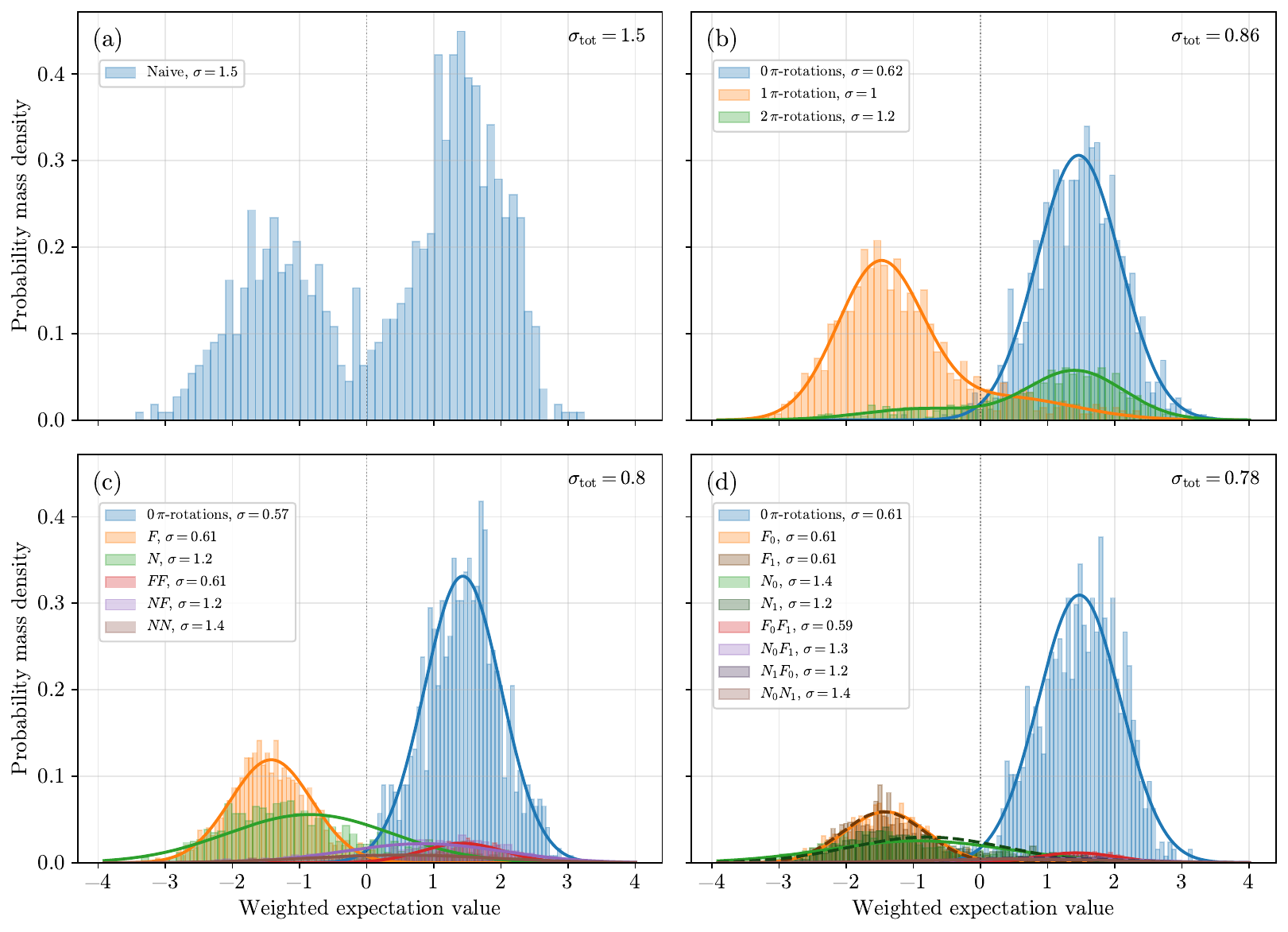}
    \caption{%
    Per-circuit distributions of the (weighted) observable estimator
    $w(\mathcal{C})\,\langle O\rangle_{\mathcal{C}}$ for $\langle X_{0}\rangle$ on a
    $12$-qubit time-dependent spin chain ($J=1$, $T=2$, $\Delta=\pi/128$,
    $N_{\text{s}}=10^{3}$ samples per stratum), compared across four sampling
    strategies. Bars show the empirical histogram of each (sub)stratum, scaled by
    its analytic stratum probability $p_{s}$, so the panel-wide pointwise sum
    approximates the mixture density targeted by the corresponding estimator.
    Solid (and dashed, panel~(d)) curves are parametric fits described below.
    The bold quantity in the top-right of each panel is the resulting per-circuit
    estimator standard deviation under proportional allocation,
    $\sigma_{\mathrm{tot}}=\sqrt{\sum_{s} p_{s}\sigma_{s}^{2}}$, which collapses
    to the raw sample standard deviation in the unstratified case~(a).
    \textbf{(a)}~Naive (unstratified) sampling, raw histogram only.
    \textbf{(b)}~Stratification by total $\pi$-flip count $K$ ($K_{\max}=2$),
    color-coded by $K$, with a bimodal Gaussian fit on each $K\!\geq\!1$ stratum
    and a single Gaussian on $K=0$. \textbf{(c)}~$(K,i)$~observable-locality
    substrata at depth $d=2$, where $i$ counts $\pi$-flips on Hamiltonian terms
    within $d$~hops of the observable; legend symbols denote near (\textsf{N}) and
    far (\textsf{F}) flips, so e.g.\ \textsf{NF} is the $K=2,\,i=1$ substratum.
    \textbf{(d)}~As~(c) but with the additional refinement of $S=2$ time buckets;
    substrata that share the same locality signature $(K,K_A,K_B)$ are drawn in
    the same hue, and substrata that differ only by which time bucket each flip
    occupies are distinguished by a darker shade and dashed fit line, with a
    subscript on each \textsf{N}/\textsf{F} indicating its bucket index.
    A single Gaussian is fitted to every substratum in panels~(c)--(d).
    The successive reduction in $\sigma_{\mathrm{tot}}$ from~(a) to~(d) quantifies
    the variance reduction obtained by progressively finer stratification.}
    \label{fig:stratification-histogram}
\end{figure*}

In \cref{fig:stratification-results} \textbf{(b,d,e)} we demonstrate the results of $\pi$-stratification applied to a $n=30$ spin-chain Hamiltonian with $J(t)=\cos(99\pi t)$ and on-site couplings chosen uniform randomly from $\omega\in[-1,1]$ simulating $\langle X_0 \rangle$. We used $N_s=1000$ continuous TE-PAI circuits with $\Delta=\pi/2^{10}$ for a total simulation-duration of $T=1$, stratified based on $K=2$ accounting for the light-cone of the $\pi$-rotation up to $d=2$ levels and subdividing into $S=2$ time-segments, yielding $3^2=9$ strata. Tracking the strata allocation over time in \textbf{(b)}, we can see that at the start the $K=0$ stratum is dominant but stratum involving more $\pi$-rotations have a probabilistic weight increasing with time. Furthermore for a given $K$ the randomly inserted $\pi$-rotations are more likely to be far from the single target qubit than close to it as judged by only 2 generations of causal influence.

The effect of the stratification is shown in \cref{fig:stratification-results} \textbf{(d)} where we track the ratio of naive versus stratified sampling for two error metrics, the root-mean square error (RMSE) and the standard error (SE), for an equal number of continuous TE-PAI circuits. The RMSE is more susceptible to uninformative noise but reflects the true accuracy of the simulation compared to a deep Trotter simulation whereas the SE more consistently shows the variance of the two sampling protocols. In either case the result is a ratio of $\approx0.2$ at the final time, implying a reduction in the variance via stratification of around $80\%$. This reduced variance is visualised in \textbf{(e)} where we plot the naive probability density distribution alongside its stratified counterpart such that their relative width and shared mean indicates the increased precision achieved when using the stratified sampling.

\begin{figure*}
    \centering
    \includegraphics[width=1.0\linewidth]{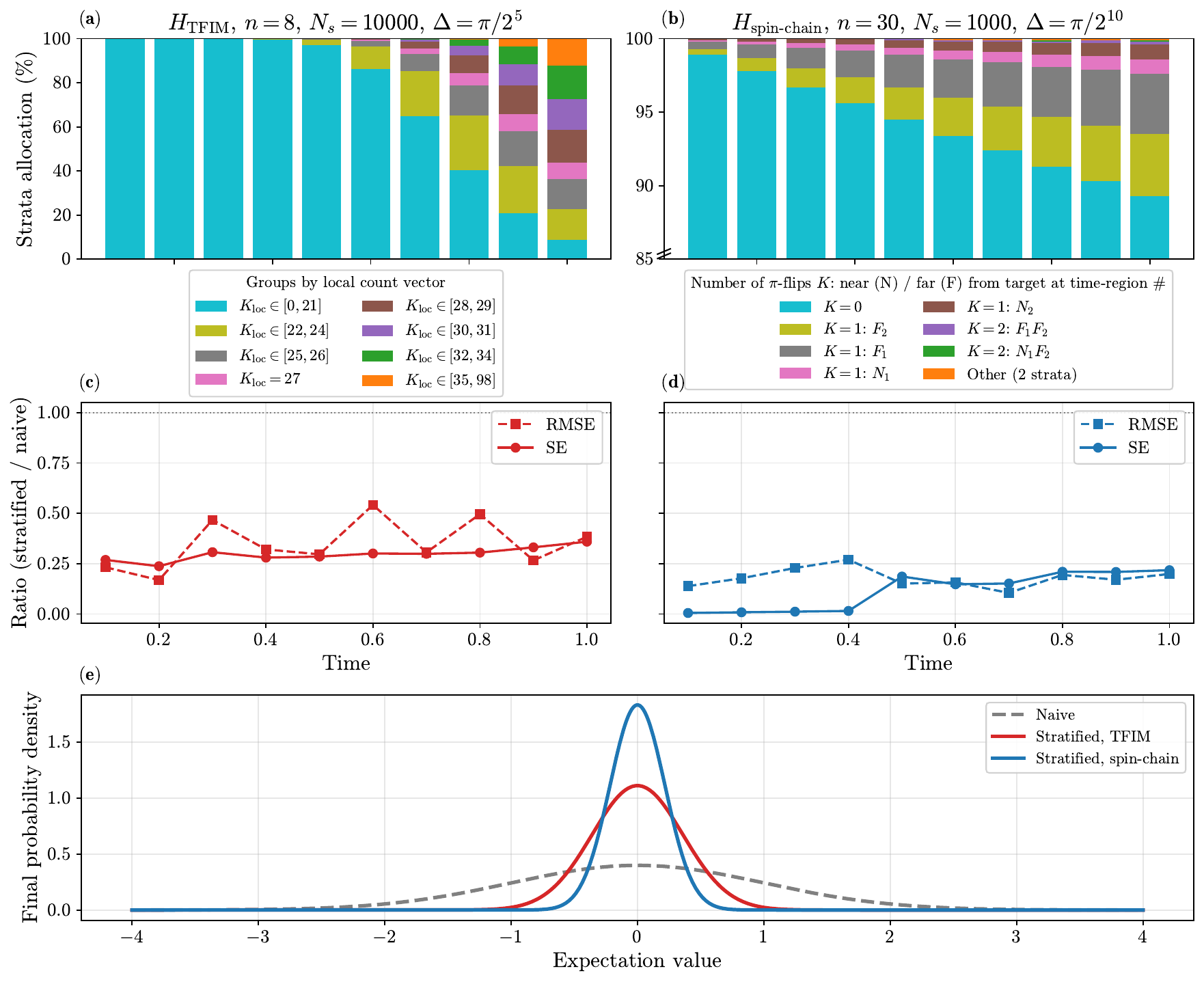}
    \caption{ Stratified versus naive continuous TE-PAI for an 8-qubit transverse-field Ising chain (periodic, $J=0.5$, $h=0.4$, $\Delta=\pi/2^5$, $N_s=10\,000$ circuits) and a 30-qubit time-dependent Heisenberg-like chain ($J(t)=\cos(99\pi t)$, $\Delta=\pi/2^{10}$, $N_s=1000$) run with an MPS tensor-network simulation. The TFIM was stratified based on the local-count-parity of the $\Delta$-rotations involving the target qubit plus the parity of $\pi$-flips on the remaining qubits. $\sim$76 000 strata were binned into eight $K_\text{loc}=\sum_i k_i$ probability octiles for visualisation. The spin-chain was stratified with $\pi$-flip number (up to $K=2$) subdivided based on observable-locality ($N$ / $F$ for near / far from the target qubit as defined by $d=2$ generations backpropagation) and time-segmentation into $s=2$ segments (subscript number).  \textbf{(a,b)}: Stratum allocation percentage as a function of time. \textbf{(c,d)}: Stratified-to-naive ratio of root-mean-square error (RMSE, dashed) and standard error (SE, solid) over time; values below the dotted unit line quantify the precision gain. \textbf{(e)}: Final-time normalized Gaussian probability densities for the naive estimator (gray dashed) and the two stratified estimators (TFIM red, spin-chain blue) plotted against the centered expectation value; the height ratio equals the precision gain $\sigma_\text{naive}/\sigma_\text{strat}$.}
    \label{fig:stratification-results}
\end{figure*}

\subsection{Observable-aware stratification}
\label{sec:obs-stratification}

The framework of \cref{sec:observable_adapted_stratification} suggests that for a fixed-locality observable, much of the variance in the per-circuit estimator can be removed by stratifying only over count coordinates inside a small neighborhood of the observable support. We test this picture on a non-commuting Hamiltonian where the locality bound of \cref{prop:observable_residual_local_counts} is non-trivial: a periodic transverse-field Ising chain
\[
H
=
-J\sum_{j=1}^{n}Z_jZ_{j+1}
-h\sum_{j=1}^{n}X_j,
\]
with \(n=8\), \(J=0.5\), \(h=0.4\), simulated with continuous TE-PAI at \(\Delta=\pi/2^5\) for total time \(T=1\) using \(N_s=10^4\) circuits. The target observable is the single-site Pauli \(X_3\) on the \(\ket{+}^{\otimes n}\) initial state, and a deep Trotter circuit is used to track accuracy.

For this Hamiltonian the visible alphabet of \(X_3\) consists exactly of the three terms whose support touches site~\(3\): the transverse field \(X_3\) and the two adjacent bonds \(Z_2Z_3\), \(Z_3Z_{4}\). We therefore choose the alphabet
\[
\mathcal A_{R,\ell}
=
\{X_3,\,Z_2Z_3,\,Z_3Z_{4}\},
\]
and use the type-2 (\(\Delta\)-rotation) count vector \(\mathbf N_{R,\ell}\) over this retained alphabet as the locality-truncated count statistic. Unlike the commuting Ising example of \cref{sec:observable_adapted_stratification}, the field and bond letters in \(\mathcal A_{R,\ell}\) do not commute, so \(\mathbf N_{R,\ell}\) carries a non-zero local order residual; the field term is included nonetheless because it remains within the operator-spreading light cone of \(X_3\) and contributes to the locality error \(\epsilon_{R,\ell}\) when omitted.

Naively conditioning only on the local count vector leaves a substantial outside-letter residual: type-3 (\(\pi\)-rotation) events on terms outside \(\mathcal A_{R,\ell}\) are rare per term but, summed over the chain, occur frequently and carry the dominant signed weight \(g(W)=(-1)^{N_3(W)}\). To absorb the bulk of this signed contribution at negligible support cost, we augment the local count vector with a single bit, the global parity
\[
\pi_{\text{out}}(W)
:=
\sum_{a\notin\mathcal A_{R,\ell}}N_{a,3}(W)\bmod 2,
\]
and stratify on the joint statistic \((\mathbf N_{R,\ell},\pi_{\text{out}})\). Because the local count distribution factorizes coordinate-wise (Poisson with rate \(2|h_a|T/\sin\Delta\)) and the outside parity has a closed-form Bernoulli law, the joint stratum probabilities are analytically available. Truncating each Poisson tail at total mass \(\epsilon_{\rm trunc}=10^{-8}\) yields a retained support of \(\sim 7.6\times 10^{4}\) strata, controlled entirely by the size of the local neighborhood and the truncation tolerance, rather than by the number of Hamiltonian terms. 

Sampling within each stratum is performed by drawing the local type-2 counts conditionally fixed, the outside type-2 counts from their unconditional Poissons, and the outside type-3 counts from a Poisson conditioned on parity \(\pi_{\rm out}\); times and orderings are then placed uniformly. The combined estimator is unbiased on the retained support, with a residual bias bounded by the omitted-mass tail \(\|g\|_\infty\,(1-\sum_{r}p_r)\), which remains below \(10^{-6}\) at all snapshots considered.

Stratification reduces the per-circuit standard error monotonically over the simulation interval: at \(T=1\) the local count/parity estimator achieves a standard error of \(7.6\times 10^{-2}\), compared to \(1.3\times 10^{-1}\) for naive sampling of the same TE-PAI distribution—a precision gain of \(\sigma_{\rm naive}/\sigma_{\rm strat}\approx 1.7\), corresponding to a roughly threefold reduction in the number of sampled circuits required to reach a fixed accuracy. Earlier snapshots show even larger ratios, with \(\sigma_{\rm strat}/\sigma_{\rm naive}\approx 0.3\) at \(T=0.1\), reflecting the fact that at short times the locality error \(\epsilon_{R,\ell}\) is small and the joint statistic captures essentially all of the per-circuit variability. As \(T\) grows, operator spreading injects variance from terms outside \(\mathcal A_{R,\ell}\) which the locality-truncated statistic cannot resolve, and the gain shrinks; this is exactly the locality-error/order-residual tradeoff predicted by \cref{prop:observable_residual_local_counts}.

These two design choices, the geometric locality criterion for selecting \(\mathcal A_{R,\ell}\) and the augmentation by a single global parity bit, together demonstrate that observable-aware stratification can be implemented with a stratum count that is independent of system size, while retaining a substantial fraction of the variance reduction that a full count-stratification would provide. The strategy is modular: enlarging \(\mathcal A_{R,\ell}\) to a wider light-cone neighborhood reduces \(\epsilon_{R,\ell}\) at polynomial cost in stratum support, and additional global statistics (such as further coarse parity or count moments on outside terms) can be appended to absorb residual outside-letter variance without affecting the local stratification. The result is a stratification design whose cost is set by observable locality and desired precision rather than by the global complexity of the simulated Hamiltonian.

The effect of the locality stratification is shown in \cref{fig:stratification-results} \textbf{(c)} where we track the same RMSE and SE ratios for an equal number of continuous TE-PAI circuits. Panel \textbf{(a)} shows the corresponding stratum allocation over time: at early times nearly all sampling effort concentrates within the lowest local-count octile $K_\text{loc}\in[0,21]$, where the local letters are sparse and the typical trajectory carries few $\Delta$-rotations on the retained alphabet; as the simulation progresses the allocation distributes across progressively higher octiles, reflecting the growing local activity captured by the locality-truncated count vector. In either error metric the ratio is observed to stay below $\approx 0.4$ at the final time, implying a reduction in the variance via stratification of around $60\%$. This reduced variance is visualised in \textbf{(e)} where the stratified TFIM probability density (red) is plotted alongside its naive counterpart (gray dashed), the narrower red curve at shared mean indicating the increased precision achieved through observable-aware stratification.

\label{sec:obs-stratification}

% \begin{figure}
%     \centering
%     \includegraphics[width=1.0\linewidth]{paper_figures/strata_distribution.png}
%     \caption{Histogram of weighted expected values at simulation time $T=1$ for a TFIM dynamics simulation on $n=8$ qubits with $J=1$, $h=0.8$, PAI parameter $\Delta = \pi/32$, and taking $N = 2048$ circuit samples overall. The stratification used was a count of total $\pi$ flips in the full time span.}
%     \label{fig:placeholder}
% \end{figure}

\section{Discussion}
There remains substantial freedom in the choice of stratification statistic.
The optimal statistic is generally problem dependent, and can be tailored to the
Hamiltonian structure, the target observable, the initial state, or the dominant
source of sampling fluctuation. Optimizing this choice is an important direction
for further variance reduction. This perspective is consistent with a broader trend in Trotter-based
Hamiltonian simulation, where problem-specific structure is exploited to reduce
simulation cost. Examples include low-rank factorizations of electronic-structure
Hamiltonians~\cite{motta2021lowrank}, time-dependent product formulas that
exploit separated energy scales~\cite{bosse2025efficient}, Trotterization for
the Heisenberg model~\cite{yang2025symmetryaware}, local-symmetry-based
Hamiltonian partitioning for Trotter decompositions~\cite{negishi2026beyond},
and improved Trotter efficiency for low-energy initial states
~\cite{lowEnergyTrotter2025}. These results suggest that both
Hamiltonian structure and the properties of the initial state can provide useful
information not only for constructing shorter or more accurate deterministic
product formulas, but also for designing more effective statistical descriptors
of randomized trajectories. In continuous TE-PAI, such structure-aware and
state-aware descriptors could serve as stratification statistics that separate
trajectories according to their expected contribution to sampling fluctuations,
thereby improving variance reduction without modifying the underlying unbiased
estimator.

Machine learning and adaptive sampling provide natural routes toward a more
systematic design of such stratification rules. In a pilot stage, trajectory
data could be used to learn features that are predictive of large estimator
fluctuations, thereby defining strata adapted to the Hamiltonian, observable,
and initial state. The allocation of samples could then be updated adaptively
according to the estimated within-stratum variances, in the spirit of optimal
allocation in stratified sampling and adaptive stratified Monte Carlo methods.
Such an approach would allow the stratification rule and the sample allocation
to be refined as data are collected, while preserving the unbiasedness of the
underlying randomized estimator.

We also remark that the stratification strategy is not specific to continuous
TE-PAI. It can be applied to randomized or quasiprobability-based simulation
methods whose samples admit classically computable trajectory features,
including randomized product formulas, qDRIFT-type algorithms, circuit-cutting
approaches to clustered Hamiltonian simulation~\cite{harrow2025optimal}. Thus, stratification provides a
general classical variance-reduction layer that does not modify the sampled
quantum circuits. Also, a complementary direction is to reduce gate-level cost. Clifford+T synthesis methods such as Ref.~\cite{bothe2026more} could be applied to the \(\Delta\)-angle rotations in TE-PAI circuits, lowering fault-tolerant gate cost independently of the sampling reduction from stratification.

Overall, the present work demonstrates that even simple stratification
strategies can substantially reduce the sampling cost of continuous TE-PAI.
This provides a foundation for more systematic variance-reduction techniques in
randomized Hamiltonian simulation, where future improvements may come from
structure-aware statistics, adaptive sample allocation, and fault-tolerant gate synthesis.

\begin{acknowledgments}
    C.K. is supported by JSPS KAKENHI, Grant Number JP26K17052, JST ASPIRE Japan, Grant Number JPMJAP2319 and JST PRESTO Japan, Grant Number JPMJPR25F1. F.H. is thankful for continued support from the Oxford Mathematical Institute Scholarship with Jane Street Graduate Scholarship. J.W.D. thanks the Clarendon Fund Scholarship, University of Oxford. For the purpose of Open Access, the author has applied a CC BY public copyright licence to any Author Accepted Manuscript version arising from this submission.
\end{acknowledgments}

\section*{Data Availability}
The simulation code used in this work is available at \url{https://github.com/fredrikhassel/CTEPAI}.

%----------
% Comment out because we want to see title when writing the manuscript
% \bibliographystyle{apsrev4-2}
\bibliography{ref} 

\appendix
\crefalias{section}{appendix}
\section{Weak limit convergence of TE-PAI}\label{app:weak:limit}
\subsection{Proof of \cref{lemma:weak_limit}}
\begin{proof}
For \(\omega=(t_m,k_m,\ell_m)_{m=1}^{M}\in\Omega\), we write
\(\omega_m=(t_m,k_m,\ell_m)\in E:=[0,T]\times[L]\times\{0,1\}\).
The probability density function is given by
\begin{equation*}
P_{\infty } (\omega )=e^{-\Lambda }\prod _{m=1}^{M}\frac{\Lambda | c_{k_{m}}( t_{m})| }{T\overline{\| c\| _{1}}} p_{\Delta }^{1-\ell_{m}} p_{\pi }^{\ell_{m}}.
\end{equation*}
We use the Lévy Continuity Theorem to prove the weak limit theorem by showing the equivalent statement of pointwise convergence of the Laplace functionals with respect to \(P_N(\omega)\) and \(P_\infty(\omega)\). The Laplace functional for a test function $h( t,k,\ell) \geqslant 0$ is defined as
\[L_{\infty} (h)  =\int _{\Omega }\exp\left( -\sum _{m=1}^{|\omega |} h( \omega _{m})\right) P_{\infty } (\omega )d\omega.\]
Conditioning on the event count \(M\), and using the ordered-time density
\(M!\prod_{m=1}^{M}f(\omega_m)\) on \(\Omega_M\), we obtain
\begin{align*}
L_{\infty}(h)
&=
\sum_{M=0}^{\infty}
\int_{\Omega_M}
\exp\left(
-\sum_{m=1}^{M} h(\omega_m)
\right)
P_{\infty}(\omega)\,d\omega \\
&=
\sum_{M=0}^{\infty}
\frac{e^{-\Lambda}\Lambda^M}{M!}
\int_{\Omega_M}
M!\prod_{m=1}^{M}
\left[
e^{-h(\omega_m)}f(\omega_m)
\right]d\omega,
\end{align*}
where \(f(t,k,\ell)=\frac{|c_k(t)|}{T\overline{\|c\|_1}}p_\Delta^{1-\ell}p_\pi^\ell\). Using the symmetry of the ordered-time density and Fubini's theorem, we get
\begin{align*}
L_{\infty}(h)
&=
\sum_{M=0}^{\infty}
\frac{e^{-\Lambda}\Lambda^M}{M!}
\prod_{m=1}^{M}
\int_E e^{-h(\omega_m)} f(\omega_m)\,d\omega_m \\
&=
e^{-\Lambda}
\sum_{M=0}^{\infty}
\frac{\Lambda^M}{M!}
\left(
\int_E e^{-h(x)} f(x)\,dx
\right)^M .
\end{align*}
Applying the Taylor expansion \(e^z=\sum_{M=0}^{\infty}z^M/M!\), we obtain
\begin{align*}
L_{\infty}(h)
&=
e^{-\Lambda}
\exp\left(
\Lambda\int_E e^{-h(x)}f(x)\,dx
\right) \\
&=
\exp\left(
\Lambda\int_E\left(e^{-h(x)}-1\right)f(x)\,dx
\right).
\end{align*}
Here, note that we have used $\int _{E} f( x) dx=1$.

Next, we consider the Laplace functional of the finite-step TE-PAI distribution. Let \(\Omega^{(N)}=\bigsqcup_{M=0}^{NL}\Omega_M^{(N)}\subset\Omega\), where \(\Omega_M^{(N)}:=\{(t_{j_m},k_m,\ell_m)_{m=1}^{M}:(j_1,k_1)<\cdots<(j_M,k_M)\ \text{lexicographically},\ j_m\in[N],\ k_m\in[L],\ \ell_m\in\{0,1\}\}\). Then the Laplace functional can be written as
\begin{align*}
L_{N} (h) & =\int _{\Omega^{( N)}}\exp\left( -\sum _{m=1}^{|\omega |} h( t_{m} ,k_{m} ,\ell_{m})\right) P_{N} (\omega )d\omega \\
\end{align*}
From the definition of the probability $P_{N} (\omega )$ this can be rewritten as:
\begin{align*}
&\prod _{j=1}^{N}\prod _{k=1}^{L}\Bigl( p_{1} (\theta _{kj} )+p_{2} (\theta _{kj} )e^{-h(t_{j} ,k,0)} +p_{3} (\theta _{kj} )e^{-h(t_{j} ,k,1)}\Bigr)
\end{align*}
and then as
$\prod _{j=1}^{N}\prod _{k=1}^{L}\Bigl( 1+x_{kj}\Bigr)$
where 
\[x_{kj} =p_{2} (\theta _{kj} )\left( e^{-h(t_{j} ,k,0)} -1\right) +p_{3} (\theta _{kj} )\left( e^{-h(t_{j} ,k,1)} -1\right).\]
Since \(\theta_{kj}=O(N^{-1})\) and \(p_2(\theta),p_3(\theta)\) vanish linearly at \(\theta=0\), we have \(p_2(\theta_{kj}),p_3(\theta_{kj})=O(N^{-1})\), and hence \(x_{kj}=O(N^{-1})\). Taking the log, we get 

\begin{align*}
\log L_N(h)
&=
\sum_{j=1}^{N}\sum_{k=1}^{L}\log(1+x_{kj}) \\
&=
\sum_{j=1}^{N}\sum_{k=1}^{L}x_{kj}
+O(N^{-1}).
\end{align*}
Using the first-order expansion of the finite-step TE-PAI probabilities and
\(\Lambda f(t,k,\ell)=G_\Delta |c_k(t)|p_\Delta^{1-\ell}p_\pi^\ell\), this becomes
\[
\Lambda
\sum_{\ell=0}^{1}\sum_{j=1}^{N}\sum_{k=1}^{L}
\frac{T}{N}
f(t_j,k,\ell)
\left(e^{-h(t_j,k,\ell)}-1\right)
+O(N^{-1}).
\]
Here we used
\begin{align*}
p_2(\theta_{kj})
&=
G_\Delta |c_k(t_j)|\frac{T}{N}p_\Delta
+O(N^{-2}),\\
p_3(\theta_{kj})
&=
G_\Delta |c_k(t_j)|\frac{T}{N}p_\pi
+O(N^{-2}),\\
1-p_1(\theta_{kj})
&=
G_\Delta |c_k(t_j)|\frac{T}{N}
+O(N^{-2}).
\end{align*}
Taking \(N\to\infty\), we obtain
\begin{align*}
&\lim_{N\to\infty}\log L_N(h)\\
&=
G_\Delta
\sum_{\ell=0}^{1}\sum_{k=1}^{L}
p_\Delta^{1-\ell}p_\pi^\ell
\int_0^T |c_k(t)|
\left(e^{-h(t,k,\ell)}-1\right)dt \\
&=
\Lambda\int_E
\left(e^{-h(x)}-1\right)f(x)\,dx .
\end{align*}
Therefore,
\[
\lim_{N\to\infty}L_N(h)
=
\exp\left[
\Lambda\int_E
\left(e^{-h(x)}-1\right)f(x)\,dx
\right]
=
L_\infty(h).
\]
This completes the proof.
\end{proof}

\subsection{Proof of \cref{thm:unbiased}}
In this subsection, we prove the unbiasedness theorem, \cref{thm:unbiased}, using \cref{lemma:weak_limit}.

\begin{proof}
Let \(\|\cdot\|\) denote a fixed superoperator norm for which
\(\|\mathcal U_\omega\|\leq 1\). We first show that
\[
\lim_{N\to\infty}
\mathbb E_{P_N}\left[g_\omega^{(N)}\mathcal U_\omega\right]
=
\mathbb E_{P_\infty}\left[g_\omega\mathcal U_\omega\right].
\]
Indeed,
\begin{align*}
&\left\|
\mathbb E_{P_N}\left[g_\omega^{(N)}\mathcal U_\omega\right]
-
\mathbb E_{P_\infty}\left[g_\omega\mathcal U_\omega\right]
\right\| \\
&\leq
\left\|
\mathbb E_{P_N}\left[g_\omega^{(N)}\mathcal U_\omega\right]
-
\mathbb E_{P_N}\left[g_\omega\mathcal U_\omega\right]
\right\| \\
&\quad+
\left\|
\mathbb E_{P_N}\left[g_\omega\mathcal U_\omega\right]
-
\mathbb E_{P_\infty}\left[g_\omega\mathcal U_\omega\right]
\right\| \\
&=: E_N+F_N .
\end{align*}
The first term satisfies
\[
E_N
\leq
\mathbb E_{P_N}\left[
\left|g_\omega^{(N)}-g_\omega\right|
\|\mathcal U_\omega\|
\right]
\leq
\mathbb E_{P_N}\left[
\left|g_\omega^{(N)}-g_\omega\right|
\right].
\]
Since \(g_\omega^{(N)}=g_\omega+O(N^{-1})\) uniformly on the finite-step trajectory space, we have \(E_N\to0\).

For the second term, \cref{lemma:weak_limit} gives \(P_N\Rightarrow P_\infty\). Since the map $\omega\mapsto g_\omega\mathcal U_\omega$ is bounded and continuous with respect to the trajectory topology used in \cref{lemma:weak_limit}, it follows that \(F_N\to0\). Hence
\[
\lim_{N\to\infty}
\mathbb E_{P_N}\left[g_\omega^{(N)}\mathcal U_\omega\right]
=
\mathbb E_{P_\infty}\left[g_\omega\mathcal U_\omega\right].
\]
By the finite-step TE-PAI identity,
\[
\mathbb E_{P_N}\left[g_\omega^{(N)}\mathcal U_\omega\right]
=
\mathcal U_N .
\]
Taking \(N\to\infty\) and using \(\mathcal U_N\to\mathcal U\), we obtain
\[
\mathbb E_{P_\infty}\left[g_\omega\mathcal U_\omega\right]
=
\lim_{N\to\infty}
\mathbb E_{P_N}\left[g_\omega^{(N)}\mathcal U_\omega\right]
=
\lim_{N\to\infty}\mathcal U_N
=
\mathcal U .
\]
This proves the unbiasedness of the continuous TE-PAI estimator.
\end{proof}
\section{Continuous TE-PAI from Dyson series}
\label{app:dyson_derivation_continuous_tepai}
We now derive the continuous TE-PAI estimator from the Dyson expansion. This derivation clarifies the role of the Poisson rate, the signed \(\pi\)-rotation events, and the exponential weight in \eqref{eq:weight}. Let \(\mathcal U(t)\) denote the exact time-evolution channel. Define the Liouvillian
\begin{equation}\label{eq:Liouvillian}
\mathcal L(t)(\rho)
:=
-i[H(t),\rho].
\end{equation}
Since $ H(t)=\sum_{k=1}^L c_k(t)P_k,$
we have
\[
\mathcal L(t)=\sum_{k=1}^L c_k(t)\mathcal L_k,
\qquad
\mathcal L_k(\rho):=-i[P_k,\rho].
\]
The exact channel is
\[
\mathcal U(T)
=
\mathcal T\exp\left(
\int_0^T \mathcal L(t)\,dt
\right).
\]
Its Dyson expansion is
\[
\mathcal U(T)
=
\sum_{M=0}^{\infty}
\int_{0\le t_1\le\cdots\le t_M\le T}
dt_1\cdots dt_M\,
\mathcal L(t_M)\cdots\mathcal L(t_1).
\]
Substituting Eq. (\cref{eq:Liouvillian}) gives
\begin{align*}
\mathcal U(T)
&=
\sum_{M=0}^{\infty}
\sum_{k_1,\ldots,k_M}
\int_{0\le t_1\le\cdots\le t_M\le T}
dt_1\cdots dt_M\\
&\qquad\times
\left(
\prod_{m=1}^M c_{k_m}(t_m)
\right)
\mathcal L_{k_M}\cdots\mathcal L_{k_1}.
\end{align*}
Thus a Dyson trajectory is an ordered sequence of generator insertions \((t_1,k_1),\ldots,(t_M,k_M)\), with ordered product \(\mathcal L_{k_M}\cdots\mathcal L_{k_1}\). The generator \(\mathcal L_k\) is not itself a physical quantum channel, but its flow is: for a Pauli \(P\), \(e^{\theta\mathcal L_P/2}=\mathcal R_P(\theta)\), where \(\mathcal R_P(\theta)(\rho)=e^{-i\theta P/2}\rho e^{i\theta P/2}\). Since \(P^2=I\), the Taylor expansion of the superoperator flow \(e^{\theta\mathcal L_P/2}\) closes exactly, giving
\[
e^{\theta\mathcal L_P/2}
=
\mathcal R_P(\theta)
=
\cos^2\frac{\theta}{2}\mathcal I
+
\sin^2\frac{\theta}{2}\mathcal R_P(\pi)
+
\frac{\sin\theta}{2}\mathcal L_P.
\]

The identity above can first be rearranged to isolate the generator. For any Pauli \(P\),
\begin{align*}
\pm \mathcal{L}_{P}
&=
\frac{2}{\sin \Delta}
\left[
\mathcal{R}_{P}(\pm \Delta)
-\sin^{2}\frac{\Delta}{2}\mathcal{R}_{P}(\pi)
-\cos^{2}\frac{\Delta}{2}\mathcal{I}
\right] \\
&=
-\cot\frac{\Delta}{2}\,\mathcal{I}
+
\frac{3-\cos \Delta}{\sin \Delta}
\left[
p_{\Delta}\mathcal{R}_{P}(\pm \Delta)
-
p_{\pi}\mathcal{R}_{P}(\pi)
\right].
\end{align*}

Summing over \(k\) gives the decomposition of the full Liouvillian
\[
\mathcal L(t)
=
-\|c(t)\|_1\cot\frac{\Delta}{2}\mathcal I
+
B(t),
\]
where
\[
B(t)
:=
G_\Delta
\sum_{k=1}^L |c_k(t)|
\left[
p_\Delta \mathcal R_{P_k}(\operatorname{sgn}(c_k(t))\Delta)
-
p_\pi \mathcal R_{P_k}(\pi)
\right].
\]
Hence
\[
\mathcal U(T)
=
\exp\left[
-\overline{\|c\|_1}T
\cot\frac{\Delta}{2}
\right]
\mathcal T\exp\left[
\int_0^T B(t)dt
\right].
\]
Let
$
\omega=(t_m,k_m,\ell_m)_{m=1}^M, 0\le t_1\le\cdots\le t_M\le T,
$
where \(\ell_m=0\) denotes a \(\Delta\)-rotation and \(\ell_m=1\) denotes a \(\pi\)-rotation. Define
\[
\mathcal R_{k,t,0}:=\mathcal R_{P_k}(\operatorname{sgn}(c_k(t))\Delta),
\qquad
\mathcal R_{k,t,1}:=\mathcal R_{P_k}(\pi),
\]
and
\[
\mathcal U_\omega
:=
\mathcal R_{k_M,t_M,\ell_M}\circ\cdots\circ
\mathcal R_{k_1,t_1,\ell_1}.
\]
The positive coefficient measure induced by the Dyson expansion of \(B(t)\) is
\[
d\mu(\omega)
:=
\prod_{m=1}^M
G_\Delta |c_{k_m}(t_m)|
p_\Delta^{1-\ell_m}
p_\pi^{\ell_m}\,dt_m .
\]
Thus
\[
\mathcal T e^{\int_0^T B(t)dt}
=
\int_{\Omega}
d\mu(\omega)\,
\Big(\prod_{m=1}^M(-1)^{\ell_m}\Big)
\mathcal U_\omega .
\]
Here \(\Omega=\bigsqcup_{M\ge0}\Omega_M\) is the space of ordered marked trajectories
\(\omega=(t_m,k_m,\ell_m)_{m=1}^M\).
The measure \(d\mu\) has total mass
\[
\mu(\Omega)=e^\Lambda,
\qquad
\Lambda=G_\Delta\,\overline{\|c\|_1}T.
\]
We therefore define the normalized trajectory measure
\[
dP_\infty(\omega):=e^{-\Lambda}d\mu(\omega).
\]
Then
\[
\mathcal T e^{\int_0^T B(t)dt}
=
e^\Lambda
\int_\Omega
dP_\infty(\omega)\,
\Big(\prod_{m=1}^M(-1)^{\ell_m}\Big)
\mathcal U_\omega .
\]
Combining this with the scalar prefactor gives
\[
\mathcal U(T)
=
\int_\Omega
dP_\infty(\omega)\,
g_\omega\,\mathcal U_\omega,
\]
where we have used
\[
\Lambda-\overline{\|c\|_1}T\cot\frac{\Delta}{2}=2\overline{\|c\|_1}T\tan\frac{\Delta}{2}.
\]

The normalized measure \(P_\infty\) is precisely a marked Poisson law. Indeed, using $\Lambda=G_\Delta T\overline{\|c\|_1},$
this can be rewritten as
\[
dP_\infty(\omega)
=
e^{-\Lambda}\frac{\Lambda^M}{M!}\,
M!
\prod_{m=1}^M
\left[
\frac{|c_{k_m}(t_m)|}{T\overline{\|c\|_1}}
p_\Delta^{1-\ell_m}
p_\pi^{\ell_m}
dt_m
\right].
\]
Thus \(M\sim \mathrm{Poisson}(\Lambda)\). Conditional on \(M\), the ordered event times are obtained by drawing \(M\) i.i.d. times with density
\[
\frac{\|c(t)\|_1}{T\overline{\|c\|_1}}
\]
and sorting them. Given an event time \(t_m\), the Pauli index and event type are drawn according to
\begin{align*}
&\Pr(k_m=k\mid t_m)
=
\frac{|c_k(t_m)|}{\|c(t_m)\|_1},\\
&\Pr(\ell_m=0)
=
p_\Delta,\quad
\Pr(\ell_m=1)
=
p_\pi.
\end{align*}
Combining the normalized trajectory law with the scalar prefactor, we obtain
\[
\mathcal U(T)
=
\int_\Omega dP_\infty(\omega)\,
g_\omega\,\mathcal U_\omega.
\]
Therefore,
\[
\mathbb E_{\omega\sim P_\infty}
\left[
g_\omega\,\mathcal U_\omega
\right]
=
\mathcal U(T).
\]
This proves the unbiasedness of the continuous TE-PAI estimator.

\section{Variance Theory}
\label{app:main_theory}
\subsection{Commutation-class statistics}

We first identify an ideal statistic with zero residual risk. The idea is to group together all trajectories that realize the same channel because they differ only by adjacent swaps of channel-commuting letters. Within each such stratum, the channel value is constant, and hence the within-stratum variance is zero.

\begin{definition}[Local swap defect]
\label{def:local_swap_defect}
For two letters \(a,b\in\mathcal A\), define the local swap defect
\[
\Delta_{ab}
:=
\left\|
\Gamma_a\circ\Gamma_b-\Gamma_b\circ\Gamma_a
\right\|_{HS}.
\]
\end{definition}

The defect \(\Delta_{ab}\) measures the channel-level effect of exchanging the order of two adjacent letters. 
It vanishes precisely when the corresponding single-letter channels commute under composition:
\[
\Delta_{ab}=0
\quad\Longleftrightarrow\quad
\Gamma_a\circ\Gamma_b=\Gamma_b\circ\Gamma_a.
\]
It follows that
\[
\Delta_{aa}=0.
\]

For unitary trajectories, the local defect has an exact expression in terms of a group commutator.

\begin{lemma}[Exact local channel distance for unitary trajectories]
\label{lemma:exact_local_channel_distance}
Suppose
\[
\Gamma_a(\rho)=U_a\rho U_a^\dagger,
\]
where \(U_a\) is unitary for each \(a\in\mathcal A\). Then, for any
\(a,b\in\mathcal A\),
\[
\Delta_{ab}^2
=
2\left(
d^2-
\left|\Tr(U_a^\dagger U_b^\dagger U_aU_b)\right|^2
\right).
\]
In particular, the expression is independent of the phases chosen for the implementing unitaries, and is bounded by $0\le \Delta_{ab}^2 \le 2d^2.$
\end{lemma}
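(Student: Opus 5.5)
The plan is to reduce the Hilbert--Schmidt norm of a difference of two unitary channels to traces of the underlying unitaries. Write \(V:=U_aU_b\) and \(W:=U_bU_a\). Then \(\Gamma_a\circ\Gamma_b\) and \(\Gamma_b\circ\Gamma_a\) are the conjugation channels \(\mathcal V(\rho)=V\rho V^\dagger\) and \(\mathcal W(\rho)=W\rho W^\dagger\). Note that under the paper's convention \(\Gamma(\underline x)=\Gamma_{x_m}\circ\cdots\circ\Gamma_{x_1}\), the ordering is immaterial for the final formula. The key device is the vectorization isomorphism \(\mathrm{vec}(X\rho Y)=(X\otimes Y^{T})\mathrm{vec}(\rho)\). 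Under it, a conjugation channel \(\rho\mapsto V\rho V^\dagger\) is represented by the \(d^2\times d^2\) matrix \(V\otimes\overline V\), and the Hilbert--Schmidt norm on \(\mathsf S\) becomes the Frobenius norm of this representation.

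Expanding the square, I will obtain
\[
\|\mathcal V-\mathcal W\|_{HS}^2=\|V\otimes\overline V\|_2^2+\|W\otimes\overline W\|_2^2-2\,\mathrm{Re}\,\Tr\!\bigl[(V\otimes\overline V)^\dagger(W\otimes\overline W)\bigr].
\]
Each of the first two terms equals \(\Tr(I_{d^2})=d^2\), by unitarity. For the cross term, the tensor product factorizes:
\[
\Tr\!\bigl[(V^\dagger W)\otimes\overline{V^\dagger W}\bigr]=\Tr(V^\dagger W)\,\overline{\Tr(V^\dagger W)}=|\Tr(V^\dagger W)|^2 .
\]
This is already real, so
\[
\Delta_{ab}^2=2\bigl(d^2-|\Tr(V^\dagger W)|^2\bigr).
\]
Finally, \(V^\dagger W=U_b^\dagger U_a^\dagger U_bU_a\), and by cyclicity of the trace \(\Tr(U_b^\dagger U_a^\dagger U_bU_a)=\Tr(U_a^\dagger U_b^\dagger U_aU_b)\). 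This matches the stated expression. The same computation with \(V=U_a\) and \(W=U_b\) gives the substitution formula for \(\Sigma_{ab}\) quoted in the main text.

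For the remaining claims, consider phase independence first. Replacing \(U_a\) by \(e^{i\alpha}U_a\) and \(U_b\) by \(e^{i\beta}U_b\) multiplies the group commutator \(U_a^\dagger U_b^\dagger U_aU_b\) by \(e^{-i\alpha}e^{-i\beta}e^{i\alpha}e^{i\beta}=1\), so the trace is unchanged. Alternatively, the channels themselves do not change under such a rephasing. For the bounds, \(C:=U_a^\dagger U_b^\dagger U_aU_b\) is unitary, so \(|\Tr C|\le\sum_i|\lambda_i|=d\), which gives \(\Delta_{ab}^2\ge0\). Also \(|\Tr C|^2\ge0\) gives \(\Delta_{ab}^2\le 2d^2\).

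The only step needing care is fixing the convention for the Hilbert--Schmidt norm on \(\mathsf S\). I will take it to be the norm induced by the inner product \(\langle\Phi,\Psi\rangle=\sum_{i,j}\langle\Phi(E_{ij}),\Psi(E_{ij})\rangle_{HS}\) over matrix units \(E_{ij}\). This coincides with the Frobenius norm of the vectorized representation, and I will note this identification explicitly. With that in place, everything else is routine trace algebra.
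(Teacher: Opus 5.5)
The paper states this lemma without proof, so there is no argument of its own to compare against. Your Liouville-representation computation is the natural way to supply one. It is correct in substance, including the phase invariance and the bounds \(0\le\Delta_{ab}^2\le 2d^2\). Fixing the unnormalized Hilbert--Schmidt convention explicitly is the right move, since the stated \(d^2\) depends on it.

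One step is wrong as written, though the error is harmless. \(\Tr(U_b^\dagger U_a^\dagger U_bU_a)\) and \(\Tr(U_a^\dagger U_b^\dagger U_aU_b)\) are not related by cyclicity, because the two words lie in different cyclic classes. With \(V=U_aU_b\) and \(W=U_bU_a\), the first is \(\Tr(V^\dagger W)\) and the second is \(\Tr(W^\dagger V)=\overline{\Tr(V^\dagger W)}\). These genuinely differ for \(d\ge 3\). For example, take the qutrit shift and clock matrices with \(ZX=\omega XZ\), \(\omega=e^{2\pi i/3}\), and set \(U_a=X\), \(U_b=Z\); the two traces are then \(3\omega\) and \(3\bar\omega\).

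Since only \(|\Tr(\cdot)|^2\) enters, you can replace the cyclicity claim by complex conjugation. Alternatively, choose \(V=U_bU_a\) and \(W=U_aU_b\) from the outset, which the symmetry of \(\|\mathcal V-\mathcal W\|_{HS}\) permits; then \(V^\dagger W=U_a^\dagger U_b^\dagger U_aU_b\) directly, and the stated formula follows.
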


Define the set of locally channel-invisible pairs
\[
I_0
:=
\left\{
\{a,b\}\subseteq\mathcal A:
a\neq b,\ \Delta_{ab}=0
\right\}.
\]
So \(I_0\) consists of the distinct letter pairs whose unitary channels commute.

We say that two words \(\underline x,\underline y\in\mathcal A^m\) are \(I_0\)-equivalent if one can be transformed into the other by a finite sequence of adjacent swaps
\[
ab\leftrightarrow ba
\]
using only pairs \(\{a,b\}\in I_0\), and we write
\[
\underline x\sim_{I_0}\underline y
\]
and denote the equivalence class of \(\underline x\) by
\[
[\underline x]_{I_0}.
\]
The associated commutation-class statistic is
\[
T_{I_0}(\underline X):=[\underline X]_{I_0}.
\]
Equivalently, \(T_{I_0}\) is the quotient of trajectory space generated by all channel-invisible adjacent commutations. 
In formal-language terminology, this is the trace-monoid quotient associated with the independence relation \(I_0\).

\begin{proposition}[Commutation classes have zero residual risk]
\label{prop:commutation_class_statistic_exact}
Conditioning on \(T_{I_0}\) leaves no residual risk:
\[
\mathcal R(\Gamma(\underline X),T_{I_0})=0.
\]
\end{proposition}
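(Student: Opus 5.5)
The plan is to show that the realized channel $\Gamma(\underline X)$ is a function of the commutation class $T_{I_0}(\underline X)$ alone. Then the conditional variance given $T_{I_0}$ vanishes identically, and hence so does its expectation $\mathcal R(\Gamma(\underline X),T_{I_0})$.

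\textbf{Step 1 (single swap).} If $\underline x=uabv$ and $\underline y=ubav$ with $\{a,b\}\in I_0$, then, as in the swap identity of the main text,
\[
\Gamma(\underline y)-\Gamma(\underline x)=\Gamma_v\circ(\Gamma_b\circ\Gamma_a-\Gamma_a\circ\Gamma_b)\circ\Gamma_u .
\]
Its Hilbert--Schmidt norm is bounded by $\|\Gamma_v\|\,\Delta_{ab}\,\|\Gamma_u\|$. We can avoid the isometry argument entirely, because $\Delta_{ab}=0$ already forces $\Gamma_a\circ\Gamma_b=\Gamma_b\circ\Gamma_a$ as superoperators. Hence the middle factor is exactly zero, for arbitrary (not necessarily unitary) letter channels, and $\Gamma(\underline x)=\Gamma(\underline y)$.

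\textbf{Step 2 (whole class).} By definition, $\underline x\sim_{I_0}\underline y$ means $\underline y$ is reached from $\underline x$ by a finite chain of such adjacent swaps. Induction on the chain length with Step 1 gives $\Gamma(\underline x)=\Gamma(\underline y)$. Note that swaps preserve word length, so each class lies in a single $\mathcal A^m$. Therefore $\Gamma$ factors as $\Gamma=\widetilde\Gamma\circ T_{I_0}$ for a well-defined map $\widetilde\Gamma$ on classes.

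\textbf{Step 3 (conditional variance).} Since $\widehat\Phi=\Gamma(\underline X)=\widetilde\Gamma(T_{I_0}(\underline X))$ is $\sigma(T_{I_0})$-measurable, we have $\mathbb E[\widehat\Phi\mid T_{I_0}]=\widehat\Phi$ almost surely. Consequently
\[
\Var(\widehat\Phi\mid T_{I_0})=\mathbb E\bigl[\|\widehat\Phi-\mathbb E[\widehat\Phi\mid T_{I_0}]\|_{HS}^2\bigm| T_{I_0}\bigr]=0,
\]
and taking expectations gives the claim.

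The main technical point is measurability. The statistic takes countably many values, since $\mathcal W$ is countable and each class is finite. Conditioning is therefore elementary: on each class with positive probability, the conditional law is supported on words sharing one channel value. The finite-second-moment requirement is automatic, because unitary channels have bounded HS norm ($\sqrt d\cdot$ const). No real obstacle is expected beyond stating the factorization cleanly.
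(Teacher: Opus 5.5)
Your proposal is correct and follows the paper's argument. An allowed adjacent swap of a pair with $\Delta_{ab}=0$ leaves $\Gamma$ unchanged, so $\Gamma$ is constant on each $I_0$-class and the within-stratum variance vanishes; your measurability and second-moment remarks only make explicit what the paper leaves implicit (minor slip: the HS norm of a unitary channel is exactly $d$, which does not affect the argument).
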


The full trajectory statistic
\[
T_{\mathrm{traj}}(\underline X):=\underline X
\]
also has zero residual risk, but only trivially. The statistic \(T_{I_0}\) is more informative: it discards all order information that is invisible to the realized channel. If the channel-level variance is zero then it follows that the variance of all downstream tasks is also zero (since it channel level variance strictly dominates).

\subsubsection{Conditional residual risk and commutation-aware coarsenings}

The Hilbert-space projection identity gives the variance decomposition
\[
\mathcal R(\widehat\Phi)
=
\mathbb E\left\|
\mathbb E[\widehat\Phi\mid S]-\overline\Phi
\right\|_{HS}^{2}
+
\mathcal R(\widehat\Phi,S).
\]
Thus refining \(S\) can only decrease the residual risk. In particular, if \(S_1\) is finer than \(S_2\), written \(S_1\succeq S_2\), then
\[
\mathcal R(\widehat\Phi,S_2)
=
\mathcal R(\widehat\Phi,S_1)
+
\mathbb E\left\|
\mathbb E[\widehat\Phi\mid S_1]
-
\mathbb E[\widehat\Phi\mid S_2]
\right\|_{HS}^{2}.
\]
This identity is the basic mechanism behind the residual hierarchy below.

The local swap defect $\Delta_{ab}$ vanishes precisely when the two single-letter channels commute under composition.
Define the independence relation 
\[
I_0
:=
\left\{
\{a,b\}\subseteq\mathcal A:
a\neq b,\ \Delta_{ab}=0
\right\}.
\]
Two words \(\underline x,\underline y\in\mathcal A^m\) are called
\(I_0\)-equivalent if one can be transformed into the other by adjacent swaps
\[
ab\leftrightarrow ba
\]
using only pairs \(\{a,b\}\in I_0\). We also say that the two letters are channel invisible, and write
\[
\underline x\sim_{I_0}\underline y,
\qquad
[\underline x]_{I_0}
\]
for the corresponding equivalence relation and equivalence class. Then, the commutation-class statistic \(T_{I_0}\) is the trace-monoid quotient obtained by identifying all adjacent swaps that are invisible at the channel level.

Since every allowed adjacent swap inside an \(I_0\)-class leaves the realized
channel unchanged, \(\Gamma(\underline x)\) is constant on each
\(I_0\)-class. Hence
\[
\mathcal R(\widehat\Phi,T_{I_0})=0.
\]
The full trajectory statistic \(T_{\mathrm{traj}}(\underline X)=\underline X\)
also has zero residual risk, but only trivially. 
The statistic \(T_{I_0}\) is more informative: it removes only the ordering information that is invisible to the realized channel.

\paragraph{Thresholded order loss.}

Although \(T_{I_0}\) has zero residual risk, it is typically too fine to be operationally useful. A natural family of coarser statistics is obtained by remembering the relative order only of letter pairs whose local swap defect is larger than a threshold.

For \(\tau\ge0\), define
\[
E_\tau
:=
\left\{
\{a,b\}\subseteq\mathcal A:
a\neq b,\ \Delta_{ab}>\tau
\right\},
\quad
I_\tau:=\binom{\mathcal A}{2}\setminus E_\tau .
\]
Let
\[
S_\tau(\underline X):=[\underline X]_{I_\tau}
\]
be the quotient statistic generated by adjacent swaps of all pairs with \(\Delta_{ab}\le\tau\). Thus \(S_\tau\) remembers the relative order of pairs with large local defect and forgets the relative order of pairs with small local defect.

At \(\tau=0\), this recovers the exact commutation-class statistic,
\(
S_0=T_{I_0}.
\)
At the opposite endpoint, let
\[
\tau_{\max}:=\max_{a\neq b}\Delta_{ab}.
\]
Then \(E_{\tau_{\max}}=\varnothing\), so \(S_{\tau_{\max}}\) allows adjacent swaps of every distinct pair of letters. 
Hence \(S_{\tau_{\max}}\) forgets all ordering information and retains only the multiset of letters.

Let
\[
\mathbf N(\underline x)
=
\bigl(N_a(\underline x)\bigr)_{a\in\mathcal A}
\]
be the counts vector, also known as the Parikh vector. On each fixed-length word space \(\mathcal A^m\), the statistic \(S_{\tau_{\max}}\) is equivalent to \(\mathbf N\).

Since \(\mathcal A\) is finite, there are finitely many distinct positive defect values. Let
\[
0=\tau_0<\tau_1<\cdots<\tau_K=\tau_{\max}
\]
be the corresponding threshold levels, and abbreviate
\[
S_j:=S_{\tau_j}.
\]
Then the thresholded statistics form the finite refinement chain
\[
T_{I_0}=S_0
\succeq
S_1
\succeq
\cdots
\succeq
S_K=\mathbf N.
\]
Iterating the projection identity gives the exact additive decomposition
\[
\mathcal R(\widehat\Phi,\mathbf N)
=
\sum_{j=0}^{K-1}
\eta_j^{\mathrm{ord}},
\]
where
\[
\eta_j^{\mathrm{ord}}
:=
\mathbb E\left\|
\mathbb E[\widehat\Phi\mid S_j]
-
\mathbb E[\widehat\Phi\mid S_{j+1}]
\right\|_{HS}^{2}.
\]
Thus \(\mathcal R(\widehat\Phi,\mathbf N)\) is the total residual risk incurred by moving from the exact commutation-class statistic \(T_{I_0}\) down to the fully order-forgetting counts vector \(\mathbf N\).

One can coarsen further beyond \(\mathbf N\), for example by grouping letters and retaining only grouped counts. Such coarsenings introduce additional identity-loss or label-loss terms. 
The Green-kernel formula below isolates the order-loss core at the counts level.

\subsubsection{Exact Green-kernel formula for the order-loss core}
\label{sec:green_kernel_order_loss}

We now give an exact expression for the counts-level residual
\[
\mathcal R(\widehat\Phi,\mathbf N)
=
\mathbb E\!\left[\Var(\Gamma(\underline X)\mid \mathbf N)\right].
\]
The expression is a Green-kernel quadratic form on adjacent-swap graphs. Its
edge variables are context-dependent commutators of channels. 
A looser, but numerically simpler, norm-only Poincar\'e--Dirichlet bound is then recovered by replacing the exact Green kernel by a worst-case spectral bound and discarding the inner-product geometry between different commutator increments.

\paragraph{Weighted graph notation.}

Let \(V\) be a finite set equipped with a probability law \(\pi\), with \(\pi(x)>0\) for all \(x\in V\). 
Let
\[
c:V\times V\to[0,\infty)
\]
be a symmetric edge-weight function with \(c(x,x)=0\). 
We write \(x\sim y\) when \(c(x,y)>0\), and assume that the graph is connected. The specific choice of edge weights $c$ is not important, since it is normalised by the spectral decomposition.

The associated weighted graph Laplacian acts on scalar functions \(f:V\to\mathbb C\) by
\[
(L_cf)(x)
:=
\frac1{\pi(x)}
\sum_{y\in V}
c(x,y)\bigl(f(x)-f(y)\bigr).
\]
It is self-adjoint and positive semidefinite on \(L^2(\pi)\), with
\[
\langle f,L_cf\rangle_\pi
=
\frac12
\sum_{x,y\in V}
c(x,y)|f(x)-f(y)|^2.
\]

The same notation applies to Hilbert-space-valued functions \(F:V\to\mathsf H\). In our application, \(\mathsf H\) is the Hilbert space of superoperators equipped with the Hilbert--Schmidt inner product. The variance is
\[
\Var_\pi(F)
:=
\sum_{x\in V}
\pi(x)
\left\|
F(x)-\mathbb E_\pi F
\right\|_{\mathsf H}^{2}.
\]

Choose an arbitrary orientation for every undirected edge and let \(E^+\) denote the resulting oriented edge set. 
For \(e=(x,y)\in E^+\), define
\[
\nabla F(e):=F(y)-F(x).
\]
The edge inner product is
\[
\langle A,B\rangle_E
:=
\sum_{e\in E^+}
c_e\,
\langle A(e),B(e)\rangle_{\mathsf H},
\qquad
c_e:=c(x,y).
\]
With this convention, the Laplacian acts as
\[
\mathcal E_c(F,F)
:=
\frac12
\sum_{x,y\in V}
c(x,y)\|F(x)-F(y)\|_{\mathsf H}^{2}
=
\|\nabla F\|_E^2.
\]
The adjoint \(\nabla^\ast\) is taken with respect to this edge inner product and
the vertex inner product on \(L^2(\pi;\mathsf H)\), and
\[
L_c=\nabla^\ast\nabla.
\]

Let \(L_c^+\) be the Moore--Penrose pseudoinverse of \(L_c\), acting as the
inverse of \(L_c\) on the mean-zero subspace and as zero on constants. Define
the edge-space Green operator
\[
K_c^{\mathrm{var}}
:=
\nabla (L_c^+)^2\nabla^\ast .
\]

\begin{lemma}[Green-kernel variance identity]
\label{lem:green_kernel_variance_identity}
Let \((V,\pi,c)\) be a finite connected weighted graph, and let
\(F:V\to\mathsf H\) be Hilbert-space-valued. Then
\[
\Var_\pi(F)
=
\left\langle
\nabla F,
K_c^{\mathrm{var}}\nabla F
\right\rangle_E .
\]
Equivalently, if
\[
0=\lambda_0<\lambda_1\le\lambda_2\le\cdots
\]
are the eigenvalues of \(L_c\), with a real \(L^2(\pi)\)-orthonormal eigenbasis
\(\{\psi_r\}_{r\ge0}\), then
\[
\Var_\pi(F)
=
\sum_{r\ge1}\|C_r\|_{\mathsf H}^2,
\qquad
C_r:=\mathbb E_\pi[F\psi_r],
\]
and also
\[
\Var_\pi(F)
=
\sum_{r\ge1}
\frac1{\lambda_r^2}
\left\|
\sum_{e\in E^+}
c_e\,\nabla\psi_r(e)\,\nabla F(e)
\right\|_{\mathsf H}^{2}.
\]
\end{lemma}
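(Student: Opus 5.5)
The plan is to work entirely through the spectral decomposition of \(L_c\) on \(L^2(\pi)\), lifted to \(\mathsf H\)-valued functions by tensoring. Since \(V\) is finite, \(\pi>0\), \(c\) is symmetric and the graph is connected, \(L_c\) is self-adjoint and positive semidefinite on \(L^2(\pi)\). It therefore has a real orthonormal eigenbasis \(\{\psi_r\}\), with \(\psi_0\equiv 1\) spanning the kernel and \(\lambda_r>0\) for \(r\ge1\). The only fact needed about the kernel is that \(\langle f,L_cf\rangle_\pi=0\) forces \(f\) to be constant across every edge, hence constant by connectivity.

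For an \(\mathsf H\)-valued \(F\), expand \(F=\sum_r \psi_r C_r\) with \(C_r:=\mathbb E_\pi[F\psi_r]\in\mathsf H\). This is legitimate because \(L^2(\pi;\mathsf H)\cong L^2(\pi)\otimes\mathsf H\) and \(\{\psi_r\}\) is an orthonormal basis of the first factor. Parseval then gives \(\mathbb E_\pi\|F\|^2=\sum_r\|C_r\|^2\). Since \(C_0=\mathbb E_\pi F\), subtracting the mean removes the \(r=0\) term and yields the first identity, \(\Var_\pi(F)=\sum_{r\ge1}\|C_r\|^2\).

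Next, I would verify \(L_c=\nabla^\ast\nabla\) from the stated identity \(\mathcal E_c(F,F)=\|\nabla F\|_E^2=\langle F,L_cF\rangle\) by polarization. This identifies \(\nabla^\ast\) with respect to the given inner products, so for \(r\ge1\)
\[
\langle \nabla\psi_r,\nabla F\rangle_E=\langle L_c\psi_r,F\rangle_\pi=\lambda_r C_r ,
\]
where the pairing is scalar against \(\mathsf H\)-valued and is computed componentwise. Since \(\psi_r\) is real, the left side equals \(\sum_{e}c_e\nabla\psi_r(e)\nabla F(e)\). Dividing by \(\lambda_r\) and inserting into the first identity gives the third formula.

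Finally, for the Green-kernel form I would compute directly:
\[
\langle\nabla F,\nabla(L_c^+)^2\nabla^\ast\nabla F\rangle_E=\langle L_cF,(L_c^+)^2L_cF\rangle_\pi=\|L_c^+L_cF\|^2 .
\]
Here \(L_c^+L_c\) is the orthogonal projection onto the mean-zero subspace, so the expression equals \(\|F-\mathbb E_\pi F\|^2_{L^2(\pi;\mathsf H)}=\Var_\pi(F)\). Equivalently, in the eigenbasis it is \(\sum_{r\ge1}\lambda_r^{-2}\lambda_r^2\|C_r\|^2\).

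The main obstacle is bookkeeping rather than analysis. One has to make the adjoint \(\nabla^\ast\) precise with the oriented-edge convention and the factor \(\tfrac12\), check that \(\mathcal E_c=\|\nabla F\|_E^2\) matches the weights so that \(\nabla^\ast\nabla=L_c\) exactly with the \(1/\pi(x)\) normalization, and handle complex inner products on \(\mathsf H\). Because the \(\psi_r\) are real, no conjugation issues arise.
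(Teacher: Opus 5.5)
Your proposal is correct and takes essentially the same route as the paper. You expand in the real $L^2(\pi)$-orthonormal eigenbasis of $L_c$ (with $\psi_0\equiv 1$ so that $C_0=\mathbb E_\pi F$), use summation by parts $\langle\nabla\psi_r,\nabla F\rangle_E=\lambda_r C_r$ for the spectral formula, and obtain the Green-kernel form from $\langle\nabla F,\nabla(L_c^+)^2\nabla^\ast\nabla F\rangle_E=\|L_c^+L_cF\|^2$ with $L_c^+L_c$ the projection onto mean-zero functions, which simply makes explicit the paper's ``rewriting the same expression'' step.
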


\begin{proof}
Subtract the mean and write
\[
F_0:=F-\mathbb E_\pi F.
\]
Then \(F_0\) lies in the mean-zero subspace, so
\[
F_0
=
L_c^+L_cF_0
=
L_c^+\nabla^\ast\nabla F.
\]
Expanding in a real orthonormal eigenbasis of \(L_c\),
\[
F_0=\sum_{r\ge1}C_r\psi_r,
\qquad
C_r=\mathbb E_\pi[F\psi_r],
\]
gives
\[
\Var_\pi(F)=\sum_{r\ge1}\|C_r\|_{\mathsf H}^2.
\]
Since \(L_c\psi_r=\lambda_r\psi_r\), integration by parts yields
\begin{align*}
C_r
=
\frac1{\lambda_r}
\mathbb E_\pi[F\,L_c\psi_r]
&=
\frac1{\lambda_r}
\langle\nabla F,\nabla\psi_r\rangle_E\\
&=
\frac1{\lambda_r}
\sum_{e\in E^+}
c_e\,\nabla\psi_r(e)\,\nabla F(e).
\end{align*}
Substitution gives the spectral formula. The operator identity follows by
rewriting the same expression as
\[
\Var_\pi(F)
=
\left\langle
\nabla F,
\nabla(L_c^+)^2\nabla^\ast\nabla F
\right\rangle_E
=
\left\langle
\nabla F,
K_c^{\mathrm{var}}\nabla F
\right\rangle_E .
\]
\end{proof}

We now apply this identity to the count sectors of the trajectory space. For a
count vector \(\mathbf n\) with \(\Pr(\mathbf N=\mathbf n)>0\), define
\[
\Omega_{\mathbf n}
:=
\{\underline x\in\mathcal W:\mathbf N(\underline x)=\mathbf n\},
\qquad
\mu_{\mathbf n}
:=
\operatorname{Law}(\underline X\mid \mathbf N=\mathbf n).
\]
Choose a symmetric edge-weight function \(c_{\mathbf n}\) on
\(\operatorname{supp}\mu_{\mathbf n}\) such that
\[
c_{\mathbf n}(\underline x,\underline y)>0,
\quad
\underline x\neq\underline y,
\]
if \(\underline y\) is obtained from \(\underline x\) by one adjacent transposition. Note that these weights define a reversible adjacent-swap Markov chain with invariant law \(\mu_{\mathbf n}\), with jump rates
\[
q_{\mathbf n}(\underline x,\underline y)
=
\frac{
c_{\mathbf n}(\underline x,\underline y)
}{
\mu_{\mathbf n}(\underline x)
}.
\]
The exact Green-kernel identity below holds for every such admissible conductance.
Different choices of \(c_{\mathbf n}\) give different Green-kernel representations of the same sector variance.
Note also that the resulting adjacent-swap graph on \(\operatorname{supp}\mu_{\mathbf n}\) is connected. 
This is automatic since \(\mu_{\mathbf n}\) has full support on \(\Omega_{\mathbf n}\). A positive weight is assigned to every allowed adjacent swap, since any two words with the same multiplicities are connected by adjacent transpositions.

Let
\[
L_{\mathbf n},\qquad
L_{\mathbf n}^+,\qquad
\nabla_{\mathbf n},\qquad
K_{\mathbf n}^{\mathrm{var}}
:=
\nabla_{\mathbf n}(L_{\mathbf n}^+)^2\nabla_{\mathbf n}^\ast
\]
denote the corresponding Laplacian, pseudoinverse, gradient, and edge-space Green operator respectively.

For an oriented adjacent-swap edge
\[
e=(\underline x,\underline y)\in E_{\mathbf n}^+,
\]
define the channel increment
\[
D_e\Gamma
:=
\Gamma(\underline y)-\Gamma(\underline x).
\]
If \(\underline x\) and \(\underline y\) differ by swapping an adjacent pair
\(a,b\), we write
\[
\tau(e):=\{a,b\}
\]
for the unordered edge type.

\begin{proposition}[Exact Green-kernel formula for the counts residual]
\label{prop:exact_green_kernel_counts_residual}
For each count sector \(\Omega_{\mathbf n}\), define the sectorwise residual
\[
\mathcal R_{\mathbf n}
:=
\Var_{\mu_{\mathbf n}}(\Gamma)
=
\mathbb E_{\mu_{\mathbf n}}
\left\|
\Gamma(\underline X)
-
\mathbb E_{\mu_{\mathbf n}}\Gamma(\underline X)
\right\|_{HS}^{2}.
\]
Then
\[
\mathcal R_{\mathbf n}
=
\sum_{e,e'\in E_{\mathbf n}^+}
c_e c_{e'}\,
\mathcal K_{\mathbf n}^{\mathrm{var}}(e,e')\,
\left\langle
D_e\Gamma,D_{e'}\Gamma
\right\rangle_{HS}.
\]
Here \(\mathcal K_{\mathbf n}^{\mathrm{var}}(e,e')\) is the scalar kernel of
\(K_{\mathbf n}^{\mathrm{var}}\), defined by
\[
(K_{\mathbf n}^{\mathrm{var}}A)(e)
=
\sum_{e'\in E_{\mathbf n}^+}
c_{e'}\,
\mathcal K_{\mathbf n}^{\mathrm{var}}(e,e')A(e').
\]

Equivalently, if
\[
L_{\mathbf n}\psi_r^{(\mathbf n)}
=
\lambda_r(\mathbf n)\psi_r^{(\mathbf n)},
\qquad
r\ge1,
\]
is an \(L^2(\mu_{\mathbf n})\)-orthonormal nonconstant eigenbasis, then
\[
\mathcal R_{\mathbf n}
=
\sum_{r\ge1}
\frac1{\lambda_r(\mathbf n)^2}
\left\|
\sum_{e\in E_{\mathbf n}^+}
c_e\,
\nabla\psi_r^{(\mathbf n)}(e)\,
D_e\Gamma
\right\|_{HS}^{2}.
\]
The global counts residual is obtained by averaging over count sectors:
\[
\mathcal R(\widehat\Phi,\mathbf N)
=
\mathbb E_{\mathbf N}\bigl[\mathcal R_{\mathbf N}\bigr].
\]
\end{proposition}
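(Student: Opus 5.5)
The plan is to reduce the statement, sector by sector, to \cref{lem:green_kernel_variance_identity}, and then to average over sectors using the tower property. Fix a count vector \(\mathbf n\) with \(\Pr(\mathbf N=\mathbf n)>0\). Take \(V=\operatorname{supp}\mu_{\mathbf n}\), \(\pi=\mu_{\mathbf n}\), \(c=c_{\mathbf n}\), and let \(\mathsf H=\mathsf S\) carry the Hilbert--Schmidt inner product. The first step is to check that this triple satisfies the hypotheses of the lemma.

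\begin{itemize}
\item \(V\) is finite: every word in \(\Omega_{\mathbf n}\) has length \(|\mathbf n|\) and uses only finitely many letters.
\item \(\pi\) is strictly positive on \(V\) by definition of the support.
\item The adjacent-swap graph on \(V\) is connected. Any two words with the same multiplicities are related by a sequence of adjacent transpositions (bubble sort). Each such transposition carries positive conductance, as remarked before the statement.
\end{itemize}

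On a general marked point-process space one would restrict to the finite letter-sequence marginal. For the word-valued setting of the statement no such restriction is needed.

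Next apply the lemma to \(F=\Gamma|_V\). Its gradient along an oriented edge \(e=(\underline x,\underline y)\) is exactly \(D_e\Gamma\). The first form of the lemma gives
\[
\mathcal R_{\mathbf n}=\Var_{\mu_{\mathbf n}}(\Gamma)=\langle D\Gamma,K^{\mathrm{var}}_{\mathbf n}D\Gamma\rangle_E .
\]
To reach the double-sum form, unfold the edge inner product,
\(\langle A,B\rangle_E=\sum_e c_e\langle A(e),B(e)\rangle_{HS}\). Then insert the defining relation of the scalar kernel, \((K^{\mathrm{var}}_{\mathbf n}A)(e)=\sum_{e'}c_{e'}\mathcal K^{\mathrm{var}}_{\mathbf n}(e,e')A(e')\). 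The kernel entries are scalars, so they pull out of the Hilbert--Schmidt pairing, leaving \(\sum_{e,e'}c_ec_{e'}\mathcal K^{\mathrm{var}}_{\mathbf n}(e,e')\langle D_e\Gamma,D_{e'}\Gamma\rangle_{HS}\).

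The one point needing care is that \(K^{\mathrm{var}}_{\mathbf n}=\nabla(L^+)^2\nabla^\ast\) is defined on scalar edge functions and acts on \(\mathsf H\)-valued ones as \(K\otimes\mathrm{id}_{\mathsf H}\). I will justify this by fixing an orthonormal basis of the finite-dimensional space \(\mathsf H\) and applying the scalar identity coordinatewise. Equivalently, the lemma's proof only uses linearity of \(\nabla\), \(\nabla^\ast\) and \(L^+\) in the scalar variable.

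The spectral form follows the same way from the third expression in \cref{lem:green_kernel_variance_identity}, with \(\nabla F(e)=D_e\Gamma\) and \(\lambda_r=\lambda_r(\mathbf n)\).

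For the global statement, use the definition \(\mathcal R(\widehat\Phi,\mathbf N)=\mathbb E[\Var(\widehat\Phi\mid\mathbf N)]\). Since \(\mathbf N\) is discrete, \(\Var(\Gamma(\underline X)\mid\mathbf N=\mathbf n)=\Var_{\mu_{\mathbf n}}(\Gamma)=\mathcal R_{\mathbf n}\). Averaging over \(\mathbf n\) with weights \(\Pr(\mathbf N=\mathbf n)\) gives \(\mathbb E_{\mathbf N}[\mathcal R_{\mathbf N}]\); sectors of zero probability contribute nothing. If the word length is unbounded, the sum over sectors is a countable nonnegative series, and it converges whenever \(\mathbb E\|\widehat\Phi\|_{HS}^2<\infty\).

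I expect the main obstacle to be bookkeeping rather than substance. It consists of keeping the vector-valued lift of the Green operator consistent with the scalar-kernel convention, including the factors \(c_ec_{e'}\), and confirming that the result does not depend on the chosen orientation. Orientation-independence holds because reversing an edge flips the sign of both \(\nabla\psi_r(e)\) and \(D_e\Gamma\), and it flips the kernel entries consistently.
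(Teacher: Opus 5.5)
Your proposal is correct and follows the paper's proof essentially verbatim. You apply the Green-kernel variance lemma to $F=\Gamma$ on $(\operatorname{supp}\mu_{\mathbf n},\mu_{\mathbf n},c_{\mathbf n})$, unfold the edge inner product through the scalar kernel, read the spectral form off the lemma's eigenbasis expression, and average $\mathcal R_{\mathbf n}$ over sectors by the tower property. One small caveat: your bubble-sort connectedness argument runs on $\Omega_{\mathbf n}$, so, exactly as in the paper, it tacitly assumes that $\mu_{\mathbf n}$ has full support on $\Omega_{\mathbf n}$ (true for exchangeable laws such as qDRIFT); otherwise intermediate words could leave $V$.
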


\begin{proof}
Apply \cref{lem:green_kernel_variance_identity} to the finite weighted graph
\[
(\operatorname{supp}\mu_{\mathbf n},\mu_{\mathbf n},c_{\mathbf n})
\]
and to the Hilbert-space-valued function
\[
F(\underline x):=\Gamma(\underline x).
\]
This gives
\[
\Var_{\mu_{\mathbf n}}(\Gamma)
=
\left\langle
\nabla_{\mathbf n}\Gamma,
K_{\mathbf n}^{\mathrm{var}}\nabla_{\mathbf n}\Gamma
\right\rangle_E .
\]
Writing this edge inner product in coordinates gives the Green-kernel formula.
The spectral form follows from the eigenfunction representation in
\cref{lem:green_kernel_variance_identity}. Finally,
\[
\mathcal R(\widehat\Phi,\mathbf N)
=
\mathbb E\!\left[
\Var(\Gamma(\underline X)\mid\mathbf N)
\right]
=
\mathbb E_{\mathbf N}[\mathcal R_{\mathbf N}],
\]
as claimed.
\end{proof}

\paragraph{Contextual commutator interpretation.}

The edge increments in
\cref{prop:exact_green_kernel_counts_residual} have a direct commutator
interpretation. Suppose an oriented edge \(e=(\underline x,\underline y)\)
swaps an adjacent pair \(a,b\). Write
\[
\underline x=uabv,
\qquad
\underline y=ubav,
\]
where \(u\) and \(v\) are the prefix and suffix surrounding the swapped pair.
With the convention
\[
\Gamma(x_1,\ldots,x_m)
=
\Gamma_{x_m}\circ\cdots\circ\Gamma_{x_1},
\]
we have
\[
D_e\Gamma
=
\Gamma_v\circ
\bigl(
\Gamma_a\circ\Gamma_b-\Gamma_b\circ\Gamma_a
\bigr)
\circ
\Gamma_u.
\]
Define the local channel commutator increment
\[
C_{ab}
:=
\Gamma_a\circ\Gamma_b-\Gamma_b\circ\Gamma_a.
\]
Then
\[
D_e\Gamma
=
\Gamma_v\circ C_{ab}\circ\Gamma_u.
\]
Because \(\Gamma_u\) and \(\Gamma_v\) are unitary channels, left and right composition by them are Hilbert--Schmidt isometries on the superoperator space.
Therefore
\[
\|D_e\Gamma\|_{HS}
=
\|C_{ab}\|_{HS}
=
\Delta_{ab}.
\]
The exact residual is therefore a Green-kernel quadratic form in local commutators transported by their trajectory contexts.

Crucially, the exact formula depends not only on the magnitudes \(\Delta_{ab}\), but also on the Hilbert--Schmidt inner products of the contextual commutators
\[
\left\langle
D_e\Gamma,D_{e'}\Gamma
\right\rangle_{HS}.
\]
Equivalently,
\[
\left\langle
D_e\Gamma,D_{e'}\Gamma
\right\rangle_{HS}
=
\left\langle
\Gamma_{v_e}\circ C_{\tau(e)}\circ\Gamma_{u_e},
\Gamma_{v_{e'}}\circ C_{\tau(e')}\circ\Gamma_{u_{e'}}
\right\rangle_{HS}.
\]
Using unitary invariance, this can be rewritten as a context-relative commutator correlation:
\[
\left\langle
D_e\Gamma,D_{e'}\Gamma
\right\rangle_{HS}
=
\left\langle
C_{\tau(e)},
\Gamma_{v_e}^{-1}\circ\Gamma_{v_{e'}}
\circ
C_{\tau(e')}
\circ
\Gamma_{u_{e'}}\circ\Gamma_{u_e}^{-1}
\right\rangle_{HS}.
\]
Thus the counts residual is determined by the full Hilbert--Schmidt geometry of contextualized commutator increments across the adjacent-swap graph. 

\begin{corollary}[Zero counts residual in the commuting case]
\label{cor:green_kernel_zero_commuting_counts}
Suppose that, on a count sector \(\Omega_{\mathbf n}\), every adjacent-swap edge
\(e\) satisfies
\[
D_e\Gamma=0.
\]
Then
\[
\mathcal R_{\mathbf n}=0.
\]
Consequently, if all local channels commute pairwise on every sector with nonzero probability, then
\[
\mathcal R(\widehat\Phi,\mathbf N)=0.
\]
\end{corollary}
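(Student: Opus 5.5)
The plan is to read the corollary directly off \cref{prop:exact_green_kernel_counts_residual}, with a cross-check against the elementary definition of the variance. Fix a count sector \(\Omega_{\mathbf n}\) with \(\Pr(\mathbf N=\mathbf n)>0\) and an admissible adjacent-swap conductance \(c_{\mathbf n}\).

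The Green-kernel formula expresses \(\mathcal R_{\mathbf n}\) as a quadratic form
\[
\sum_{e,e'\in E_{\mathbf n}^+} c_ec_{e'}\,\mathcal K^{\mathrm{var}}_{\mathbf n}(e,e')\,\langle D_e\Gamma,D_{e'}\Gamma\rangle_{HS}.
\]
Under the hypothesis \(D_e\Gamma=0\) for every edge, every summand vanishes, so \(\mathcal R_{\mathbf n}=0\). The spectral form gives the same conclusion, since each inner sum \(\sum_e c_e\nabla\psi_r(e)D_e\Gamma\) is zero.

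To avoid relying solely on the kernel machinery, I would also note the elementary route. The adjacent-swap graph on \(\Omega_{\mathbf n}\) is connected, because any two words with the same multiplicities are joined by adjacent transpositions. Vanishing of all edge increments therefore means \(\Gamma\) is constant on \(\Omega_{\mathbf n}\). Hence \(\Gamma(\underline X)=\mathbb E_{\mu_{\mathbf n}}\Gamma\) holds \(\mu_{\mathbf n}\)-almost surely, and the variance is zero.

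For the global statement, assume all local channels commute pairwise, so \(\Delta_{ab}=0\) for all \(a,b\). For any swap edge \(e\) with \(\underline x=uabv\) and \(\underline y=ubav\), the contextual commutator identity gives
\[
D_e\Gamma=\Gamma_v\circ C_{ab}\circ\Gamma_u,
\]
and \(C_{ab}=\Gamma_a\circ\Gamma_b-\Gamma_b\circ\Gamma_a=0\). So \(D_e\Gamma=0\) on every sector of positive probability. The first part then yields \(\mathcal R_{\mathbf N}=0\) almost surely, and averaging via \(\mathcal R(\widehat\Phi,\mathbf N)=\mathbb E_{\mathbf N}[\mathcal R_{\mathbf N}]\) gives zero.

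There is essentially no obstacle; the only point needing care is connectivity of the swap graph on the support of \(\mu_{\mathbf n}\). This is already ensured by the full-support remark preceding \cref{prop:exact_green_kernel_counts_residual}, and in any case the Green-kernel route does not need it.
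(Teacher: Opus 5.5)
Your proposal is correct and follows the paper's own reasoning. If every edge increment \(D_e\Gamma\) vanishes on the connected adjacent-swap graph, then \(\Gamma\) is constant on \(\Omega_{\mathbf n}\); equivalently, every term of the Green-kernel quadratic form vanishes. In the commuting case \(C_{ab}=0\) kills each contextual increment \(\Gamma_v\circ C_{ab}\circ\Gamma_u\).

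One correction to your closing remark: the Green-kernel route also needs connectivity. The Green-kernel identity is derived only for connected graphs. On a disconnected graph, a function that is constant on each component but differs across components has zero gradient and positive variance. Connectivity does hold here, so your argument is unaffected.
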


This means that the counts vector has zero residual in the fully commuting case, this propagates down also into scalar observables. That is, if adjacent swaps do not change the final channel, then the trajectory-to-channel map has zero graph gradient on each connected count sector, hence it is constant on that sector.

\paragraph{Remark on signed estimators.}

The formulas above are written for the unitary trajectory estimator
\(\widehat\Phi=\Gamma(\underline X)\). For signed or quasiprobability
estimators, such as TE-PAI, one instead applies the same Green-kernel identity
to
\[
F(\underline X):=g(\underline X)\Gamma(\underline X).
\]
All formulas remain valid with \(D_e\Gamma\) replaced by
\[
D_eF
:=
F(\underline y)-F(\underline x).
\]
If the chosen count statistic determines the weight \(g\), then \(g\) is
constant inside each count sector and
\[
D_eF
=
g(\mathbf n)\,D_e\Gamma,
\qquad
\mathcal R_{\mathbf n}(F)
=
g(\mathbf n)^2\,\mathcal R_{\mathbf n}(\Gamma).
\]
This is the relevant situation when the alphabet includes the TE-PAI gate type
label, so that the number of signed \(\pi\)-events is count-measurable. If one
coarsens further to a statistic that does not determine \(g\), then the residual
contains additional weight-loss or identity-loss contributions.

Similarly, the same Green-kernel result can be invoked at the scalar level by replacing $F=\Gamma$ with the scalar trajectory after mapping into a scalar value through a linear functional $L$.

\subsubsection{Poincar\'e--Dirichlet bound as a norm corollary}
\label{sec:poincare_order_loss}

The exact Green-kernel formula immediately yields a simpler Poincar\'e--Dirichlet estimate as a coarser upper bound. These are the results quoted in \cref{sec:structure_aware_variance_reduction} of the main text. 
This bound keeps only the norms of the contextual commutators and the spectral gap of the adjacent-swap graph, discarding the inner-product correlations between different edge increments.

For each count vector \(\mathbf n\), let
\[
\lambda(\mathbf n):=\lambda_1(L_{\mathbf n})
\]
be the smallest positive eigenvalue of the sector Laplacian. 
For each unordered pair \(a<b\), define the total edge weight of adjacent swaps of type \(\{a,b\}\) by
\[
w_{ab}(\mathbf n)
:=
\sum_{\substack{
e\in E_{\mathbf n}^+\\
\tau(e)=\{a,b\}
}}
c_e.
\]
Equivalently, in unoriented-edge notation,
\[
w_{ab}(\mathbf n)
=
\frac12
\sum_{\substack{
\underline x,\underline y\in\operatorname{supp}\mu_{\mathbf n}\\
\underline x,\underline y
\text{ differ by an adjacent swap of }a\text{ and }b
}}
c_{\mathbf n}(\underline x,\underline y).
\]

\begin{proposition}[Weighted-graph bound for the counts residual]
\label{prop:poincare_counts_residual_law_agnostic}
For every count sector with connected adjacent-swap graph,
\[
\mathcal R_{\mathbf n}
\le
\frac1{\lambda(\mathbf n)}
\sum_{a<b}
w_{ab}(\mathbf n)\Delta_{ab}^{2}.
\]
Consequently,
\[
\mathcal R(\widehat\Phi,\mathbf N)
\le
\sum_{a<b}
\omega_{ab}\Delta_{ab}^{2},
\qquad
\omega_{ab}
:=
\mathbb E_{\mathbf N}\left[
\frac1{\lambda(\mathbf N)}
w_{ab}(\mathbf N)
\right].
\]
\end{proposition}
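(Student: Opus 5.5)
The bound is the standard Poincaré inequality on each count sector, combined with the norm identity for swap increments, and then averaged over sectors. Fix a count vector \(\mathbf n\) with \(\Pr(\mathbf N=\mathbf n)>0\). Work on the finite connected weighted graph \((\operatorname{supp}\mu_{\mathbf n},\mu_{\mathbf n},c_{\mathbf n})\) and take the Hilbert-space-valued function \(F=\Gamma\).

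The first step is the vector-valued Poincaré inequality
\[
\Var_{\mu_{\mathbf n}}(\Gamma)\le \frac{1}{\lambda(\mathbf n)}\,\mathcal E_{c_{\mathbf n}}(\Gamma,\Gamma)=\frac1{\lambda(\mathbf n)}\|\nabla_{\mathbf n}\Gamma\|_E^2 .
\]
I would derive it from the spectral form in \cref{lem:green_kernel_variance_identity}. There \(\Var(\Gamma)=\sum_{r\ge1}\|C_r\|_{HS}^2\) with \(C_r=\mathbb E_{\mu_{\mathbf n}}[\Gamma\psi_r]\). Expanding \(\Gamma_0=\sum_r C_r\psi_r\) in the real eigenbasis and using \(L_{\mathbf n}=\nabla^\ast\nabla\) gives
\[
\|\nabla\Gamma\|_E^2=\langle \Gamma_0,L_{\mathbf n}\Gamma_0\rangle=\sum_{r\ge1}\lambda_r\|C_r\|_{HS}^2 .
\]
Cross terms vanish by orthonormality of the \(\psi_r\), since the \(\psi_r\) are real scalars and the Hilbert-space coefficients simply factor through the inner product. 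Because \(\lambda_r\ge\lambda_1=\lambda(\mathbf n)\) for \(r\ge1\), this yields \(\lambda(\mathbf n)\Var\le\|\nabla\Gamma\|_E^2\). Equivalently, one can apply the scalar Poincaré inequality coordinatewise in an orthonormal basis of \(\mathsf S\) and sum the coordinates.

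The second step evaluates the Dirichlet form:
\[
\|\nabla\Gamma\|_E^2=\sum_{e\in E^+_{\mathbf n}}c_e\|D_e\Gamma\|_{HS}^2 .
\]
By the contextual commutator computation, \(D_e\Gamma=\Gamma_v\circ C_{ab}\circ\Gamma_u\). Left and right composition by unitary channels are Hilbert--Schmidt isometries, so \(\|D_e\Gamma\|_{HS}=\Delta_{ab}\) with \(\{a,b\}=\tau(e)\). Grouping edges by their unordered type gives
\[
\sum_{a<b}w_{ab}(\mathbf n)\Delta_{ab}^2 .
\]
Edges swapping two equal letters do not occur as distinct-vertex edges, and in any case \(\Delta_{aa}=0\), so they contribute nothing. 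This proves the sectorwise bound.

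Finally, \(\mathcal R(\widehat\Phi,\mathbf N)=\mathbb E_{\mathbf N}[\mathcal R_{\mathbf N}]\) by \cref{prop:exact_green_kernel_counts_residual}. Taking expectations of the sector bound, and using linearity together with nonnegativity of all terms, gives the stated bound with weights \(\omega_{ab}\).

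The only delicate point is the isometry claim. In the HS inner product on \(\mathsf S\), we have \(\|\mathcal V\circ X\circ\mathcal W\|_{HS}=\|X\|_{HS}\) for unitary channels \(\mathcal V,\mathcal W\), because the conjugation superoperators are themselves unitary on \(\mathcal L(\mathcal H)\) with the Hilbert--Schmidt inner product. The remaining steps are routine.
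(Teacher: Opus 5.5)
Your proposal is correct and follows the paper's proof essentially step for step. Both use the sectorwise spectral-gap (Poincar\'e) inequality $\Var_{\mu_{\mathbf n}}(\Gamma)\le\lambda(\mathbf n)^{-1}\|\nabla_{\mathbf n}\Gamma\|_E^2$, the unitary-invariance identity $\|D_e\Gamma\|_{HS}=\Delta_{\tau(e)}$, grouping of edges by swap type, and averaging over $\mathbf N$; your derivation of the gap inequality from the eigen-expansion and your justification of the Hilbert--Schmidt isometry just make explicit steps the paper states without detail.
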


\begin{proof}
The Green-kernel identity and the spectral gap bound give
\[
\mathcal R_{\mathbf n}
=
\Var_{\mu_{\mathbf n}}(\Gamma)
\le
\frac1{\lambda(\mathbf n)}
\|\nabla_{\mathbf n}\Gamma\|_E^2.
\]
Since
\[
\|\nabla_{\mathbf n}\Gamma\|_E^2
=
\sum_{e\in E_{\mathbf n}^+}
c_e\|D_e\Gamma\|_{HS}^{2},
\]
and every edge of type \(\{a,b\}\) satisfies
\[
\|D_e\Gamma\|_{HS}^{2}
=
\Delta_{ab}^{2},
\]
we obtain
\[
\mathcal R_{\mathbf n}
\le
\frac1{\lambda(\mathbf n)}
\sum_{a<b}
w_{ab}(\mathbf n)\Delta_{ab}^{2}.
\]
Averaging over \(\mathbf N\) gives the global bound.
\end{proof}

Thus the Poincar\'e--Dirichlet estimate is the norm-only envelope of the exact
Green-kernel representation. It replaces all nonzero Laplacian modes by the
slowest mode and retains only the edge norms
\[
\|D_e\Gamma\|_{HS}^{2}=\Delta_{\tau(e)}^{2}.
\]
The exact formula is sharper because it preserves both the graph Green kernel
and the Hilbert--Schmidt correlations between contextualized commutator
increments.

\paragraph{Mode-resolved interpretation.}

The spectral representation in
\cref{prop:exact_green_kernel_counts_residual} shows how the order-loss residual
is distributed over the relaxation modes of the adjacent-swap graph. For each
nonconstant eigenmode \(\psi_r^{(\mathbf n)}\), define the commutator flux into
that mode by
\[
\mathcal J_r^{(\mathbf n)}
:=
\sum_{e\in E_{\mathbf n}^+}
c_e\,
\nabla\psi_r^{(\mathbf n)}(e)\,
D_e\Gamma.
\]
Then
\[
\mathcal R_{\mathbf n}
=
\sum_{r\ge1}
\frac{
\|\mathcal J_r^{(\mathbf n)}\|_{HS}^{2}
}{
\lambda_r(\mathbf n)^2
}.
\]
Equivalently, if
\[
\Gamma-\mathbb E_{\mu_{\mathbf n}}\Gamma
=
\sum_{r\ge1}
C_r^{(\mathbf n)}\psi_r^{(\mathbf n)},
\]
then
\[
C_r^{(\mathbf n)}
=
\frac1{\lambda_r(\mathbf n)}
\mathcal J_r^{(\mathbf n)}
\]
and therefore
\[
\mathcal R_{\mathbf n}
=
\sum_{r\ge1}
\|C_r^{(\mathbf n)}\|_{HS}^{2}.
\]
This explains the possible slack in the Poincar\'e bound. The bound replaces all
factors \(1/\lambda_r(\mathbf n)\) by the worst factor
\(1/\lambda_1(\mathbf n)\). It is tight only when the contextual commutator
field couples predominantly to the slowest graph modes. If the commutator field
is concentrated on higher-frequency modes, or if different contextual
commutator increments cancel through negative Hilbert--Schmidt inner products,
the exact residual can be much smaller than the norm-only estimate.

\paragraph{Relation to defect levels.}

For unitary trajectory models, the local swap defect has the explicit
group-commutator expression
\[
\Delta_{ab}^{2}
=
2\left(
d^{2}
-
\left|
\Tr(U_a^\dagger U_b^\dagger U_aU_b)
\right|^{2}
\right),
\]
where \(\Gamma_a(\rho)=U_a\rho U_a^\dagger\). Thus the previous defect-level
interpretation remains useful at the level of the norm-only corollary.

Let
\[
0=\theta_0<\theta_1<\cdots<\theta_K
\]
be the distinct local defect values, and define
\[
\mathcal P_k
:=
\left\{
\{a,b\}:\Delta_{ab}=\theta_k
\right\}.
\]
Then
\[
\sum_{a<b}
w_{ab}(\mathbf N)\Delta_{ab}^{2}
=
\sum_{k=1}^{K}
\theta_k^{2}
\sum_{\{a,b\}\in\mathcal P_k}
w_{ab}(\mathbf N).
\]
This scalar defect-level energy is a useful coarse summary, but it is only a rough envelope. The true order-loss core is determined by the graph Green kernel applied to the contextual commutators
\[
e
\longmapsto
\Gamma_{v_e}\circ
\bigl(
\Gamma_a\circ\Gamma_b-\Gamma_b\circ\Gamma_a
\bigr)
\circ
\Gamma_{u_e}.
\]
Therefore the residual risk is sensitive not only to the magnitudes of local
group commutators, but also to their Hilbert--Schmidt inner-product geometry
across trajectory contexts.

\section{Observable-adapted residual-risk lemmas}
\label{app:observable_adapted_residual}
\subsection{Observable-adapted and local stratification}
\label{sec:observable_adapted_stratification}

The residual-risk framework above is channel-level and therefore task independent.
This is useful because it gives uniform guarantees for all  downstream observables, but it can also be overly conservative. 
The cost of this universality is that the relevant statistics may have very large support.
For words of fixed length \(m\) over an alphabet \(\mathcal A\) of size \(q=|\mathcal A|\), the full count vector has
\[
\binom{m+q-1}{q-1}
\]
possible sectors. In randomized quantum simulation, \(q\) typically scales with the number of Hamiltonian terms, so full-count stratification can be impractical even when it gives a sharp channel-level guarantee.

In many applications, however, the goal is not to estimate the full channel \(\Gamma(W)\), but rather one or more scalar expectation values. 
Fix an input state \(\rho\) and an observable \(O_R\) supported on a region \(R\). 
For a trajectory \(W\), define the scalar trajectory as the mapping
\[
Y_R(W)
:=
g(W)\Tr[O_R\,\Gamma(W)(\rho)],
\]
where \(g(W)\in\mathbb R\) is the trajectory weight. For ordinary unitary trajectory estimators, \(g(W)=1\); for signed (or QPD-based) estimators, such as TE-PAI, \(g(W)\) includes the signed trajectory weight.

Given a statistic \(S\), define the observable-level residual risk
\[
\mathcal R_R(S)
:=
\mathbb E[\Var(Y_R\mid S)].
\]
Thus \(S\) has zero residual risk for the observable \(O_R\) whenever \(Y_R\) is measurable with respect to \(S\). 

This is weaker than requiring \(S\) to determine the full sampled channel.

Let
\(
\widehat\Phi(W):=g(W)\Gamma(W)
\)
be the weighted channel estimator, and define the linear functional
\[
L_{\rho,O_R}(\Phi):=\Tr[O_R\Phi(\rho)].
\]
Then
\[
Y_R(W)=L_{\rho,O_R}(\widehat\Phi(W)).
\]
Since conditional expectation commutes with fixed linear maps,
\[
\mathbb E[Y_R\mid S]
=
L_{\rho,O_R}\!\left(\mathbb E[\widehat\Phi\mid S]\right).
\]
Therefore the channel-level residual controls the observable-level residual:
\[
\mathcal R_R(S)
\le
\|L_{\rho,O_R}\|^2\,\mathcal R(\widehat\Phi,S).
\]
For the Hilbert--Schmidt norm on superoperators,
\[
\|L_{\rho,O_R}\|
\le
\|O_R\|_F\|\rho\|_F.
\]
The converse need not hold: a statistic can be sufficient, or nearly sufficient, for a particular observable while being far from sufficient for the full channel.

The practical aim of observable-adapted stratification is therefore to use statistics whose support is controlled by the observable locality rather than by the full Hamiltonian. The simplest such choice is a locality-truncated count statistic.

\paragraph{Locality-truncated counts.}

Let
\[
\mathcal A_{R,\ell}\subseteq \mathcal A
\]
be a retained local alphabet for \(O_R\). For example, \(\mathcal A_{R,\ell}\) may consist of all trajectory letters supported in the radius-\(\ell\) neighborhood \(B_\ell(R)\) of the observable support. Define the local count statistic
\[
\mathbf N_{R,\ell}(W)
:=
(N_a(W))_{a\in\mathcal A_{R,\ell}}.
\]
This statistic keeps all count coordinates inside the retained local alphabet and discards all count coordinates outside it. Its support size is
\[
\binom{m+|\mathcal A_{R,\ell}|-1}{|\mathcal A_{R,\ell}|-1},
\]
which can be much smaller than the global count support
\[
\binom{m+|\mathcal A|-1}{|\mathcal A|-1}.
\]
For bounded-range Hamiltonians and fixed-size observables, \(|\mathcal A_{R,\ell}|\) is controlled by the size of the local neighborhood rather than by the total system size.

The residual risk of \(\mathbf N_{R,\ell}\) has two conceptually distinct sources. First, by restricting to a local alphabet, one discards trajectory letters outside the retained region; these may still have an indirect influence on \(O_R\). Second, even inside the retained alphabet, conditioning only on counts forgets the ordering of noncommuting local letters. The following bound separates these two contributions.

\begin{proposition}[Observable residual under local count stratification]
\label{prop:observable_residual_local_counts}
Let \(Y_R\) be the scalar trajectory associated with \(O_R\). Let \(Y_{R,\ell}\) be a local surrogate that depends only on the retained local trajectory over \(\mathcal A_{R,\ell}\), and suppose
\[
\|Y_R-Y_{R,\ell}\|_{L^2}\le \epsilon_{R,\ell}.
\]
Let \(\mathbf N_{R,\ell}\) be the local count statistic on
\(\mathcal A_{R,\ell}\). Then
\[
\mathcal R_R(\mathbf N_{R,\ell})^{1/2}
\le
\epsilon_{R,\ell}
+
\mathcal R(Y_{R,\ell},\mathbf N_{R,\ell})^{1/2}.
\]
Moreover, the local count residual admits the adjacent-swap bound
\[
\mathcal R(Y_{R,\ell},\mathbf N_{R,\ell})
\le
\sum_{\substack{a<b\\a,b\in\mathcal A_{R,\ell}}}
\omega_{ab}^{R,\ell}
\bigl(\delta_{ab}^{R,\ell}\bigr)^2.
\]
Consequently,
\[
\mathcal R_R(\mathbf N_{R,\ell})^{1/2}
\le
\epsilon_{R,\ell}
+
\left[
\sum_{\substack{a<b\\a,b\in\mathcal A_{R,\ell}}}
\omega_{ab}^{R,\ell}
\bigl(\delta_{ab}^{R,\ell}\bigr)^2
\right]^{1/2}.
\]
\end{proposition}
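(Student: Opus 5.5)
The proof splits into two parts: a triangle inequality in the conditional $L^2$ space, and a reuse of the weighted-graph Poincar\'e bound (\cref{prop:poincare_counts_residual_law_agnostic}) at the scalar level.

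\textbf{Step 1 (triangle inequality).} I view $\mathcal R_R(S)^{1/2}=\|Y_R-\mathbb E[Y_R\mid S]\|_{L^2}$ as the norm of the image of $Y_R$ under the orthogonal projection $I-\mathbb E[\,\cdot\mid S]$ on $L^2$, with $S=\mathbf N_{R,\ell}$. This projection is linear with operator norm at most one. Writing $Y_R=Y_{R,\ell}+(Y_R-Y_{R,\ell})$ therefore gives
\[
\mathcal R_R(\mathbf N_{R,\ell})^{1/2}\le \|(I-\mathbb E[\cdot\mid S])Y_{R,\ell}\|_{L^2}+\|Y_R-Y_{R,\ell}\|_{L^2}\le \mathcal R(Y_{R,\ell},\mathbf N_{R,\ell})^{1/2}+\epsilon_{R,\ell}.
\]
This is the first claim, and it uses only the variance-decomposition identity \cref{eq:conditional_risk} in its scalar form.

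\textbf{Step 2 (local Poincar\'e bound).} Since $Y_{R,\ell}$ depends only on the retained local word $W_\ell$ (the subword of letters in $\mathcal A_{R,\ell}$), I condition on $\mathbf N_{R,\ell}=\mathbf n$. Each sector is then the set of local words with counts $\mathbf n$, equipped with the law $\mu_{\mathbf n}$ of $W_\ell$ given the counts, and I place adjacent-swap conductances on it exactly as in \cref{sec:green_kernel_order_loss}. One subtlety is that $Y_{R,\ell}$ may also depend on auxiliary randomness such as the outside letters or the times. I handle this by conditioning additionally on that randomness, or by replacing $Y_{R,\ell}$ with its conditional mean given $W_\ell$, which only lowers the variance via the law of total variance. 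In either case the dependence becomes a function on a connected finite graph.

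Applying \cref{lem:green_kernel_variance_identity} with $\mathsf H=\mathbb C$ and taking the spectral-gap estimate yields
\[
\Var_{\mu_{\mathbf n}}(Y_{R,\ell})\le \lambda(\mathbf n)^{-1}\sum_e c_e|D_eY_{R,\ell}|^2 .
\]
I then define the scalar swap defect
\[
\delta^{R,\ell}_{ab}:=\sup\bigl\{|D_eY_{R,\ell}| : \tau(e)=\{a,b\}\bigr\},
\]
and set $\omega^{R,\ell}_{ab}:=\mathbb E[\lambda(\mathbf N_{R,\ell})^{-1}w_{ab}(\mathbf N_{R,\ell})]$. With these choices the second display follows by grouping edges by type and averaging over sectors, exactly as in the proof of \cref{prop:poincare_counts_residual_law_agnostic}. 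If $Y_{R,\ell}=g\,L(\Gamma)$ with $g$ fixed on sectors, the signed-estimator remark also gives $\delta_{ab}\le |g|\,\|L\|\,\Delta_{ab}$, so the scalar defects inherit the channel-level swap defects. The final inequality is obtained by substituting the second display into the first and taking square roots, which is monotone.

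\textbf{Main obstacle.} The delicate point is Step 2's requirement that each local sector, together with any auxiliary randomness, form a connected graph on which the conditional law has full support. This is needed so that the Poincar\'e inequality applies and the weights $\omega^{R,\ell}_{ab}$ are well defined. Here I would rely on the fact that any two words with equal multiplicities are connected by adjacent transpositions, together with positivity of the conditional law, as argued before \cref{prop:exact_green_kernel_counts_residual}. Where that positivity fails, I would restrict to the support and absorb the remainder into the definition of the weights.
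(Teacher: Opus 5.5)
Your argument is correct and follows the paper's proof. The paper also gets the first inequality from the $L^2$-projection property of $\mathbb E[\,\cdot\mid\mathbf N_{R,\ell}]$ plus the triangle inequality; it compares $Y_R$ with the $S$-measurable candidate $\mathbb E[Y_{R,\ell}\mid S]$ rather than using contractivity of $I-\mathbb E[\,\cdot\mid S]$, but these are equivalent. It likewise obtains the swap bound as the scalar Poincar\'e--Dirichlet argument on the local count sectors, with the same $\delta_{ab}^{R,\ell}$ and $\omega_{ab}^{R,\ell}$.

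One caution about your aside on auxiliary randomness. It is unnecessary, because the hypothesis already makes $Y_{R,\ell}$ a function of the retained local word. As written it also points the wrong way: conditioning on extra variables, or replacing $Y_{R,\ell}$ by $\mathbb E[Y_{R,\ell}\mid W_\ell]$, can only decrease the residual. So neither step gives an upper bound on $\mathcal R(Y_{R,\ell},\mathbf N_{R,\ell})$.
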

\begin{proof}[Proof of \cref{prop:observable_residual_local_counts}]
For any statistic \(S\), conditional expectation is the \(L^2\)-orthogonal
projection onto the subspace of \(S\)-measurable random variables. Hence
\[
\mathcal R_R(S)^{1/2}
=
\|Y_R-\mathbb E[Y_R\mid S]\|_{L^2}.
\]
Since \(\mathbb E[Y_{R,\ell}\mid S]\) is \(S\)-measurable, the projection
property gives
\[
\mathcal R_R(S)^{1/2}
\le
\|Y_R-\mathbb E[Y_{R,\ell}\mid S]\|_{L^2}.
\]
By the triangle inequality,
\[
\|Y_R-\mathbb E[Y_{R,\ell}\mid S]\|_{L^2}
\le
\|Y_R-Y_{R,\ell}\|_{L^2}
+
\|Y_{R,\ell}-\mathbb E[Y_{R,\ell}\mid S]\|_{L^2}.
\]
Taking \(S=\mathbf N_{R,\ell}\) gives
\[
\mathcal R_R(\mathbf N_{R,\ell})^{1/2}
\le
\epsilon_{R,\ell}
+
\mathcal R(Y_{R,\ell},\mathbf N_{R,\ell})^{1/2}.
\]
The adjacent-swap bound is the scalar version of the weighted
Poincar\'e--Dirichlet argument applied to the retained local trajectory ensemble.
\end{proof}
Here \(\epsilon_{R,\ell}\) is the locality truncation error: it measures the effect of replacing the full scalar trajectory by a local surrogate. The second term is the residual ordering error left after conditioning only on local counts.

The scalar swap defect \(\delta_{ab}^{R,\ell}\) is the worst-case change in the local surrogate caused by swapping one adjacent retained pair \(a,b\):
\[
\delta_{ab}^{R,\ell}
:=
\sup
\left|
Y_{R,\ell}(W)-Y_{R,\ell}(W^{ab})
\right|,
\]
where \(W^{ab}\) denotes the retained local trajectory obtained from \(W\) by swapping one adjacent occurrence of \(a,b\). The weights
\[
\omega_{ab}^{R,\ell}
:=
\mathbb E_{\mathbf N_{R,\ell}}\!\left[
\frac{1}{\lambda_{R,\ell}(\mathbf N_{R,\ell})}
w_{ab}^{R,\ell}(\mathbf N_{R,\ell})
\right]
\]
are the local analogues of the global counts-core weights. Here \(w_{ab}^{R,\ell}(\mathbf n)\) is the total adjacent-swap edge weight exchanging \(a,b\) inside the local count sector \(\mathbf n\), and \(\lambda_{R,\ell}(\mathbf n)\) is the corresponding local graph Laplacian gap.

When an observable-independent estimate inside the local region is preferable, the scalar swap defects can be bounded by channel-level swap defects. 
In particular, for unitary local contexts,
\[
\delta_{ab}^{R,\ell}
\le
\|O_R\|_F\|\rho\|_F\,\Delta_{ab}.
\]
Thus a simpler but more conservative bound is
\[
\mathcal R_R(\mathbf N_{R,\ell})^{1/2}
\le
\epsilon_{R,\ell}
+
\|O_R\|_F\|\rho\|_F
\left[
\sum_{\substack{a<b\\a,b\in\mathcal A_{R,\ell}}}
\omega_{ab}^{R,\ell}
\Delta_{ab}^2
\right]^{1/2}.
\]

\paragraph{Locality error.}

For noncommuting local dynamics terms outside the initial support of \(O_R\) can influence it through operator spreading. 
This effect is captured by locality error \(\epsilon_{R,\ell}\). Lieb--Robinson estimates suggest bounds of the form
\[
\epsilon_{R,\ell}
\lesssim
C\|O_R\|\,|\partial R|\,e^{-\mu(\ell-vt)}.
\]
In our randomized setting this needs to be replaced by a worst-case trajectory analysis. However, the bound in \cref{prop:observable_residual_local_counts} does not depend on a particular Lieb--Robinson estimate; it only requires an \(L^2\) surrogate error. Thus the practical design principle is to choose a local alphabet large enough that \(\epsilon_{R,\ell}\) is acceptable, and then use full local counts to remove the count-level variance inside the retained region, up to the residual local order term.
\paragraph{Example: commuting Ising chain.}

Consider
\[
H
=
w\sum_{j=1}^n Z_j
+
J\sum_{j=1}^n Z_jZ_{j+1}
\]
on a periodic chain. For the local observable \(X_j\), the only Hamiltonian terms that fail to commute with \(X_j\) are
\[
Z_j,
\qquad
Z_{j-1}Z_j,
\qquad
Z_jZ_{j+1}.
\]
Therefore the local count statistic
\[
\mathbf N_j
=
\bigl(
N_{Z_j},
N_{Z_{j-1}Z_j},
N_{Z_jZ_{j+1}}
\bigr)
\]
is exact, so \(
\mathcal R_{X_j}(\mathbf N_j)=0.
\)
It has three coordinates, independent of \(n\), whereas the full count statistic has \(2n\) coordinates.

\paragraph{Exact causal cone statistics.} One could obtain exact observable-level zero residual by conditioning on the full ordered causal cone of \(O_R\), but this statistic may grow rapidly with depth and is not the focus here. Instead, we use simpler locality-truncated count statistics: they are not generally exact, but their support is controlled by the chosen local region and their residual risk decomposes into a locality error and a local order-residual term.
\paragraph{Summary.}
The bounds above motivates a practical stratification strategy. For each target observable \(O_R\), choose a retained local alphabet \(\mathcal A_{R,\ell}\), form the local count statistic \(\mathbf N_{R,\ell}\), and stratify only over those local counts. 
For fixed-size observables and bounded-range Hamiltonians, this replaces a system-size-dependent statistic by one whose dimension is controlled by the chosen locality radius.

The residual risk then has two interpretable contributions, 
\(
\epsilon_{R,\ell},
\)
which measures the influence of omitted nonlocal trajectory information, and
\(
\sqrt{
\sum_{\substack{a<b\\a,b\in\mathcal A_{R,\ell}}}
\omega_{ab}^{R,\ell}\Delta_{ab}^2},
\)
which measures the order information lost by using counts rather than the ordered local trajectory. In the commuting Pauli case, both terms vanish for the visible local alphabet, giving exact zero observable-level residual. In the noncommuting case, the bound suggests a tunable tradeoff: increasing \(\ell\) reduces the locality error, while the remaining local order residual is governed by the commutator structure of the retained terms.

The numerical experiments below test this picture directly by comparing naive sampling, global count stratification, and observable-adapted local count stratification across representative Hamiltonian ensembles.

\begin{proposition}[Exact local stratification in the commuting Pauli case]
Let
\[
H=\sum_{a\in\mathcal A}h_aP_a
\]
be a mutually commuting Pauli Hamiltonian, and let \(O_R\) be a Pauli
observable. Define
\[
\mathcal A_R:=\{a\in\mathcal A:[P_a,O_R]\neq 0\},
\qquad
\mathbf N_R=(N_a)_{a\in\mathcal A_R}.
\]
Then
\[
Y_R(W)=\Tr[O_R\Gamma_W(\rho)]
\]
is measurable with respect to \(\mathbf N_R\). Consequently,
\[
\mathbb E[\Var(Y_R\mid \mathbf N_R)]=0.
\]
\end{proposition}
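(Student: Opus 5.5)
Since every letter channel is a Pauli rotation generated by a term of a mutually commuting Hamiltonian, all the channels \(\Gamma_a\) commute. The idea is to work in the Heisenberg picture and show that, in \(\Gamma_W^\ast(O_R)\), the letters outside \(\mathcal A_R\) act trivially and the letters inside \(\mathcal A_R\) contribute only through their counts.

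Throughout, each letter \(a\) is realized as \(U_a=\exp(-i\theta_a P_a/2)\) with a fixed, letter-dependent angle \(\theta_a\). For qDRIFT this is \(\theta_a=2s_a\lambda T/m\). For the TE-PAI alphabet, a letter is a pair \((a,\ell)\), and the angle is fixed once the type label is fixed. If the signed weight \(g\) is included, the type label \(\ell\) must be part of the letter, so that the sign parity is determined by the counts. I will state this assumption explicitly.

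The key steps are as follows.

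\begin{enumerate}
\item Write
\[
Y_R(W)=\Tr[\Gamma_W^\ast(O_R)\,\rho],
\qquad
\Gamma_W^\ast=\Gamma_{x_1}^\ast\circ\cdots\circ\Gamma_{x_m}^\ast .
\]

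\item Use the Pauli algebra to find the action of a single conjugation on \(O_R\). For any Pauli \(P\) with \(P^2=I\):
\begin{itemize}
\item if \([P,O_R]=0\), then \(U^\dagger O_R U=O_R\);
\item if \(P\) anticommutes with \(O_R\), then \(U^\dagger O_R U=O_R e^{-i\theta P}\).
\end{itemize}
Here \(U=e^{-i\theta P/2}\), and the second case follows from \(U^\dagger O_R=O_R U\). Two Paulis either commute or anticommute, so \(a\in\mathcal A_R\) exactly when \(P_a\) anticommutes with \(O_R\).

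\item Iterate over the word, proceeding by induction from the last letter inward. The claim is
\[
\Gamma_W^\ast(O_R)=O_R\prod_{a\in\mathcal A_R}e^{-iN_a(W)\theta_a P_a}.
\]
For the inductive step, I conjugate \(O_R\,Q\), where \(Q\) is a product of exponentials of the \(P_a\). Since all \(P_a\) mutually commute, \(U_b\) commutes with \(Q\), so \(U_b^\dagger O_R Q U_b=(U_b^\dagger O_R U_b)\,Q\). Step 2 then applies.

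\item Conclude measurability. Because the \(P_a\) commute, the product in step 3 is independent of order. It depends only on \((N_a)_{a\in\mathcal A_R}\), so \(Y_R\) is a deterministic function of \(\mathbf N_R\). Hence \(\Var(Y_R\mid\mathbf N_R)=0\) almost surely, and taking expectations gives the claim.
\end{enumerate}

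For signed estimators, the sign \(g\) depends on \(\pi\)-events outside \(\mathcal A_R\) as well. The statement as written is for the unweighted \(Y_R\). I will therefore note that in the weighted case one must also adjoin the global \(\pi\)-parity to \(\mathbf N_R\), matching the \(\pi_{\mathrm{out}}\) augmentation used in \cref{sec:obs-stratification}.

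The main obstacle is step 3. The inductive invariant, a form \(O_R\,Q\) with \(Q\) in the commutative algebra generated by the \(P_a\), must be stated carefully. It is exactly the commuting hypothesis, together with \(P^2=I\), that keeps the form closed under further conjugations and that makes the counts, rather than the order, sufficient.
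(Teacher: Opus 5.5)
Your proof is correct and takes the same route as the paper: pass to the Heisenberg picture, use mutual commutativity so that letter order is irrelevant, and note that letters commuting with $O_R$ act trivially, so only the counts on $\mathcal A_R$ survive. The paper does this by rewriting $\Gamma_W=\prod_a\Gamma_a^{N_a(W)}$ and letting the outside letters act on $O_R$ first. Your inductive invariant $O_R\,Q$ instead gives the explicit closed form $O_R\prod_{a\in\mathcal A_R}e^{-iN_a\theta_aP_a}$, and it makes explicit two assumptions the paper's statement leaves implicit: each letter has a fixed angle, and in the weighted TE-PAI case the type label and the outside $\pi$-parity must be retained.
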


\begin{proof}
Because all \(P_a\) commute, the trajectory channel is independent of the
letter order:
\[
\Gamma_W=\prod_{a\in\mathcal A}\Gamma_a^{N_a(W)}.
\]
In the Heisenberg picture,
\[
\Gamma_W^\dagger(O_R)
=
\prod_{a\in\mathcal A}
(\Gamma_a^\dagger)^{N_a(W)}(O_R).
\]
If \(a\notin\mathcal A_R\), then \([P_a,O_R]=0\), so
\[
\Gamma_a^\dagger(O_R)=O_R.
\]
Thus all terms outside \(\mathcal A_R\) act trivially on \(O_R\), and
\(\Gamma_W^\dagger(O_R)\) depends only on \(N_a(W)\) for
\(a\in\mathcal A_R\). Hence \(Y_R\) is \(\mathbf N_R\)-measurable.
\end{proof}
\paragraph{Commutator interpretation of the locality error.}
Let \(H=H_{R,\ell}+H_{\mathrm{out}}\), where \(H_{R,\ell}\) contains the retained
local terms and \(H_{\mathrm{out}}\) contains the omitted terms. Define
\[
O_R(t)=e^{itH}O_Re^{-itH},
\qquad
O_R^{(\ell)}(t)=e^{itH_{R,\ell}}O_Re^{-itH_{R,\ell}}.
\]
Duhamel's formula gives
\[
O_R(t)-O_R^{(\ell)}(t)
=
i\int_0^t
e^{isH}
[H_{\mathrm{out}},O_R^{(\ell)}(t-s)]
e^{-isH}
\,ds.
\]
Therefore
\[
\|O_R(t)-O_R^{(\ell)}(t)\|
\le
\int_0^t
\|[H_{\mathrm{out}},O_R^{(\ell)}(t-s)]\|\,ds.
\]
If
\[
H_{\mathrm{out}}
=
\sum_{b\notin\mathcal A_{R,\ell}}h_bP_b,
\]
then
\[
\|O_R(t)-O_R^{(\ell)}(t)\|
\le
\sum_{b\notin\mathcal A_{R,\ell}}
|h_b|
\int_0^t
\|[P_b,O_R^{(\ell)}(s)]\|\,ds.
\]
Thus the locality error measures the integrated commutator influence of omitted
Hamiltonian terms on the locally evolved observable. Lieb--Robinson estimates
upper bound these commutators by a function decaying with the distance from
\(\operatorname{supp}(P_b)\) to \(R\).

\section{Numerical Procedure}
\label{sec:numerical_procedure}

This appendix describes the stratified sampling procedures used in the
event-driven implementation of continuous TE-PAI. Both procedures use finite retained supports without overflow buckets. 
Consequently the resulting estimators are biased, but the bias is controlled explicitly by the omitted probability mass and by any retained strata that receive zero samples after finite-budget rounding. The basic scheme we use is a stratified estimator under proportional allocation. More complex schemes are possible, such as pilot or adaptive schemes.

\subsection{Event-driven sampling of continuous TE-PAI}

Consider a time-dependent Pauli Hamiltonian
\[
    H(s)=\sum_{a\in\mathcal A}c_a(s)P_a,
    \qquad 0\le s\le t,
    \qquad P_a^2=I.
\]
Write
\[
    c_a^+(s):=\max(c_a(s),0),
    \qquad
    c_a^-(s):=\max(-c_a(s),0).
\]
For a fixed TE-PAI angle \(\Delta\), the marked event alphabet is
\[
\mathcal A_{\rm TE}
=
\{(a,+,2),(a,-,2),(a,3):a\in\mathcal A\}.
\]
Recall then that corresponding inhomogeneous event rates are
\begin{align*}
    \kappa_{a,+,2}(s)
    &=
    \frac{2c_a^+(s)}{\sin\Delta},
    \\
    \kappa_{a,-,2}(s)
    &=
    \frac{2c_a^-(s)}{\sin\Delta},
    \\
    \kappa_{a,3}(s)
    &=
    |c_a(s)|\tan(\Delta/2).
\end{align*}
Let
\[
    \mu_\alpha:=\int_0^t \kappa_\alpha(s)\,ds,
    \qquad
    \alpha\in\mathcal A_{\mathrm{TE}},
\]
be the integrated event means. 
The event counts \(N_\alpha\) are independent Poisson random variables with means \(\mu_\alpha\).  
Conditional on the counts, event times for letter \(\alpha\) are sampled independently from density
\[
    f_\alpha(s)=\frac{\kappa_\alpha(s)}{\mu_\alpha},
    \qquad 0\le s\le t,
\]
with the convention that letters with \(\mu_\alpha=0\) produce no events. All events are then merged and time-ordered.

The total type-3 count is
\[
    N_3(W):=\sum_{a\in\mathcal A}N_{a,3}(W),
\]
and the TE-PAI scalar weight magnitude is
\[
    G_\infty
    =
    \exp\!\left(
        2\tan(\Delta/2)
        \int_0^t\sum_{a\in\mathcal A}|c_a(s)|\,ds
    \right).
\]
For an observable \(O\), input state \(\rho\), and sampled trajectory \(W\), the scalar estimator is
\[
    Y(W)
    =
    G_\infty(-1)^{N_3(W)}
    \operatorname{Tr}\!\left[O\,\Gamma_W(\rho)\right].
\]

\subsection{Generic retained-support stratified estimator}
\label{sec:generic_retained_support}

Let \(S(W)\) be a discrete trajectory statistic with known stratum
probabilities \(p_r=\Pr(S=r)\).  Choose a finite retained support
\(K\) and allocate a total budget \(N_{\mathrm{tot}}\) across the retained strata. We use deterministic rounding via Hamilton apportionment. Start from
\[
    N_r^{(0)}=\lfloor N_{\mathrm{tot}}p_r\rfloor,
    \qquad r\in K,
\]
and distribute the remaining samples to the strata with largest remainders
\(N_{\mathrm{tot}}p_r-\lfloor N_{\mathrm{tot}}p_r\rfloor\).  Let
\[
    K_{\mathrm{eff}}:=\{r\in K:N_r>0\}
\]
be the strata that actually receive samples.

For each \(r\in K_{\mathrm{eff}}\), draw
\[
    W_{r,1},\ldots,W_{r,N_r}
\]
independently from the conditional TE-PAI law given \(S(W)=r\), and define
\[
    \widehat\mu_r
    :=
    \frac{1}{N_r}\sum_{q=1}^{N_r}Y(W_{r,q}).
\]
The retained-support estimator is
\[
    \widehat\mu^{\mathrm{bias}}
    :=
    \sum_{r\in K_{\mathrm{eff}}}p_r\widehat\mu_r.
\]
It omits both the truncation tail \(S\notin K\) and retained strata with \(N_r=0\).

Assume the scalar trajectory estimator is uniformly bounded,
\[
    |Y(W)|\le B.
\]
For TE-PAI with \(\rho\) a density matrix, one may take
\[
    B=G_\infty\|O\|_\infty.
\]
Define the omitted probability mass
\[
    p_{\mathrm{omit}}
    :=
    \Pr(S\notin K_{\mathrm{eff}})
    =
    1-\sum_{r\in K_{\mathrm{eff}}}p_r.
\]
Then
\[
    \left|
    \mathbb E[\widehat\mu^{\mathrm{bias}}]-\mu
    \right|
    \le
    B\,p_{\mathrm{omit}}.
\]
If
\[
    p_{\mathrm{miss}}:=\Pr(S\notin K)
\]
is the truncation mass, then
\[
    p_{\mathrm{omit}}
    =
    p_{\mathrm{miss}}
    +
    \sum_{r\in K:\,N_r=0}p_r.
\]
The second term is exactly computable after allocation and vanishes if every retained stratum receives at least one sample.

Conditioned on the allocation,
\[
    \operatorname{Var}(\widehat\mu^{\mathrm{bias}})
    =
    \sum_{r\in K_{\mathrm{eff}}}
    p_r^2\frac{\sigma_r^2}{N_r},
    \qquad
    \sigma_r^2:=\operatorname{Var}(Y\mid S=r).
\]
Consequently,
\[
    \operatorname{MSE}(\widehat\mu^{\mathrm{bias}})
    \le
    \sum_{r\in K_{\mathrm{eff}}}
    p_r^2\frac{\sigma_r^2}{N_r}
    +
    B^2p_{\mathrm{omit}}^2.
\]

\subsection{Local-count statistic with outside type-3 parity}
\label{sec:local_te_pai_sampling_recipe}

We first describe the local statistic used for estimating a single local observable \(O_R\). 
Choose a set of local type-2 count coordinates
\[
    \mathcal I_R\subseteq\mathcal A_{\mathrm{TE}}
\]
adapted to the support and backward light cone of \(O_R\).  For example, in the commuting Ising benchmark with \(O_R=X_j\), we use
\[
    A_2=N^F_{j,2},
    \qquad
    B_2=N^B_{j-1,2},
    \qquad
    C_2=N^B_{j,2}.
\]
More generally, \(\mathcal I_R\) contains the local type-2 event letters whose rotations are visible to the observable.

Let \(\mathcal O_{3}^{\mathrm{out}}\) be the set of outside type-3 event letters whose channel action cancels locally with the TE-PAI sign, and define
\begin{align*}
        N_{3}^{\mathrm{out}}(W)
    &:=
    \sum_{\alpha\in\mathcal O_{3}^{\mathrm{out}}}N_\alpha(W),\\
    \Pi_{3,\mathrm{out}}(W)
    &:=
    N_{3}^{\mathrm{out}}(W)\bmod 2.
\end{align*}
The stratification statistic is
\[
    S_R^{\mathrm{out}}(W)
    =
    \left(
        (N_i(W))_{i\in\mathcal I_R},
        \Pi_{3,\mathrm{out}}(W)
    \right).
\]

For each \(i\in\mathcal I_R\), let \(\mu_i\) be the corresponding Poisson mean, and define
\[
    \mu_{3,\mathrm{out}}
    :=
    \sum_{\alpha\in\mathcal O_{3}^{\mathrm{out}}}\mu_\alpha.
\]
Choose tolerances \(\varepsilon_i>0\) with
\[
    \sum_{i\in\mathcal I_R}\varepsilon_i
    \le
    \varepsilon_{\mathrm{trunc}}.
\]
For each local count coordinate, choose integers \(\ell_i\le u_i\) such that
\[
    \Pr(\ell_i\le N_i\le u_i)
    \ge
    1-\varepsilon_i,
    \qquad
    N_i\sim\operatorname{Poisson}(\mu_i).
\]
The retained count set and retained statistic support are
\[
    K_R
    :=
    \prod_{i\in\mathcal I_R}
    \{\ell_i,\ell_i+1,\ldots,u_i\},
    \qquad
    \widetilde K_R:=K_R\times\{0,1\}.
\]
The omitted truncation mass satisfies
\[
    p_{\mathrm{miss}}
    =
    1-\prod_{i\in\mathcal I_R}
    \Pr(\ell_i\le N_i\le u_i)
    \le
    \sum_{i\in\mathcal I_R}\varepsilon_i
    \le
    \varepsilon_{\mathrm{trunc}}.
\]

For a retained count vector \(s=(n_i)_{i\in\mathcal I_R}\in K_R\),
\[
    p_s^{\mathrm{cnt}}
    =
    \prod_{i\in\mathcal I_R}
    e^{-\mu_i}\frac{\mu_i^{n_i}}{n_i!}.
\]
The outside parity probabilities are
\[
    \Pr(\Pi_{3,\mathrm{out}}=\pi)
    =
    \frac12
    \left(
        1+(-1)^\pi e^{-2\mu_{3,\mathrm{out}}}
    \right),
\]
for $\pi\in\{0,1\}$. Hence the retained stratum weight is
\[
    p_{s,\pi}
    =
    p_s^{\mathrm{cnt}}\,
    \frac12
    \left(
        1+(-1)^\pi e^{-2\mu_{3,\mathrm{out}}}
    \right).
\]

Conditional sampling within a stratum \(r=(s,\pi)\) proceeds as follows.
First fix \(N_i=n_i\) for all \(i\in\mathcal I_R\).  Then sample
    \(K_3^{\mathrm{out}}
    \sim
    \operatorname{Poisson}(\mu_{3,\mathrm{out}})\)
    conditioned on \(K_3^{\mathrm{out}}\equiv \pi \pmod 2\).
    Equivalently,
\[
    \Pr(K_3^{\mathrm{out}}=k\mid k\equiv\pi\!\!\pmod2)
    =
    \frac{
        e^{-\mu_{3,\mathrm{out}}}\mu_{3,\mathrm{out}}^k/k!
    }{
        \frac12(1+(-1)^\pi e^{-2\mu_{3,\mathrm{out}}})
    }.
\]
Given \(K_3^{\mathrm{out}}=k\), distribute the outside type-3 events by the multinomial
\[
    (N_\alpha)_{\alpha\in\mathcal O_3^{\mathrm{out}}}
    \mid K_3^{\mathrm{out}}=k
    \sim
    \operatorname{Multi}
    \left(
        k;
        \left(
            \frac{\mu_\alpha}{\mu_{3,\mathrm{out}}}
        \right)_{\alpha\in\mathcal O_3^{\mathrm{out}}}
    \right).
\]
All other event counts not fixed by the statistic are sampled from their original independent Poisson laws. 
This includes outside non-type-3 events and any local type-3 events not retained in the statistic.  After all counts are sampled, event times are drawn from their conditional densities \(f_\alpha\), merged, and sorted.

The estimator for this statistic is obtained by applying the generic retained-support procedure of Sec.~\ref{sec:generic_retained_support} with
\[
    S=S_R^{\mathrm{out}},
    \qquad
    K=\widetilde K_R,
    \qquad
    B=G_\infty\|O_R\|_\infty.
\]
Thus the bias is bounded by
\[
    \left|
    \mathbb E[\widehat\mu_R^{\mathrm{bias}}]-\mu_R
    \right|
    \le
    G_\infty\|O_R\|_\infty
    \left[
        \varepsilon_{\mathrm{trunc}}
        +
        \sum_{r\in\widetilde K_R:\,N_r=0}p_r
    \right].
\]

\subsection{Stratification by the total number of \(\pi\)-flip events}
\label{sec:te_pai_n3_stratification}

The second statistic is the total number of type-3, or \(\pi\)-flip, events:
\[
    S(W):=N_3(W)=\sum_{a\in\mathcal A}N_{a,3}(W).
\]
This is a one-dimensional statistic. 
It is much coarser than local or full marked-count statistics because it does not record which type-3 events occurred, where they occurred in time, or how many type-2 events occurred. 
Its advantage is that it captures the global TE-PAI sign exactly:
\[
    (-1)^{N_3(W)}
\]
is constant within each stratum.

Let
\[
    \mu_3
    :=
    \sum_{a\in\mathcal A}\mu_{a,3}
    =
    \tan(\Delta/2)
    \int_0^t\sum_{a\in\mathcal A}|c_a(s)|\,ds.
\]
Then
\[
    N_3\sim\operatorname{Poisson}(\mu_3).
\]
Choose integers \(\ell\le u\) such that
\[
    \Pr(\ell\le N_3\le u)
    \ge
    1-\varepsilon_{\mathrm{trunc}},
\]
and define
\[
    K_3:=\{\ell,\ell+1,\ldots,u\}.
\]
For \(k\in K_3\),
\[
    p_k
    =
    \Pr(N_3=k)
    =
    e^{-\mu_3}\frac{\mu_3^k}{k!},
\]
and
\[
    p_{\mathrm{miss}}
    =
    1-\sum_{k=\ell}^{u}p_k
    \le
    \varepsilon_{\mathrm{trunc}}.
\]

To sample conditionally on \(N_3=k\), first draw the \(k\) type-3 events from the normalized type-3 intensity. Equivalently, draw each type-3 label \(a\) with probability
\[
    \frac{\mu_{a,3}}{\mu_3},
\]
and then draw its time from
\[
    f_{a,3}(s)
    =
    \frac{\kappa_{a,3}(s)}{\mu_{a,3}}.
\]
All type-2 events are sampled independently from their original inhomogeneous Poisson processes.
The full event list is then merged and sorted.  Within the stratum \(N_3=k\), the scalar estimator is
\[
    Y(W)
    =
    G_\infty(-1)^k
    \operatorname{Tr}\!\left[O\,\Gamma_W(\rho)\right].
\]

The estimator is again the generic retained-support estimator of Sec.~\ref{sec:generic_retained_support}, now with \(S=N_3\), \(K=K_3\), and \(B=G_\infty\|O\|_\infty\).  Hence
\[
    \left|
    \mathbb E[\widehat\mu^{\mathrm{bias}}]-\mu
    \right|
    \le
    G_\infty\|O\|_\infty
    \left[
        \varepsilon_{\mathrm{trunc}}
        +
        \sum_{k\in K_3:\,N_k=0}p_k
    \right],
\]
and
\begin{align*}
    \operatorname{MSE}(\widehat\mu^{\mathrm{bias}})
    &\le
    \sum_{k\in K_{\mathrm{eff}}}
    p_k^2\frac{\sigma_k^2}{N_k} \\
    &+
    G_\infty^2\|O\|_\infty^2
    \left[
        \varepsilon_{\mathrm{trunc}}
        +
        \sum_{k\in K_3:\,N_k=0}p_k
    \right]^2,
\end{align*}
where \(\sigma_k^2=\operatorname{Var}(Y\mid N_3=k)\).

This statistic has minimal bookkeeping overhead: its support size is \(|K_3|=u-\ell+1\) and depends on the Hamiltonian only through the total type-3 mean \(\mu_3\). 
Its expected variance reduction is therefore limited, but it is cheap and directly resolves the dominant global quasiprobability sign structure.

\end{document}

%% file: header.tex
\usepackage{graphicx}  % Include figure files
\usepackage{dcolumn}   % Align table columns on decimal point
\usepackage{bm}        % Bold math
\usepackage[colorlinks=true, allcolors=blue]{hyperref} % clickable links
\usepackage{xcolor}    % optional, for colored text or highlights
\usepackage{amsthm, amssymb, amsmath}
\usepackage{mathrsfs}
\usepackage{enumitem}
\usepackage{float}
\usepackage[capitalize]{cleveref}
\usepackage[linesnumbered,algoruled,boxed,lined]{algorithm2e}
\usepackage{mdframed}

\newtheorem{theorem}{Theorem}

\newtheorem{lemma}{Lemma}

\newtheorem{corollary}{Corollary}

\newtheorem{definition}{Definition}

\newcommand{\ket}[1]{| #1 \rangle}

\newtheorem{proposition}{Proposition}

\DeclareMathOperator{\Tr}{Tr}
\DeclareMathOperator{\Var}{Var}